\newcommand{\non}{{\nonumber}}
\newcommand{\p}{\partial}
\newcommand{\G}{{\cal G}}
\newcommand{\R}{{\cal R}_0}
\let\Horig\H
\renewcommand{\H}{{\cal H}}
\newcommand{\F}{{\cal F}}
\newcommand{\E}{{\cal E}}
\newcommand{\J}{{\cal J}}
\newcommand{\bR}{{\mathbb R}}
\newcommand{\one}{\bm{1}}
\newcommand{\SI}{\widetilde{SI}}
\newcommand{\IS}{\widetilde{IS}}
\newcommand{\X}{{\bm{X}}}
\newcommand{\XSI}{\bm{X^{SI}}}
\newcommand{\XSS}{\bm{X^{SS}}}
\newcommand{\XSIT}{\bm{X^{\SI}}}
\newcommand{\XSST}{\bm{X^{\SS}}}
\newcommand*\bigcdot{\mathpalette\bigcdot@{.8}}
\newcommand*\bigcdot@[2]{\mathbin{\vcenter{\hbox{\scalebox{#2}{$\m@th#1\bullet$}}}}}
\newcommand{\XSdot}{\bm{X^{S{\bigcdot}}}}
\renewcommand{\SS}{\widetilde{SS}}
\renewcommand{\th}{{\bm{\theta}}}
\newcommand{\Th}{{\bm{\theta}}}
\newcommand{\Xth}{{\bm{\Theta}}}
\newcommand{\xth}{\Theta}
\newcommand{\partdot}[2]{\dfrac{\partial {#1}}{\partial {#2}}}
\renewcommand{\b}{{\bigcdot}}
\newcommand{\bx}{{\bm{x}}}
\newcommand{\bk}{{\bm{k}}}
\newcommand{\norml}{\left | \left |}
\newcommand{\normr}{\right | \right |}
\newcommand{\ve}{\varepsilon}
\newcommand{\red}{} 
\newcommand{\blue}{\textcolor{black}}
\newcommand{\subclass}[1]{\newline{\bf AMS Subject Classification:} #1}
\newtheorem{theorem}{Theorem}[section]
\newtheorem{corollary}[theorem]{Corollary}
\newtheorem{lemma}[theorem]{Lemma}
\theoremstyle{remark}
\newtheorem{remark}{Remark}
\begin{document}

\title{The large graph limit of a stochastic epidemic model on a dynamic multilayer network \thanks{
This research has been supported in part by the Mathematical Biosciences Institute and the National Science Foundation under grants  DMS-1440386 and RAPID DMS-1513489.}
}

\author{Karly A. Jacobsen, 
        Mark G. Burch,  Joseph H. Tien and  Grzegorz A. Rempa{\l}a$^\ast$\thanks{$\ast$ Corresponding author}          \\ College of Public Health, Department of Mathematics and Mathematical Biosciences Institute, \\
        The Ohio State University\\ Columbus, OH 43210, USA  
}

\date{Received: date / Accepted: date}

\maketitle

\begin{abstract}
We consider a \blue{Markovian} SIR-type (Susceptible $\to$ Infected $\to$ Recovered) stochastic epidemic process with multiple modes of transmission on a contact network.  The network is given by a random graph following a multilayer configuration model where edges in different layers correspond to potentially infectious contacts of different types.  We assume that the graph structure evolves in response to the epidemic via activation or deactivation of edges \blue{of infectious nodes}. We derive a large graph limit theorem that gives a system of ordinary differential equations (ODEs) describing the evolution of quantities of interest, such as the proportions of infected and susceptible vertices, as the number of nodes tends to infinity.  Analysis of the limiting system elucidates how the coupling of edge activation and deactivation to infection status affects disease dynamics, as illustrated by a two-layer network example with edge types corresponding to community and healthcare contacts.  Our theorem extends some earlier results describing the deterministic limit of stochastic SIR processes on static, single-layer configuration model graphs.  We also describe precisely the conditions for equivalence between our limiting ODEs and the systems obtained via pair approximation, which are widely used in the epidemiological and ecological literature to approximate disease dynamics on networks.  
\red{The flexible modeling framework and asymptotic results have potential application to many disease settings including Ebola dynamics in West Africa, which was the original motivation for this study.}

\keywords{Stochastic SIR process;  Configuration model;  Multilayer network;  Law of large numbers; Ebola epidemic; Multiple modes of transmission.}
 \subclass{ 92D30; 60G55;  60F}
\end{abstract}

\section{Introduction} 
	\label{sec:intro}

A fundamental issue in disease dynamics is that contact patterns change in response to infection.  This is particularly salient in the study of disease dynamics on contact networks: infected individuals curtail contacts with their regular community due to illness (e.g. being too sick to go to school or work) but increase their contacts with other segments of the population, such as healthcare workers or caretakers in the home.  The recent Ebola outbreak in West Africa provides a stark example.  The
array and severity of symptoms, including high fever, diarrhea, vomiting, and hemorrhaging, make symptomatic individuals too ill to engage their regular community contacts and, instead, cause individuals to seek care in the home, hospital, or other facility.  This coupling of evolution of network structure to disease status is basic, but a theoretical understanding of how this affects disease dynamics is currently lacking.  

Disease dynamics on networks has been an extremely active area of research in the past 20 years, typically within the SIR-type (Susceptible $\to$ Infected $\to$ Recovered) modeling framework \cite{brockmann2013,dodds2004,House2011,meyers2005,newman2002,Pellis2015,watts1998}.  This has been stimulated in part by the explosion of data on networks of various sorts \cite{balcan2009,bengtsson2015,brockmann2010,colizza2006,easley2010,newman2010,tatem2009,ugander2012,wesolowski2012} and the recognition that network structure can have a dramatic impact on disease dynamics.  Theoretical findings include applications of percolation theory to static networks \cite{newman2002,Grassberger1983}.  Less theory has been developed for networks that change over time, with much work in this area focusing on concurrent partnerships forming and breaking independent of disease status \cite{Altmann1995,Altmann1998,leung2012,leung2015,volz2007,Bansal2010,Eames2004}.  The study of adaptive networks, where the contact structure changes in response to disease progression, is an emerging area, as reviewed by Funk et al. \cite{Funk2010}.  One popular approach is to assume that susceptible individuals break connections to avoid infection \cite{Gross2006, Epstein2008,Shaw2008,Zanette2008}. 
Related works examine behavioral changes due to awareness of infection \cite{Funk2010,Granell2013}.  As these studies indicate, evolving network structure may lead to rich dynamics that are of practical importance for disease forecasting and evaluating public health interventions \cite{Shai2013,Gross2006,Shaw2008}.  

A challenge for understanding disease dynamics on networks is their high dimensionality -- for example, a modest-sized network or graph (here, and elsewhere in the paper, we use these terms interchangeably) may have tens of thousands of nodes and over a hundred thousand edges.  Various approaches have been developed for deriving simpler models to approximate the full network dynamics including grouping vertices by degree \cite{barthelemy2004,barthelemy2005,pastor-satorras2001} or ``effective degree" \cite{lindquist2011,ma2013,Ball2008} and considering stationary degree distributions for dynamic graphs \cite{Altmann1998,leung2012,leung2015}. Two approaches particularly relevant to this work are the {\it pairwise} approach (e.g. early work includes \cite{Rand1999,Keeling1999,keeling1999b}) and the {\it edge-based} approach of Volz and Miller that is applicable to graphs with a specified degree distribution  \cite{Volz2008,Miller2011,Miller2013}.  Both approaches naturally lead to consideration of the disease dynamics in the large graph limit, i.e. when the number of nodes tends to infinity.  Whereas Volz and Miller derived their results heuristically, recent mathematical work has rigorously shown that a deterministic edge-based system of equations is the large graph limit  of an SIR continuous-time Markov process on a static random graph \cite{Decreusefond2012,Janson2014}.    

Multilayer networks, which allow for more complex disease dynamics, have also received much attention recently, as reviewed by Kivel{\" a} et al. \cite{kivela2014}.  In particular, multilayer networks where the interconnected layers can represent different populations have been considered \cite{Wang2011,Rombach2014,Zhao2014,Buono2014,Yagan2013}.  The effect of degree correlation on two-layer networks has also been studied \cite{Shai2012} with each layer being  an 
Erd\Horig{o}s--R{\'e}nyi or Barab{\'a}si-Albert random graph.  Other models involving two-layer graphs were also considered where one layer corresponded to information-spreading and the second to disease transmission \cite{Jo2006,Funk2009,Funk2010b,Granell2013} or where two competing pathogens spread on the two layers \cite{Wei2012,Sahneh2013}.  Multilayer networks have also recently been employed to model temporal networks as sequences of static networks \cite{Valdano2015,Valdano2015b}. The particular class of multilayer networks studied here are those in which the set of nodes is identical in each layer (i.e. node-aligned \cite{kivela2014}) with distinct edge types corresponding to each layer.  These are often referred to as multiplex (or multi-relational) networks \cite{De-Domenico2013,Cardillo2013,Yagan2012,Battiston2014} and can be represented as a graph with edges colored according to type.
\begin{figure}\begin{center}
\includegraphics[width=39mm]{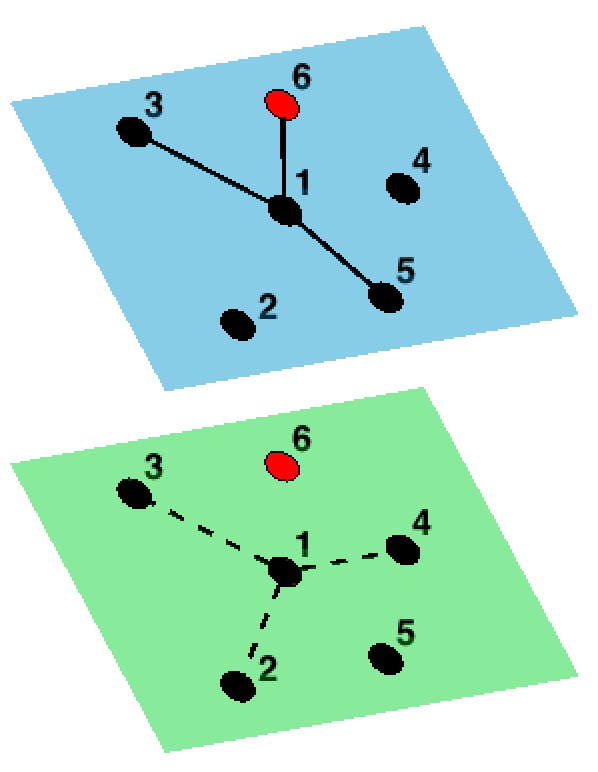}\qquad\qquad  
\includegraphics[width=40mm]{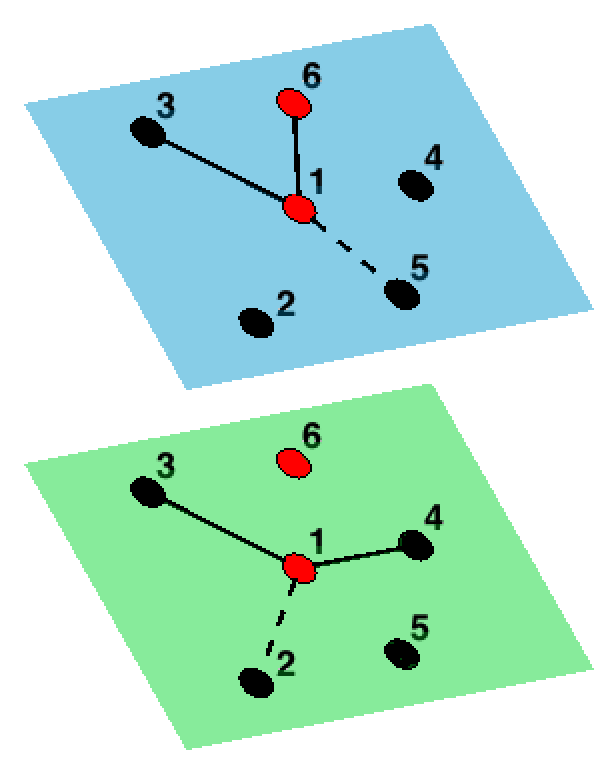}\end{center}

\caption{(Left) Neighborhood of a susceptible vertex (labeled 1) with an infected (red) neighbor.  Community (top/blue) and healthcare (bottom/green) contacts are shown as active (solid) or deactivated (dashed).  (Right) After infection of vertex 1, two of its healthcare contacts are activated and one community contact is dropped.}
\label{fig:nbhd}
\end{figure}

 In this work, we consider the problem of modeling an epidemic with different modes of disease transmission on a dynamic contact network.  Specifically, we formulate an SIR continuous-time stochastic process on a multilayer graph, with specified degree distribution, where nodes represent individuals and edges represent potentially infectious contacts.  Each layer contains the same set of nodes but corresponds to a different transmission mechanism (i.e. a multiplex network).   
 In addition, we allow edges to be active or dormant with transmission occurring only along an active edge. 
 \red{The network structure is dynamic in that edges can activate or deactivate over the course of infection.  This approach allows us to incorporate behavioral changes due to infection while keeping the total  edges (active plus dormant)   given by the degree distribution fixed.}
 A simple example is a two-layer network with one layer corresponding to community contacts and the other to healthcare contacts where we assume that infected individuals deactivate their community edges, for example due to decreased mobility or isolation, while their healthcare edges are being activated as they seek care (Figure \ref{fig:nbhd}).  

The main result of this work, Theorem \ref{thm:lln} in Section \ref{sec:lln}, describes the large graph limit for the stochastic SIR process on the dynamic multilayer network.  According to the theorem, the scaled counts of different edge and node types converge uniformly in probability to the solution of a deterministic system of equations. Thus, we obtain a relatively simple limiting model in the setting where network connectivity changes with the evolution of the disease process.  In particular, it follows that for a certain class of random graphs the large graph limit coincides with the model obtained using either the pairwise \cite{Keeling1999} or the edge-based \cite{volz2007} approximation approach.  As we demonstrate with the two-layer network example, the limiting systems are amenable for mathematical analysis, allowing us to gain biological insight into how changing network structure influences disease dynamics.  Moreover, Theorem \ref{thm:lln} extends previous results on edge-based models \cite{Decreusefond2012,Janson2014}.  

This paper is organized as follows.  Section \ref{sec:models} introduces the stochastic model that is considered along with the necessary notation.  Section \ref{sec:lln} presents our main result, a law of large numbers for the stochastic process on the dynamic multilayer network, and considers several important special cases, which relate our result to edge-based and pairwise models.  The two-layer community-healthcare network model and its analysis is given in Section \ref{sec:CH}.  We conclude with a discussion in Section \ref{sec:discuss}.    The proof of our main theorem is given in Appendix \ref{sec:proof}  which also  provides further mathematical details \blue{along with a summary of notation} for the main body of the paper.
 
\section{Stochastic model} 
	\label{sec:models}

\blue{Recent advances in computational methods and the ever-increasing power of modern computers have made it possible to consider stochastic versions of the classical ODE epidemic models.  Such models not only provide the overall trend of an epidemic across a population but also inform about the stochastic fluctuations around the mean and, hence, about the intrinsic noise in the system (see, for instance  \cite{wilkinson2009stochastic} and references therein).  Some visual examples are provided  in the next section.  The stochastic  models  are  typically formulated as continuous time Markov processes with discrete state space (see, e.g., \cite{Altmann1995,Volz2008, Janson2014}), which is also  the framework we adopt in this paper. }

We start by introducing some notation and relevant definitions for dynamic multilayer networks.  
\red{The class of random graphs considered here will be an extension of the configuration model to the multilayer setting.  In the following section we will precisely define the {\it layered configuration model}, and extend the notions of degree and excess degree to the multilayer setting.}  Then, we introduce the stochastic process considered in this work, which is the appropriately modified version of the SIR process.  

\subsection{Layered configuration model} \label{sec:lcm} 
Let $r$ denote the number of layers and, for any vectors $\bx = (x_1,\hdots,x_r),\bk = (k_1,\hdots,k_r)$ in $\mathbb{R}^r$, denote  $\bx^\bk=\prod_{i=1}^r x_i^{k_i}$.  The probability generating function (pgf) of the multivariate degree distribution is given by  
\begin{equation}
\psi(\bx)=\sum_{\bk} p_\bk \bx^\bk \label{eq:psi}
\end{equation}
 where $p_\bk=P(k_1,\ldots, k_r)$ is the probability of  a node being of (multi-)degree $\bk$, i.e. having $k_i$ neighbors in layer $i$.

Given a realization of the degree distribution on $n$ nodes, we construct a multilayer graph as follows.  Each node is assigned a collection of half-edges in each layer corresponding to its degree, and then half-edges within each layer are paired uniformly at random.  \blue{We assume that the pairing   is  done independently in each layer.  Thus, restricted to the $j$-th layer, the  resulting graph is a realization of a configuration model (see, Chapter 13 in \cite{newman2010})  with the degree distribution given by the $j$-th marginal of $\psi$ (also see, for instance, Section 2.2.4 in \cite{Miller2013}).}  We refer to the collection of such realized graphs as the {\it layered configuration model} (LCM) and denote it by $\mathcal{G}_r(\psi,n)$.
 
\paragraph*{\em Excess degree distribution.} \hspace{0.01in}  In a single-layer graph, the \textit{excess degree} of a node $u$ is calculated by following an 
edge  to $u$ from a neighbor $v$ and counting the number of other neighbors  (not including $v$) of $u$ (see, Chapter 13 in \cite{newman2010}).  It will be 
convenient to extend the notion of an excess degree distribution to the multilayer setting.  Let $P_{j|i}(l)$ denote the probability that a randomly selected  $i
$-neighbor (i.e. neighbor in layer $i$) of a node $u$ has $j$-degree (i.e. degree in layer $j$) equal to $l$.  Then, by LCM construction, $P_{j|i}(l)$ is given as 
\[P_{j|i}(l) = \sum_{\bk: k_j = l} k_ip_\bk/\mu_i\]
where ${\mu_i} = \p_i\psi(\one) = \sum_\bk k_i p_\bk$
is the average $i$-degree, $\p_i$ denotes the partial derivative with respect to $x_i$, and $\one$ is the vector of ones in $\bR^r$.  Correspondingly, let $\psi^{ex}_{j|i}$ denote the pgf of the excess $j$-degree distribution of a node randomly selected as a $i$-neighbor.  Then, 
\[\psi_{j|i}^{ex}(x_j) = \sum_l P_{j|i}(l)x_j^l = \sum_{\bk} \dfrac{k_ip_\bk}{\mu_i}x_j^{k_j} = \dfrac{1}{\mu_i}\p_i\psi\blue{(1,...,1,x_j,1,...,1)} \]
where \blue{$(1,...,1,x_j,1,...,1)$} is the vector of ones with the $j$th coordinate replaced by $x_j$.   
The {\em average excess $j$-degree of an $i$-neighbor} is then given by 
\[\mu^{ex}_{j|i} = \p_j\psi^{ex}_{j|i}(1) = \dfrac{1}{\mu_i} \p^2_{ij}\psi(\one).\]
Finally, we define the  {\em normalized average excess $j$-degree of an $i$-neighbor}  as 
 \begin{equation}
  \kappa_{ji} =\dfrac{\mu^{ex}_{j|i}}{\mu_j} = \dfrac{\p^2_{ij}\psi(\one)}{\p_i\psi(\one)\p_j\psi(\one)}. \label{eq:kappa}
  \end{equation}
 Note that, for the univariate (i.e. single layer) case when $r=1$, $P_{i|i}(k)= kp_k/\mu$, which is the well-known distribution of the degree of a neighbor (also referred to as the size-biased distribution \cite{VanDerHofstad2009} and corresponding to the excess degree distribution $q_k = (k+1)p_{k+1}/\mu$~\cite{newman2010}), and $\kappa$ is the ratio of the average excess degree to average degree.

\subsection{$SIdaR$ process} \label{sec:SIdaR}

Assume that we have a realization of an LCM $\mathcal{G}_r(\psi,n)$ specifying the contact network for a population of size $n$.  The disease modeling framework adopted is the standard \blue{Markovian} SIR compartmental model where individuals are classified based on their infection status \cite{kermack1927}.  $S$, $I$ and $R$ correspond, respectively, to susceptible, infected, and recovered (or removed) individuals.  
We assume that edges within layers represent potentially infectious contacts of a certain type and we allow the network to be dynamic in response to infection.  
\blue{That is, we assume that infected nodes will either activate or deactivate their edges, depending on edge type}.  An infectious node drops (resp. activates) edges in layer $j$ at rate $\delta_j$ (resp. $\eta_j$).  We assume that a layer cannot be both activating and deactivating, i.e. at most one of $\delta_j$ and $\eta_j$ are nonzero, and we also assume that all deactivating layer edges are initially activated and all activating layer edges are initially deactivated.  \blue{Note that both active and deactivated edges are counted in a node's degree, which is therefore constant  throughout the course of an epidemic (see Section 2.3.1 in \cite{Miller2013} for a similar approach)}. Let $j=1,\hdots,m$ denote the deactivating layers (with $\eta_j = 0$) and let $j=m+1,\hdots,r$ denote the activating layers (with $\delta_j=0$).  Then, $2r+1$ event types may occur: infection (I) along an edge of any of the $r$ types, drop ($d$) of a deactivating edge or activation ($a$) of an activating edge, and recovery ($R$).  The timings of all events are assumed to follow independent exponential clocks with the following rates:
\begin{center}
\begin{tabular}{lll} 
 $\beta_j$ & rate of infection along $j$-\,edges ($S\stackrel{j}{\longrightarrow} I$),  & $j = 1,\hdots,r$\\
$\delta_j$ &  rate of deactivation (drop) of $j$-\,edges, & $j= 1,\hdots,m$\\
  $\eta_j$ & rate of activation of $j$-\,edges, & $j = m+1,\hdots,r$ \\
  $\gamma$ &rate of recovery ($I\longrightarrow R$). & \\ 
\end{tabular}
\end{center}

\begin{table}[tbh]
\caption{Transitions for the $SIdaR(r,m)$ process according to the $2r+1$ possible event types with corresponding rates.  Network arrangements corresponding to the transitions are also given with $\stackrel{j} -$ and $\stackrel{j} \sim$ denoting, respectively, active and deactivated edges of type $j$ between nodes (denoted $u$, $v$ and $w$). Here, $N_l^{Y,u}$ denotes the set of $l$-neighbors of node $u$ with disease status $Y$.}
\label{tab:sidar}
\begin{tabular}{llll}
\hline\noalign{\smallskip}
Event & Rate & Transition & Arrangement   \\
\noalign{\smallskip}\hline\noalign{\smallskip}
Infection of $u$ by $v$   & $\beta_j X^{SI}_j$ & $(X^S,X^I) \to (X^S - 1,X^I + 1)$ & \\
along $j$-edge  & & $(X_l^{SS},X_l^{SI}) \to (X_l^{SS} - X_l^{SS,u},X_l^{SI}+X_l^{SS,u})$  & $v \stackrel{j} -u \stackrel{l}- w$ \quad for  $w \in N_l^{S,u}$ \\
$j=1,\hdots,r$ & & $(X_l^{\SS},X_l^{\SI}) \to (X_l^{\SS} - X_l^{\SS,u},X_l^{\SI}+X_l^{\SS,u})$ & $v \stackrel{j} -u \stackrel{l} \sim w$\quad  for  $w \in N_l^{S,u}$ \\
& & $X_l^{SI} \to X_l^{SI} - X_l^{SI,u}$ & $v \stackrel{j} -u \stackrel{l}- w$\quad  for  $w \in N_l^{I,u}$ \\
& & $X_l^{\SI} \to X_l^{\SI}- X_l^{\SI,u}$ & $v \stackrel{j} -u \stackrel{l}\sim w$\quad  for  $w \in N_l^{I,u}$ \\
\noalign{\smallskip}\hline\noalign{\smallskip}
Deactivation of $j$-edge & $\delta_j X^{SI}_j$ & $X_j^{SI} \to X_j^{SI} -1$ & \\
 $j=1,\hdots,m$ &  & & \\
\noalign{\smallskip}\hline\noalign{\smallskip}
Activation of $j$-edge & $\eta_j X^{\SI}_{j}$ & $(X_j^{\SI},X_j^{SI}) \to (X_j^{\SI}-1,X_j^{SI}+1)$ & \\
$j=m+1,\hdots,r$ & &  & \\
\noalign{\smallskip}\hline\noalign{\smallskip}
Recovery of infected $u$ & $\gamma X^I$ & $X^I  \to X^I -1$  &\\
& & $X_j^{SI} \to X_j^{SI}  -X_j^{IS,u}$ & $u \stackrel{j} - w$ \quad for $w \in N_l^{S,u}$ \\
& & $X_j^{\SI} \to X_j^{\SI} -X_j^{\IS,u}$ & $u \stackrel{j} \sim w$ \quad for $w \in N_l^{S,u}$ \\
\noalign{\smallskip}\hline
\end{tabular}
\end{table}  

For a susceptible node $u$, let $X_j^{SI,u}$ and $X_j^{SS,u}$ denote, respectively, the number of infectious (i.e., infected,  not yet recovered) and susceptible active $j$-neighbors of $u$.  Similarly, let $X_j^{\SI,u}$ and $X_j^{\SS,u}$ denote the number of deactivated $j$-neighbors of $u$.  Also, for an infected node $u$, let $X_j^{IS,u}$ and $X_j^{\IS,u}$ denote the number of susceptible active and deactivated, respectively, $j$-neighbors of $u$.  We consider aggregate variables that are the total number of nodes or pairs of neighboring nodes (i.e. dyads) with a given disease status.  For example, the total number of $j$-edges between susceptible and infectious individuals is denoted $X_j^{SI}$ and is given by $X_j^{SI} = \sum_{u \in S} X_j^{SI,u}$.  We regard  the aggregate dyad counts as vectors in ${\bR}^r$, e.g. $\XSI = (X_1^{SI},\hdots,X_r^{SI})$ and likewise for $\XSIT$, $\XSS$, and $\XSST$.  Note that $\XSS$ and $\XSST$ count the edges twice.  We let $\X(t)= (X^S,X^I,\XSI,\XSIT,\XSS,\XSST)(t)$ denote the state of the aggregate stochastic process at time $t>0$ where $X^S$ and $X^I$ denote the number of susceptible and infectious nodes, respectively.  Note that the number of recovered individuals is given by $X^R = n-X^S-X^I$ and so, for the sake of simplicity, we ignore the equation for $X^R$ throughout.  \blue{In addition, we do not keep track of $\bm{X^{II}}$ or the dyads of recovered individuals since the evolutions of the main quantities of interest, $S$ and $I$, are not affected  by these variables.} The transitions for the aggregate process are listed in Table \ref{tab:sidar}.  We refer to such a process as $SIdaR(r,k)$ in order to emphasize the activation and deactivation events.  The analysis of this process is complicated, partially due to the aggregation of the nodes that destroys the Markov property (see, e.g. \cite{Altmann1998}).

Note that  the  dyad  (e.g. $\XSI$) is understood throughout as a (row) vector in $\mathbb{R}^r$.  Also, with a slight abuse of notation we take multiplication, division, integration and ordering of vectors to be coordinatewise.  The state variables depend on $n$ but we do not explicitly acknowledge this in our notation.  

Consider  the $SIdaR(r,k)$ process $\X(t)$ on the LCM $\G_r(\psi,n)$ with transitions as outlined in Table \ref{tab:sidar}. The Doob-Meyer decomposition theorem \cite{Meyer1962} guarantees the existence of a zero-mean martingale $\bm{M}(t)=(M^S,M^I,\bm{M^{SI}},\bm{M^{\SI}},\bm{M^{SS}},\bm{M^{\SS}})(t)$ such that  
\begin{equation}
\X(t) = \X(0) + \int_0^t \bm{\F^X}(\X(s))ds + \bm{M}(t) \label{eq:X}
\end{equation}
where the integrable function $\bm{\F^X}(\X) = (\F^S,\F^I,\bm{\F^{SI}},\bm{\F^{\SI}},\bm{\F^{SS}},\bm{\F^{\SS}})(\X)$ is given by 
\begin{align}\label{Feqn}
\F^S(\XSI)= &- \sum_{l=1}^r\beta_l X_l^{SI}, & \notag\\
\F^I(X^I,\XSI)= & \sum_{l=1}^r\left(\beta_l X_l^{SI}\right)-\gamma X^I, &\notag\\ \F_j^{SI}(\XSI,\XSIT,\XSS)= & \sum_{i \in S}\left(\sum_{l=1}^r \beta_lX_l^{SI,i} (X_j^{SS,i}-X_j^{SI,i})\right)-(\gamma+\delta_j)X_j^{SI} + \eta_jX^{\SI}_j, \notag\\
\F_j^{\SI}(\XSI,\XSIT,\XSST)= & \sum_{i \in S}\left(\sum_{l=1}^r \beta_lX_l^{SI,i} (X_j^{\SS,i}-X_j^{\SI,i})\right)-(\gamma+\eta_j)X_j^{\SI} + \delta_jX_j^{SI},\notag\\
\F_j^{SS}(\XSI,\XSS)= & -2\sum_{i\in S} \sum_{l=1}^r \beta_l X_l^{SI,i}X_j^{SS,i}, \notag\\
\F_j^{\SS}(\XSI,\XSST)= & -2\sum_{i\in S} \sum_{l=1}^r \beta_l X_l^{SI,i}X_j^{\SS,i}, 
\end{align}
for $j=1,\ldots,r$.

We now define two more variables that will help us describe the evolution of the process in the large graph limit.  Let $\XSdot (t) = (X_1^{S \bigcdot},\hdots,X_r^{S \bigcdot})(t)$ where $X_j^{S \bigcdot}(t)$ is the number of $j$-edges \blue{(active and deactivated)} belonging to susceptible nodes at time $t$.
We partition the collection of susceptible nodes $S$ by their degree $\bk\ge 0$ so that $S=\cup S_\bk$, which corresponds also to $X^S=\sum_\bk X^{S_\bk}.$ Note 
\begin{equation}
\XSdot (t)=\sum_\bk \bk X^{S_\bk}(t). \label{eq:XSdot}
\end{equation}

We also define
$\Xth(t) = (\xth_1,\ldots,\xth_r)(t)$ by
\begin{equation}\label{eq:th}
 \Xth(t)=\exp\left(-\bm{\beta}\int_0^t \dfrac{\XSI(s)}{\XSdot(s)} ds\right) 
\end{equation} 
where $\bm{\beta} = (\beta_1,\hdots,\beta_r)$.  We may interpret $\xth_j(t)$ as the probability of no infection along a $j$-edge by time $t$ in a susceptible node of $j$-degree one, given that the node was susceptible at $t=0$.  That is, $\Xth^{\one} = \prod_{j=1}^r \xth_j$ is the probability that a susceptible node of (multi-)degree $\one$ has not been infected through any layer by time $t$, given that the node was susceptible at $t=0$.  Note also that we may equivalently write 
\begin{equation}
\Xth(t) = \Xth(0) + \int_0^t \bm{\F}^{\Xth} (\XSI(s),\XSdot(s),\Xth(s))ds \label{eq:intth}
\end{equation}
where $\Xth(0) = \one$ and
\begin{equation}
\bm{\F}^{\Xth}(\XSI,\XSdot,\Xth) = -{\bm\beta}\Xth \XSI /\XSdot. \label{eq:Ftheta}
\end{equation}

As shown in Theorem \ref{thm:lln} below, $\Xth(t)$ plays a key role in describing the evolution of $\X(t)$ in the large graph limit.  The use of such a variable  was pioneered by Volz \cite{Volz2008} and Miller \cite{Miller2011b} in their edge-based approach.  In fact, as shown in Section \ref{sec:eb}, in the single-layer, static network case, the large graph limit of $\Xth(t)$ corresponds to the variable in the standard SIR edge-based model \cite{Miller2011}.

\section{Large graph limit theorems} 
	\label{sec:lln}

\blue{The stochastic process defined in Section \ref{sec:SIdaR} is complex and difficult to analyze directly.  In this section we present a limit theorem (Theorem \ref{thm:lln}) that shows, under mild technical assumptions, that the stochastic process converges to a relatively simple system of ODEs as the number of nodes tends to infinity.  The limiting ODEs retain key features of the epidemic process while being amenable to analysis.  In the case of a finite but large population, analysis of this deterministic system provides a good approximation to disease dynamics, in the sense that fluctuations around the mean due to intrinsic noise in the system shrink as the graph size grows. 
The study of such large volume limits for stochastic processes of the type  discussed here was originated  by Kurtz in \cite{kurtz1970solutions}  in the context of chemical reaction models.   His work has   subsequently  inspired multiple large volume results on the stochastic SIR-type models (see, e.g., Chapter 5 in  \cite{Andersson2000}). 
Among others,   Andersson (\cite{andersson1998limit}) derived limit theorems for a discrete-time random graph epidemic model under rather restrictive assumptions on the degree sequence of the random graph, such as finiteness of a $(4+\epsilon)$-th moment for some $\epsilon > 0$. Using a heuristic argument,  Volz (\cite{Volz2008}) presented scaling limits for an SIR model on random graphs in the form of ODEs. Decreusefond et al. (\cite{Decreusefond2012}) later proved Volz's results rigorously by summarizing the epidemic dynamics  on a configuration model random graph using  certain  measure-valued process. Several similar law of large numbers-type scaling limits under varying sets of technical assumptions surfaced afterwards. For example, Bohman and Picollelli (\cite{bohman2012sir})
 and Barbour and Reinert  (\cite{barbour2013approximating}) assumed uniformly bounded degrees. Janson, Luczak and Windridge \cite{Janson2014} assumed the degree of a randomly chosen susceptible vertex to be uniformly integrable and the maximum degree of initially infected nodes to be $o(n)$, a condition slightly less restrictive than our condition (A3)  below.  However,  none of these previous works have considered multiple layers in the random graph or allowed for activation/deactivation events.}

We start by formulating  our assumptions  in the general case when  evolution of the quantities of interest, $\X(t)$,  involves  a function of the variable $\Xth(t)$  defined in the preceding section.  In Section~\ref{sec:eb}, we state corollaries that relate our result to edge-based models in the special case of static graphs (that is,  in the absence of  activation or deactivation) \cite{Miller2013}.  Finally, in Section \ref{sec:pw} we show that, for a particular class of degree distributions, the evolution of  $\X(t)$ decouples from  $\Xth(t)$. \blue{This fact  reveals  an interesting  connection between our limiting system and the one obtained via pair approximation.  While others have recently investigated the conditions for exactness of \textit{local} network moment closures \cite{pellis2015exact,sharkey2015exact}, we are able to obtain the condition for  exactness of a \textit{population-level} network moment closure. We confirm that,  as  suggested in  \cite{Sharkey2008,pellis2015exact},  such a condition depends on  the network structure (i.e. degree) heterogeneity.  Indeed, in Section~\ref{sec:pw}, we employ Theorem~\ref{thm:lln} to prove that the pair approximation approach may or may not give the correct large graph limiting ODEs for the class of LCM stochastic processes described here.  In particular, we provide a necessary and sufficient condition 
on the degree distribution for the two limiting systems to coincide.}
\subsection{General case} \label{sec:gen}

 \blue{All  limits considered  below,  unless otherwise noted, are with respect to the number of nodes $n\to \infty$. } We use $\stackrel{P} \to$ to denote convergence in probability in the product space of the right-continuous with finite left limits (\textit{c\`{a}dl\`{a}g)} stochastic processes and the  space of all random configurations drawn according to an LCM $\mathcal{G}_r(\psi,n)$. We  say that a sequence of random variables $Y_n\to \infty $ \textit{with high probability  (w.h.p.)} if  $P(Y_n>k)\to 1$ for any $k>0$.  Let $0 < T < \infty$.  We make the following assumptions:
\begin{enumerate}[label=(A\arabic*),leftmargin=*]
\item For $0<t\le T$, $\XSdot(t)\to \infty$ $w.h.p.$ \label{A1}
\item The fractions of initially susceptible, infected, and recovered nodes converge, respectively, to some $\alpha^S$, $\alpha^I$, $\alpha^R$ $\in [0,1]$, i.e.
\[X^S(0)/n \stackrel{P} \to \alpha^S, \quad X^I(0)/n \stackrel{P} \to \alpha^I, \quad X^R(0)/n \stackrel{P} \to \alpha^R.\] 
Furthermore, $\alpha^S>0$, $\alpha^I>0$, and the initially infected and recovered nodes are chosen randomly. \label{A3}
\item $\sum_\bk ||\bk||^2 p_\bk<\infty.$ \label{A2}
\end{enumerate}

\red{The assumption \ref{A1} implies that, for large graphs, some proportion of the individuals remain susceptible on $[0,T]$ and, hence, $\Xth$ is well-defined in \eqref{eq:th}.}  Furthermore, it implies that the average $j$-degree of a randomly chosen node, i.e. $\p_j\psi(\one)$, is positive since $\bm{0} < \liminf n^{-1}\XSdot \le \bm{\p\psi}(\one)$.  

Assumption \ref{A3} implies that the initial conditions for the dyads, scaled by $n$, also converge in probability, i.e. \begin{equation}
X_j^{SI}(0)/n \stackrel{P} \to \alpha_j^{SI}, \quad X_j^{\SI}(0)/n \stackrel{P} \to \alpha_j^{\SI}, \quad, X_j^{SS}(0)/n \stackrel{P} \to \alpha_j^{SS}, \quad X_j^{\SS}(0)/n \stackrel{P} \to \alpha_j^{\SS}, \quad j = 1,\hdots,r \label{A4}
\end{equation}
where, for the deactivating layers $j = 1,\hdots,m$, 
\[\begin{array}{lllll}
\alpha_j^{SI} = \alpha^S\alpha^I \mu_j, \quad & \alpha_j^{\SI} = 0, \quad&  \alpha_j^{SS} = (\alpha^S)^2 \mu_j, \quad& \alpha_j^{\SS} = 0, \quad& \qquad j = 1,\hdots,m
\end{array}\]
and, for the activating layers $j = m+1,\hdots,r$,
\[\begin{array}{lllll}
\alpha_j^{SI} = 0, \quad& \alpha_j^{\SI} = \alpha^S\alpha^I \mu_j, \quad&  \alpha_j^{SS} = 0, \quad& \alpha_j^{\SS} = (\alpha^S)^2 \mu_j, \quad& \qquad j = m+1,\hdots,r.
\end{array}\]
To illustrate the above  consider,  for example,  the initial condition $\alpha^{SI}_j$ for $j=1,\hdots,m$ as follows. By assumption, all deactivating layer edges are initially activated and all activating layer edges are initially deactivated.  Then,  according to \ref{A3}, the limiting probability of selecting a random node that is susceptible is $\alpha^S$.  The average number of $j$-edges a node has is $\mu_j$, and the limiting probability that a given edge connects to an infected node is $\alpha^I$.  Therefore, $\alpha^{SI}_j = \alpha^S\alpha^I\mu_j$.  The other dyad initial conditions are obtained similarly. We denote $\bm{\alpha} = (\alpha^S,\alpha^I,\bm{\alpha^{SI}},\bm{\alpha^{\SI}},\bm{\alpha^{SS}},\bm{\alpha^{\SS}})$ in what follows.  

 \blue{The assumption \ref{A2} implies that $\sum_\bk k_j^2p_\bk < \infty$ for $j=1,\hdots,r$ (i.e. the second moments of the marginal degree distributions are finite) and consequently also  that, in each layer, the multigraph constructed by matching half-edges uniformly at random is a simple graph with positive probability (see Section 2 in \cite{Janson2014}). Therefore \ref{A2} also guarantees  that there is a positive probability of generating a simple LCM graph, i.e. a multilayer graph with a simple graph in each layer.}
 
Before stating the main result, we define a quantity that plays a key role in describing how the network structure affects the large graph limit.  For $ 1 \le j,l \le r$, let $\bar{\kappa}_{jl}$ be defined by
\begin{equation} \label{eq:kappabar}
\bar{\kappa}_{jl}(\bx) = \frac{\psi(\bx)\p^2_{jl}\psi(\bx)}{\p_j\psi(\bx)\p_l\psi(\bx)}.
\end{equation}
As we discuss in Section \ref{sec:correct}, $\bar{\kappa}_{jl}(\Xth)$ can be interpreted as the ratio of the average excess $j$-degree of a susceptible node chosen randomly as an $l$-neighbor of an infectious node to the average $j$-degree of a susceptible node. 

We now  define the function that in  Theorem \ref{thm:lln} below  describes the evolution of $(\X,\Xth)(t)$ in the large graph limit.  Let $(\bx,\th)=(S,I,\bm{[SI]},\bm{[\SI]},\bm{[SS]},\bm{[{\SS}]},\bm{\th})$ and define $\bm{\H}(\bx,\th)=(\bm{\H}^{\X},\bm{\H}^{\Th})(\bx,\th)$ where $\bm{\H}^{\X} = (\H^S,\H^I,\bm{\H^{SI}},\bm{\H^{\SI}},\bm{\H^{SS}},\bm{\H^{\SS}})$ and $\bm{\H}^{\Th}$  are given by
\begin{equation} \label{eq:H}
\begin{aligned}
\H^S(\bm{[SI]})= &- \sum_{j=1}^r\beta_j [SI]_j, & \\
\H^I(I,\bm{[SI]})= & \sum_{j=1}^r\left(\beta_r [SI]_r\right)-\gamma I, &\\
\H_j^{SI}(\bm{[SI]},\bm{[\SI]},\bm{[SS]},\th)= & \sum_{l=1}^r\left[ \beta_l\bar{\kappa}_{jl}(\th)\frac{[SI]_l}{S} ([SS]_j-[SI]_j)\right]-(\beta_j+\gamma+\delta_j)[SI]_j + \eta_j[\SI]_j, \\ 
\H_j^{\SI}(\bm{[SI]},\bm{[\SI]},\bm{[\SS]},\th)= & \sum_{l=1}^r\left[ \beta_l\bar{\kappa}_{jl}(\th)\frac{[SI]_l}{S} ([\SS]_j-[\SI]_j)\right]-(\gamma+\eta_j)[\SI]_j + \delta_j[SI]_j, \\
\H_j^{SS}(\bm{[SI]},\bm{[SS]},\th)= & -2\sum_{l=1}^r \beta_l \bar{\kappa}_{jl}(\th) \frac{[SI]_l[SS]_j}{S},\\ 
\H_j^{\SS}(\bm{[SI]},\bm{[\SS]},\th)= & -2\sum_{l=1}^r \beta_l \bar{\kappa}_{jl}(\th) \frac{[SI]_l[\SS]_j}{S}, \\
\H_j^\theta(\bm{[SI]},\th)=& -\beta_j \frac{[SI]_j}{\alpha^S\p_j\psi(\th)},
\end{aligned}
\end{equation}
for $j=1,\hdots,r$.

\begin{theorem}[Strong law of large numbers] \label{thm:lln}
Assume conditions \ref{A1}$-$\ref{A2} for the LCM $\G_r(\psi,n)$.  Then, for any $0 < T < \infty$,
\[\sup_{0<t\le T} \norml (\X/n,\Xth)(t)-(\bx,\th)(t) \normr \stackrel{P}{\rightarrow} 0\]
where  $(\bx,\th)(t)$ is the solution of  
\begin{equation}
(\bx,\th)(t)=(\bx,\th)(0)+\int_0^t\H((\bx,\th)(s))ds \label{eq:D}
\end{equation}
with initial conditions $\bx(0)=\bm{\alpha}$ and $\th(0)=\one.$ 
\end{theorem}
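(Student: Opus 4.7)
The plan is to combine the Doob--Meyer decomposition \eqref{eq:X} with a Gr\"onwall estimate, with the main work going into a closure argument that replaces the node-level sums appearing in $\F_X$ by the mean-field expressions featured in $\H_X$. Under \ref{A2} the total event rate is $O(n)$ and each jump changes a coordinate of $X$ by at most a bounded multiple of an affected node's degree; a Doob $L^2$ inequality, after truncating at a large value of $\|\bk\|$ to control the tails of $p_\bk$, then gives $\sup_{t\le T}\|M(t)\|/n \stackrel{P}{\to} 0$. A parallel estimate, using \ref{A1} to keep $X_{S\b}(s)$ bounded away from zero, handles the fluctuation of $\theta(t)$ around its integral representation \eqref{eq:intth}.

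The central difficulty is the drift closure: $\F^j_{SI}$, $\F^j_{\SI}$, $\F^j_{SS}$, $\F^j_{\SS}$ contain sums such as $\sum_{i\in S}X^l_{SI,i}X^j_{SS,i}$ that are not measurable with respect to the aggregate state. I would expose only those edges already traversed by the epidemic, keeping the remaining half-edges as conditionally uniform random pairings; then, for a susceptible $u$ of multi-degree $\bk$, each of its $k_j$ layer-$j$ half-edges independently survives infection up to time $t$ with probability asymptotically equal to $\theta_j(t)$. This multivariate version of the Volz--Miller ansatz is the technical heart of the proof and implies that the fraction of susceptibles of degree $\bk$ concentrates at $\alpha_S p_\bk\theta(t)^\bk$, whence $X_S/n\to\alpha_S\psi(\theta)$ and $X^j_{S\b}/n\to\alpha_S\p_j\psi(\theta)$ in view of \eqref{eq:XSdot}; the second identity explains the denominator in $\H^j_\Theta$.

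To obtain the excess-degree correction $\bar\kappa_{jl}(\theta)$, one conditions on a given susceptible $u$ being reached by an active $l$-edge from an infectious node: the other half-edges of $u$ are size-biased in layer $l$ (weighted by $k_l$) and survive independently with probabilities $\theta^{\bk}$, so a short generating-function computation yields the conditional mean $j$-degree $\p^2_{jl}\psi(\theta)/\p_l\psi(\theta)$. Multiplying by $X^j_{SS}/X^j_{S\b}$, the fraction of layer-$j$ edges from susceptibles that lead to susceptibles, gives
\begin{equation*}
\frac{1}{n}\sum_{i\in S} X^l_{SI,i}\,X^j_{SS,i}\;\approx\;\bar\kappa_{jl}(\theta)\,\frac{X^l_{SI}\,X^j_{SS}}{n X_S},
\end{equation*}
which is exactly the form appearing in $\H^j_{SI}$; the analogous identities for the remaining node-level sums follow from the same computation.

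With these approximations, \eqref{eq:X} and \eqref{eq:intth} read $(X(t)/n,\theta(t))=(X(0)/n,\one)+\int_0^t \H(X(s)/n,\theta(s))\,ds + o_P(1)$ uniformly on $[0,T]$, and $\H$ is locally Lipschitz on the region where $x_S>0$ and $\p_j\psi(x_\Th)>0$ for all $j$, which by \ref{A1} and \ref{A3} contains the limiting trajectory, so Gr\"onwall's lemma yields uniform convergence to the unique solution of \eqref{eq:D}. I expect the main obstacle to be the rigorous uniform-in-$t$ justification of the factorization above: showing the node-level sums concentrate around their mean-field approximations requires a careful coupling to a dynamic configuration-model exploration together with a second-moment bound that leans on \ref{A2}, and verifying that the errors remain $o_P(1)$ after integration against $[0,T]$ is where the bulk of the technical work lies.
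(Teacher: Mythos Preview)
Your proposal is correct and follows essentially the same architecture as the paper: Doob--Meyer decomposition, martingale control via quadratic variation and \ref{A2}, concentration of $X_{S_\bk}/n$ at $\alpha_S p_\bk\theta^\bk$ and of $X_{S\b}/n$ at $\alpha_S\theta\,\partial\psi(\theta)$, drift closure replacing node-level sums by the $\bar\kappa_{jl}(\theta)$-weighted mean-field expressions, and Gr\"onwall to finish. The one refinement worth flagging is that the paper turns your ``independent survival'' heuristic into an exact finite-$n$ statement: conditionally on the process history, the layer-$j$ neighborhood $(X^j_{SI,i},X^j_{\SI,i},X^j_{SS,i},X^j_{\SS,i})$ of a susceptible $i\in S_\bk$ is \emph{exactly} multivariate hypergeometric with population counts $(X^j_{SI},X^j_{\SI},X^j_{SS},X^j_{\SS})$ out of $X^j_{S\b}$ and sample size $k_j$, with independence across layers; this yields closed-form mixed moments (e.g.\ $E_h[X^j_{SI,i}]=k_jX^j_{SI}/X^j_{S\b}$, whence $P(i\in S_\bk(t)\mid i\in S_\bk(0))=\theta^\bk$ directly) and makes the concentration of $\sum_{i\in S_\bk}X^l_{SI,i}X^j_{SS,i}$ around its compensator a straightforward Doob-inequality martingale bound, rather than requiring an asymptotic-independence coupling.
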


Theorem \ref{thm:lln} specifies the large graph limit of the aggregated $SIdaR(r,m)$ process on $\G_r(\psi,n)$ under conditions  \ref{A1}$-$\ref{A2}. It says that  $(\X/n,\Xth(t))$ converges uniformly in probability on any finite interval $[0,T]$ to the solution $(\bx,\th)(t)$ of the deterministic set of equations given by \eqref{eq:D}.  \blue{Proof of Theorem~\ref{thm:lln} is provided in Appendix~\ref{sec:proof}. The argument   largely follows the standard large volume  analysis \cite{kurtz1970solutions}. The main difficulty is in  showing that  the  terms corresponding to ``empirical moments"  in system \eqref{Feqn} (i.e. those summing over susceptible nodes in the graph and counting exact numbers of neighbors of certain type)  can be replaced  in the large graph limit \eqref{eq:H} with the corresponding $\bar{\kappa}$ terms  encapsulating a population-level average over the heterogeneity in the degrees and excess degrees of susceptible nodes. This may be done due to  the properties of the LCM construction with the help of the representation introduced in Remark~\ref{rem:hg} in Appendix~\ref{sec:proof}.}

\subsection{Edge-based limiting systems} \label{sec:eb}

We consider two special cases of the large graph limit theorem for multilayer networks.  First, we consider a static network, i.e. the case where $\delta_j = \eta_j =0$ for $j=1,\hdots,n$.  Corollary \ref{cor:EBequiv} below states that in this case our system \eqref{eq:D} is equivalent to an edge-based model with multiple modes of transmission.  This model is the one  proposed by Miller and Volz \cite{Miller2013} but with a modification to allow for a large number of initially infected nodes (following \cite{Miller2014b} where Miller modifies the standard SIR edge-based model for such a scenario).  In the case that the initially infected nodes are randomly chosen (which we assume in \ref{A3}), the model is given by
\begin{equation} \label{eq:EB}
\begin{array}{ll}
\dfrac{d\theta_j}{dt} = -\beta_j\theta_j + \beta_j\alpha^S\dfrac{\p_j\psi(\th)}{\p_j\psi(\one)} + \gamma(1-\theta_j) + \beta_j\alpha^R, \qquad j = 1,\hdots,r  \\[0.1in]
\dfrac{dR}{dt} = \gamma I, \qquad S = \alpha^S\psi(\th), \qquad I = 1-S-R, \\[0.1in]
\th(0) = \one, \qquad R(0) = \alpha^R, \qquad S(0) = \alpha^S, \qquad I(0) = 1-\alpha^S-\alpha^R.
\end{array}
\end{equation}
The   statement of Corollary \ref{cor:EBequiv} is as follows. Its proof is  given in Appendix \ref{app:equiv}.
\begin{corollary} \label{cor:EBequiv}
Assume $\delta_j = \eta_j =0$ for $j=1,\hdots,n$ and the conditions of Theorem \ref{thm:lln} hold.  Then, the conclusions of Theorem \ref{thm:lln} hold where $(\bx,\th)(t)$ is equivalent to the solution of the edge-based model with multiple modes of transmission \eqref{eq:EB}.
\end{corollary}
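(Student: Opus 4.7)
The plan is to exhibit an explicit dictionary between solutions of the edge-based system \eqref{eq:EB} and solutions of the deterministic limit \eqref{eq:D} in the static case $\delta_j=\eta_j=0$, and then invoke Cauchy--Lipschitz uniqueness to conclude equivalence. First I would observe that under the stated hypotheses the initial conditions $\alpha^j_{\SI}=\alpha^j_{\SS}=0$ propagate through \eqref{eq:D}, so $D_{\SI}\equiv D_{\SS}\equiv 0$, and only the reduced system in $(D_S,D_I,D_{SI},D_{SS},D_\Theta)$ needs to be matched against \eqref{eq:EB}.

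Following the Volz--Miller decomposition I would introduce, for each layer $j$, auxiliary quantities $\phi_S^j,\phi_I^j,\phi_R^j$ representing the probabilities that a $j$-half-edge incident to a surviving susceptible test node is paired to an S, to an I whose edge has not yet transmitted, or to an R whose edge did not transmit while infected. A direct combinatorial computation against the distribution of surviving susceptibles $\alpha_S p_{\bk}\theta^{\bk}$ yields the closed forms
\[\phi_S^j(\theta)=\alpha_S\,\p_j\psi(\theta)/\p_j\psi(\one),\quad \phi_R^j(\theta)=\alpha_R+\gamma(1-\theta_j)/\beta_j,\quad \phi_I^j(\theta)=\theta_j-\phi_S^j(\theta)-\phi_R^j(\theta),\]
so everything is algebraic in $\theta$. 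From a solution $(\theta,S,I,R)$ of \eqref{eq:EB} I would then build the candidate
\[D_\Theta^{*}=\theta,\ \ D_S^{*}=\alpha_S\psi(\theta),\ \ D_I^{*}=I,\ \ D_{SI}^{*,j}=\alpha_S\phi_I^j\,\p_j\psi(\theta),\ \ D_{SS}^{*,j}=\alpha_S^2\bigl(\p_j\psi(\theta)\bigr)^2/\p_j\psi(\one),\]
which reflects that, conditional on survival, a degree-$\bk$ susceptible node has $k_j\phi_I^j/\theta_j$ active infected $j$-neighbors in expectation.

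The verification that $(D_X^{*},D_\Theta^{*})$ satisfies \eqref{eq:D} with the prescribed initial data is then the core task. The equations for $D_S,D_I,D_\Theta$ drop out immediately from $\dot\theta_j=-\beta_j\phi_I^j$ (read off \eqref{eq:EB}) and the chain rule, since $\dot D_S^{*}=\alpha_S\sum_j\p_j\psi(\theta)\dot\theta_j=-\sum_j\beta_j D_{SI}^{*,j}$ and $-\beta_j D_{SI}^{*,j}/(\alpha_S\p_j\psi(\theta))=-\beta_j\phi_I^j$ matches $\H^j_\Theta$. The $D_{SS}^{*,j}$ equation reduces, after inserting the closures, to the observation that the factor $\bar\kappa_{jl}(\theta)$ in $\H^j_{SS}$ delivers precisely the $\psi(\theta)/(\p_j\psi(\theta)\p_l\psi(\theta))$ required to cancel the denominators brought in by the closures. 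The main obstacle will be the $D_{SI}^{*,j}$ equation: differentiating $\alpha_S\phi_I^j\p_j\psi(\theta)$ produces the cross term $\alpha_S\phi_I^j\sum_l\p^2_{jl}\psi(\theta)\dot\theta_l$ together with $\dot\phi_I^j$, and matching these against $\H^j_{SI}$ requires the identity $D_{SS}^{*,j}-D_{SI}^{*,j}=\alpha_S\p_j\psi(\theta)(\phi_S^j-\phi_I^j)$ combined with $\dot\phi_I^j=-(\beta_j+\gamma)\phi_I^j+(\alpha_S/\p_j\psi(\one))\sum_l\beta_l\p^2_{jl}\psi(\theta)\phi_I^l$, obtained by differentiating the three $\phi$-identities in turn.

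Since $\H$ is locally Lipschitz on the open set $\{x_S>0\}$, and assumption \ref{A1} together with $D_S^{*}=\alpha_S\psi(\theta)>0$ keeps the trajectory there, uniqueness of solutions to \eqref{eq:D} with prescribed initial data forces $(D_X,D_\Theta)=(D_X^{*},D_\Theta^{*})$. The converse is automatic from the same construction: given $D$, the map $\theta\mapsto D_\Theta$ and the closure relations produce a solution of \eqref{eq:EB} by reversing the algebra above. This supplies the equivalence claimed in Corollary \ref{cor:EBequiv}, with the detailed algebraic bookkeeping of the $D_{SI}^{*,j}$ step naturally deferred to the appendix.
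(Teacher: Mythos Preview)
Your proposal is correct and follows essentially the same approach as the paper's Appendix~\ref{app:equiv}: both establish the closures $D_S=\alpha_S\psi(D_\Theta)$, $D_{SS}^j=\alpha_S^2(\p_j\psi(D_\Theta))^2/\p_j\psi(\one)$, and $D_{SI}^j=\alpha_S\phi_I^j\,\p_j\psi(D_\Theta)$ (the paper writes your $\phi_I^j$ as $D_\phi^{I,j}$), verify the corresponding ODE identities by direct differentiation, and invoke uniqueness. The only cosmetic difference is direction---the paper starts from $D$ and shows $D_\Theta$ satisfies \eqref{eq:EB}, whereas you start from \eqref{eq:EB} and build $D^*$---but the algebra and the key closure relations are identical.
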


We further consider the special case of a static, single layer graph (i.e. $r=1$ and $\delta_1 = \eta_1 = 0$).  In the case $r=1$, \eqref{eq:EB} reduces to the well-known edge-based SIR model of Volz and Miller et al. \cite{Miller2011,Volz2008}, which has been proven to be the large graph limit of the SIR stochastic process on a static configuration model graph \cite{Janson2014,Decreusefond2012}.
By taking $r=1$ in Corollary \ref{cor:EBequiv} we have provided an alternative proof of this fact.

\begin{corollary} \label{cor:EBequiv2}
Assume $r=1$, $\delta_1 = \eta_1=0$ and the conditions of Theorem \ref{thm:lln} hold.  Then, the conclusions of Theorem \ref{thm:lln} hold where $(\bx,\theta)(t)$ is equivalent to the solution of the edge-based SIR model \eqref{eq:EB} with $r=1$.
\end{corollary}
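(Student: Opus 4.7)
The plan is to obtain Corollary \ref{cor:EBequiv2} as an immediate specialization of Corollary \ref{cor:EBequiv} by setting $r=1$. Since the hypotheses $\delta_1=\eta_1=0$ and $r=1$ are exactly the single-layer instance of the static-network hypothesis of Corollary \ref{cor:EBequiv}, and the remaining hypotheses \ref{A1}--\ref{A2} are inherited verbatim, Theorem \ref{thm:lln} applies and $(X(t)/n,\theta(t))$ converges uniformly in probability on $[0,T]$ to $D(t)$, which in turn agrees with the solution of the $r=1$ instance of system \eqref{eq:EB}. Nothing needs to be re-derived at the level of the law of large numbers: the work has already been done by Corollary \ref{cor:EBequiv}.

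What remains is a bookkeeping check: write out \eqref{eq:EB} with $r=1$ and confirm it coincides with the classical Volz--Miller edge-based SIR system. Dropping layer indices, the system reads
\begin{equation*}
\dot\theta = -\beta\theta + \beta\alpha_S\frac{\psi'(\theta)}{\psi'(1)} + \gamma(1-\theta) + \beta\alpha_R,
\qquad
\dot R = \gamma I,
\qquad
S = \alpha_S\psi(\theta),\qquad I = 1-S-R,
\end{equation*}
with $\theta(0)=1$, $S(0)=\alpha_S$, $R(0)=\alpha_R$, $I(0)=1-\alpha_S-\alpha_R$. This is exactly the form of the Volz--Miller equations stated in \cite{Volz2008,Miller2011,Miller2014b}: $\theta$ is the probability that a neighbor along a randomly chosen edge has not yet transmitted to the test node, the fraction of susceptibles is $\alpha_S\psi(\theta)$ by the independence of edge states conditional on being susceptible, and the two correction terms $\gamma(1-\theta)$ and $\beta\alpha_R$ account for partners that have recovered without transmitting (including those recovered at time zero, per Miller's modification for nontrivial initial conditions).

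I anticipate that the only subtlety worth pausing over is the treatment of initial conditions. The original Volz--Miller equations assume a vanishingly small seed of infection, whereas Theorem \ref{thm:lln} is stated for $\alpha_I>0$ and allows $\alpha_R>0$; this is precisely why the additional $\beta\alpha_R$ term appears and why the $S$ equation carries the prefactor $\alpha_S$. Assumption \ref{A3} supplies random selection of the initially infected and recovered nodes, which is exactly the hypothesis under which Miller's modified edge-based system is derived in \cite{Miller2014b}, so the comparison is compatible and requires no further adjustment.

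Thus the only actual content of the corollary is the observation that the chain ``Theorem \ref{thm:lln} $\Rightarrow$ Corollary \ref{cor:EBequiv} (via Appendix \ref{app:equiv}) $\Rightarrow$ Corollary \ref{cor:EBequiv2}'' recovers, as a byproduct, the known large-graph limit result of \cite{Decreusefond2012,Janson2014}. No separate argument is needed, and in particular the delicate step — equivalence of the aggregated ODE system \eqref{eq:D} with the low-dimensional edge-based system — was already discharged in the proof of Corollary \ref{cor:EBequiv}.
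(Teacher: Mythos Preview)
Your proposal is correct and matches the paper's approach exactly: the paper obtains Corollary \ref{cor:EBequiv2} simply by taking $r=1$ in Corollary \ref{cor:EBequiv}, with no additional argument. Your bookkeeping check and remarks on initial conditions are extra but harmless commentary.
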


\subsection{Pairwise limiting systems}  \label{sec:pw}

In this section we consider a certain class of LCMs where $\bar{\kappa}_{ij}$ as defined by equation \eqref{eq:kappabar} is constant.  This affords a substantial simplification to the limiting system \eqref{eq:H} and, in fact, the system of differential equations defining the large graph limit will coincide with the model derived via the pairwise approach. 

\subsubsection{Poisson-type distributions}

 We define a multivariate {\em Poisson-type (PT) distribution} to be a distribution with a pgf that satisfies 
\begin{equation} \label{eq:mpt}
\bar{\kappa}_{ij}(\bx) \equiv \bar{\kappa}_{ij}(\one) = \kappa_{ij}, \qquad i,j = 1,\hdots,r
\end{equation}
where we recall the definition of the normalized average excess degree, $\kappa_{ij}$, in equation \eqref{eq:kappa}.
 At first glance, \eqref{eq:mpt} may seem opaque; however, in fact it is satisfied by a broad class of distributions.  For example, in the single layer case this condition is equivalent to 
\[\psi'(x) = \psi'(1)\psi^\kappa,\]
which is satisfied by the univariate Poisson ($\kappa=1$), binomial ($\kappa<1$), and negative binomial ($\kappa>1$) distributions.  Note that a $k$-regular graph (where all nodes have degree $k$) may be considered  a special case of the binomial distribution and that  the geometric distribution is a special case of the negative binomial distribution.  Bansal et al.  \cite{Bansal2007}  have shown that the geometric degree distribution (i.e. the discrete analog of the exponential distribution) gives the best fit for several empirical contact networks.  In the multilayer case with $r>1$, \blue{if the marginal degree distributions in different layers  are independent}, i.e. the pgf $\psi(\bx)$ can be written as $\psi(\bx) = \prod_{j=1}^r\psi_r(x_r)$, the PT condition \eqref{eq:mpt} \blue{reduces to the degree distribution in each layer being of  (univariate) Poisson-type.}

\subsubsection{Pairwise model} \label{subsec:pw}

If the degree distribution is PT as defined by condition \eqref{eq:mpt}, the limiting system  \eqref{eq:H} defining $\bm{\H}=(\bm{\H}^{\X},\bm{\H}^{\Th})$ has a particularly simple form.   Indeed, substituting the constant $\kappa_{jl}$ for $\bar{\kappa}_{jl}(\th)$ decouples $\bm{\H}^\X$ from $\bm{\H}^\Th$.  We consider the resulting model in this section and introduce some new notation to do so.  Let $[XY]_j$ and $[\widetilde{XY}]_j$ denote, respectively, the number of activated and deactivated edges of type $j$ between a node of status $X$ (either $S$, $I$ or $R$) and a node of status $Y$.  Let $[XYZ]_{ij}$ denote the number of triples with an active $i$-edge between nodes of status $X$ and $Y$ and an active $j$-edge between the node of status $Y$ and one of status $Z$.  Similarly, $[\widetilde{XY} Z]_{ij}$ will denote such triples where the $i$ edge is deactivated.

Under the correlation equations approach of Rand \cite{Rand1999}, triples are needed to describe the evolution of pairs, quadruples (e.g. $[XYZW]_{ijk}$) are needed to describe the evolution of triples, and so forth.  A pair approximation for triples is used in order to close the system at the level of pairs \cite{Rand1999}.  For consideration of triples in the multilayer setting, we must take into account the edge types and the appropriate excess degrees as defined in Section \ref{sec:lcm}. Let $p(u=X|A)$ denote the probability that a node $u$ has disease status $X \in \{S,I,R\}$ given an edge (or triple) arrangement $A$ for $u$.  The pair approximation of $[XYZ]_{ij}$ is then calculated as follows:
\begin{align}
[XYZ]_{ij} &= [u YZ]_{ij}p(u=X|[u YZ]_{ij}) \nonumber \\ &  \approx [uYZ]_{ij}p(u=X|[uY]_i) = \mu^{ex}_{i|j}[YZ]_{j}\dfrac{[XY]_i}{\mu_iY} = \kappa_{ij}\dfrac{[XY]_i[YZ]_j}{Y} \label{eq:pa}
\end{align}
with $\kappa_{ij}$ as defined in equation \eqref{eq:kappa}.  \blue{Note that since alternatively  we could have  approximated  $[XYZ]_{ij} \approx [XYu]_{ij}p(u=Z|[Yu]_j)$ we must have  that $\kappa_{ji} = \kappa_{ij}$. }

Applying the correlation equations approach using the pair approximation \eqref{eq:pa} to the $SIdaR(r,m)$ dynamics described in Section \ref{sec:SIdaR} results exactly in the same equations as the limiting system \eqref{eq:D} in the case of a PT distribution:
\begin{equation} \label{eq:pw}
\begin{aligned}
\dfrac{dS}{dt} &= -\sum_{j=1}^r \beta_j[SI]_j, & \\
\dfrac{dI}{dt} &=\sum_{j=1}^r \beta_j[SI]_j -\gamma I, &\\
\dfrac{d[SI]_j}{dt} &= \sum_{l=1}^r \beta_l\kappa_{jl}\dfrac{[SS]_j[SI]_l}{S} - \sum_{l=1}^r \beta_l\kappa_{jl}\dfrac{[SI]_j[SI]_l}{S}-(\beta_j+\gamma+\delta_j)[SI]_j + \eta_j[\SI]_j, & j = 1,\hdots,r, \\
\dfrac{d[\SI]_j}{dt} &= \sum_{l=1}^r \beta_l\kappa_{jl}\dfrac{[\SS]_j[SI]_l}{S} -\sum_{l=1}^r \beta_l\kappa_{jl}\dfrac{[\SI]_j[SI]_l}{S}-(\eta_j+\gamma)[\SI]_j+ \delta_j[SI]_j, & j = 1,\hdots,r,\\
\dfrac{d[SS]_j}{dt} &= -2\sum_{l=1}^r \beta_l\kappa_{jl}\dfrac{[SS]_j[SI]_l}{S} , & j = 1,\hdots,r, \\
\dfrac{d[\SS]_j}{dt} &= -2\sum_{l=1}^r \beta_l\kappa_{jl}\dfrac{[\SS]_j[SI]_l}{S},  & j = 1,\hdots,r.
\end{aligned}
\end{equation}
Here, we have derived the system \eqref{eq:pw} using absolute pair and triple counts.  Notice that, if we scale all variables by the graph size $n$, the nondimensional quantities satisfy the same system of equations (this holds for any $n$ and, hence by continuity of the solution, also in the limit).  From here on we will consider only the scaled variables, which will be convenient when we state the law of large numbers in Corollary~\ref{cor:PT}.  Accordingly, we set the initial conditions to be 
\begin{equation} \label{eq:pwic}
(S,I,[SI]_1,\hdots,[SI]_r,[\SI]_1,\hdots,[\SI]_r,[SS]_1,\hdots,[SS]_r,[\SS]_1,\hdots,[\SS]_r)(0) = \bm{\alpha}
\end{equation}
so that they agree with the initial conditions in Theorem \ref{thm:lln} as $n \to \infty$.

Observe also that, in fact, we can reduce the dimension of the system \eqref{eq:pw} since we only need to keep track of the deactivated edges for the activating layers, i.e. $j=m+1,\hdots,r$.  Also, $[SS]_j \equiv 0$ for an activating layer since its initial condition is zero and, hence, we only must track $[SS]_j$ for $j=1,\hdots,m$.  We refer to system \eqref{eq:pw} with its initial condition \eqref{eq:pwic} as the {\it pairwise model}.
 
The discussion above is summarized in the following corollary.
\begin{corollary} \label{cor:PT}
Assume the conditions of Theorem \ref{thm:lln} hold for LCM $\G_r(\psi,n)$.  Then, the conclusions of Theorem \ref{thm:lln} hold where $(\bx,\th)(t)$ is the solution of the pairwise model \eqref{eq:pw} if and only if $\G_r(\psi,n)$ has a multivariate Poisson-type degree distribution.
\end{corollary}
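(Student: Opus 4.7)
The plan is to compare the limiting system \eqref{eq:D} with the pairwise system \eqref{eq:pw} term by term. The two have identical initial conditions (cf.\ \eqref{eq:pwic}) and identical structure except that every constant $\kappa_{jl}$ in the pairwise system appears as the state-dependent factor $\bar{\kappa}_{jl}(x_\Th)$ in \eqref{eq:H}. The corollary therefore hinges on when these two prefactors produce the same dynamics.

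For the forward direction, assume the PT condition \eqref{eq:mpt}, so $\bar{\kappa}_{jl}(\bx)\equiv\kappa_{jl}$. Substituting into $\H_X$ removes all dependence on $x_\Th$ and reproduces the RHS of \eqref{eq:pw} component by component. Assumption \ref{A1} keeps $x_S$ bounded away from zero on $[0,T]$, so the common RHS is locally Lipschitz in a neighborhood of the trajectory; since $D_X(0)=\alpha$ matches \eqref{eq:pwic}, Picard--Lindel\"of uniqueness forces $D_X(t)$ to coincide with the pairwise solution.

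For the reverse direction, suppose the $D_X(t)$ furnished by Theorem \ref{thm:lln} also solves \eqref{eq:pw}. Equating the two right-hand sides along the trajectory and working, for instance, with the $[SS]_j$ component (after dividing out the positive factor $D^j_{SS}(t)/D_S(t)$), one gets
\[
\sum_{l=1}^r \beta_l\bigl[\bar{\kappa}_{jl}(D_\Th(t))-\kappa_{jl}\bigr]\,D^l_{SI}(t) \;=\; 0, \qquad j=1,\ldots,r,\ t\ge 0,
\]
with analogous identities from the $[\SS]_j$, $[SI]_j$, and $[\SI]_j$ equations. At $t=0$ these identities hold trivially because $\bar{\kappa}_{jl}(\one)=\kappa_{jl}$. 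The plan is to Taylor-expand at $t=0$, using the explicit initial derivative $\dot{D}^j_\Th(0)=-\beta_j\alpha^j_{SI}/(\alpha_S\p_j\psi(\one))$ and the initial derivatives of $D_{SI}$ obtained from $\H_X$, so that the vanishing of successive Taylor coefficients produces polynomial constraints on the partial derivatives of $\bar{\kappa}_{jl}$ at $\one$. Combining the constraints from the four families of dyad equations and from the free rate and initial-condition parameters forces every such derivative to vanish, and since $\bar{\kappa}_{jl}$ is a rational function of the analytic pgf $\psi$ with denominator $\p_j\psi\cdot\p_l\psi$ nonvanishing near $\one$, the identity theorem extends this to $\bar{\kappa}_{jl}(\bx)\equiv\kappa_{jl}$ globally, i.e.\ \eqref{eq:mpt}.

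The main obstacle is the reverse direction: both separating the sum over $l$ into per-pair identities and lifting constancy of $\bar{\kappa}_{jl}$ from the one-dimensional trajectory $\{D_\Th(t)\}$ to the full $r$-dimensional domain of $\psi$. Both steps rely on the Taylor-expansion bookkeeping, together with the freedom to vary rates and initial data to span enough independent directions of $\dot{D}_\Th(0)$; the standard analyticity argument then closes the proof.
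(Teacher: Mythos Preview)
Your forward direction is correct and is exactly the paper's argument: under the PT condition \eqref{eq:mpt}, $\bar{\kappa}_{jl}(x_\Th)\equiv\kappa_{jl}$, so $\H_X$ decouples from $x_\Th$ and reduces term-by-term to the right-hand side of \eqref{eq:pw}; the paper states this just above the corollary and then says the corollary ``summarizes'' that discussion.

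The reverse direction is where your proposal diverges from the paper, and there is a genuine gap. The paper does not actually prove the ``only if'' direction; it is taken as implicit, the intended reading being that the limiting \emph{system of equations} \eqref{eq:H} coincides with \eqref{eq:pw}, not merely that one specific trajectory does. Under that reading the converse is immediate: comparing $\H_X$ with the pairwise right-hand side as functions shows they agree for all states iff $\bar{\kappa}_{jl}(x_\Th)=\kappa_{jl}$ identically, which is precisely \eqref{eq:mpt}. You have adopted the stricter interpretation---that the particular solution $D_X(t)$ from Theorem~\ref{thm:lln} also solves \eqref{eq:pw}---and under that reading the converse is genuinely nontrivial, as you recognize. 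But your proposed fix does not close the gap: the hypotheses of the corollary fix the rates $\beta,\gamma,\delta,\eta$ and the initial data $\alpha$ (they are part of ``the conditions of Theorem~\ref{thm:lln}''), so you are not licensed to vary them to span independent directions of $\dot D_\Th(0)$. Without that freedom your displayed identity only constrains $\bar{\kappa}_{jl}$ along the single curve $\{D_\Th(t):t\ge0\}\subset\bR^r$, and for $r>1$ this one-dimensional information, further entangled in a sum over $l$, does not by itself force $\bar{\kappa}_{jl}$ to be constant on a full neighborhood of $\one$. The Taylor bookkeeping you allude to is not carried out, and it is not clear it would suffice. If you want a rigorous converse under your stricter reading, the cleanest route is to argue (as works in the $r=1$ case via analyticity of $\bar\kappa$ and positivity of $D_{SI}$ near $t=0$) rather than to invoke parameter variation; but the paper itself simply sidesteps the issue by interpreting the statement at the level of vector fields.
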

Furthermore, we can consider the implications of Corollary \ref{cor:PT} in the static, single layer case.  If $r=1$ and $\delta_1=\eta_1=0$, then the pairwise model \eqref{eq:pw} reduces to the well-known correlation equations model of Rand \cite{Rand1999}:
\begin{equation} \label{Rand}
\begin{aligned}
\dfrac{dS}{dt} &= -\beta [SI], \\
\dfrac{dI}{dt} &= \beta[SI] - \gamma I, \\
\dfrac{d[SI]}{dt} &= \beta\kappa\dfrac{[SS][SI]}{S}-\beta\kappa\dfrac{[SI]^2}{S}-(\beta+\gamma)[SI],\\
\dfrac{d[SS]}{dt} &= -2\beta\kappa\dfrac{[SS][SI]}{S}. 
\end{aligned}
\end{equation}

\begin{corollary} \label{cor:PT2}
Assume the conditions of Theorem \ref{thm:lln} hold for $\G_1(\psi,n)$, a static graph (i.e. $\delta_1=\eta_1=0$).  Then, the conclusions of Theorem \ref{thm:lln} hold where $(\bx,\theta)(t)$ is the pairwise SIR model of Rand \eqref{Rand} if and only if $\G_1(\psi,n)$ has a univariate Poisson-type degree distribution.
\end{corollary}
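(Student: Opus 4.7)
My plan is to obtain Corollary \ref{cor:PT2} as a direct specialization of the multilayer Corollary \ref{cor:PT} to $r=1$ with $\delta_1=\eta_1=0$. The only things that actually need checking are (i) that the multivariate PT condition \eqref{eq:mpt} collapses to the univariate PT condition in a single layer, and (ii) that the single-layer, static restriction of the pairwise system \eqref{eq:pw} is exactly Rand's system \eqref{Rand}. Both reductions are routine, but they must be done cleanly because \eqref{eq:pw} carries the deactivated-edge variables $[\SI]$ and $[\SS]$, while \eqref{Rand} does not.

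For (i), when $r=1$ the only entry of the matrix $\bar{\kappa}_{ij}(\bx)$ is $\bar{\kappa}_{11}(\bx)=\psi(x)\psi''(x)/(\psi'(x))^2$, and \eqref{eq:mpt} reduces to $\bar{\kappa}_{11}(x)\equiv \bar{\kappa}_{11}(1)=\kappa$, i.e. the univariate Poisson-type condition, which (as discussed in the text preceding Corollary~\ref{cor:PT}) is equivalent to $\psi'(x)=\psi'(1)\psi(x)^\kappa$ and characterizes the Poisson/binomial/negative-binomial family.

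For (ii), with $r=1$ and $\delta_1=\eta_1=0$, assumption \ref{A3} forces the initial conditions $[\SI](0)=\alpha_{\SI}=0$ and $[\SS](0)=\alpha_{\SS}=0$. Plugging $\delta_1=\eta_1=0$ into the $[\SI]$ and $[\SS]$ equations in \eqref{eq:pw} gives
\begin{equation*}
\frac{d[\SI]}{dt}=\beta\kappa\frac{[\SS][SI]}{S}-\beta\kappa\frac{[\SI][SI]}{S}-\gamma[\SI],\qquad \frac{d[\SS]}{dt}=-2\beta\kappa\frac{[\SS][SI]}{S},
\end{equation*}
so the zero initial data propagate and $[\SI](t)\equiv[\SS](t)\equiv 0$ on $[0,T]$ (uniqueness for this Lipschitz-on-compacts subsystem is immediate). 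Dropping these identically-zero equations and substituting into the remaining $S$, $I$, $[SI]$, $[SS]$ equations in \eqref{eq:pw} with $\delta_1=\eta_1=0$ yields exactly \eqref{Rand}. Thus the pairwise limit in Corollary~\ref{cor:PT} coincides with Rand's model under our assumptions.

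Combining (i) and (ii): apply Corollary \ref{cor:PT} to $\G_1(\psi,n)$. The ``if'' direction is immediate, since a univariate PT distribution is a $1$-layer multivariate PT distribution, so $D(t)$ solves \eqref{eq:pw}, hence \eqref{Rand}. The ``only if'' direction follows because if $D(t)$ solves \eqref{Rand} then (augmenting by the identically-zero $[\SI],[\SS]$) it solves the single-layer instance of \eqref{eq:pw}, and Corollary~\ref{cor:PT} then forces the PT condition on $\psi$. The only mild subtlety---and the step I would be most careful about---is the ``only if'' direction: one needs the biconditional character of Corollary~\ref{cor:PT} to rule out accidental coincidences of the two ODE systems for non-PT $\psi$. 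Since Corollary~\ref{cor:PT} already gives this as a biconditional for general $r$, no extra work is required beyond the straightforward reductions above.
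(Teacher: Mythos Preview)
Your proposal is correct and follows essentially the same approach as the paper: the paper derives Corollary~\ref{cor:PT2} simply by specializing Corollary~\ref{cor:PT} to $r=1$, $\delta_1=\eta_1=0$, noting in one line that ``the pairwise model \eqref{eq:pw} reduces to the well-known correlation equations model of Rand.'' You have filled in exactly the details the paper leaves implicit---that the univariate PT condition is the $r=1$ instance of \eqref{eq:mpt}, and that the deactivated-edge variables $[\SI],[\SS]$ vanish identically from their zero initial data so that \eqref{eq:pw} collapses to \eqref{Rand}---and you correctly flag that the ``only if'' direction rests on the biconditional already asserted in Corollary~\ref{cor:PT}.
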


Note that together Corollaries \ref{cor:EBequiv} and \ref{cor:PT} imply that, in the case of a multivariate PT degree distribution on a static graph, the pairwise model \eqref{eq:pw} is equivalent to the edge-based model with multiple modes of transmission \eqref{eq:EB}.  Likewise, Corollaries \ref{cor:EBequiv2} and \ref{cor:PT2} indicate that the pairwise SIR model \eqref{Rand} is equivalent to the edge-based SIR model, \eqref{eq:EB} with $r=1$, when the distribution is PT.  We note that the edge-based SIR model has previously been shown to be equivalent \cite{House2011,Miller2014} to a higher dimensional pairwise model of Eames and Keeling \cite{Eames2002}.  The latter model stratifies the susceptible nodes by degree and, hence, has dimension $K+3$ where $K$ represents the number of distinct degrees \cite{Miller2014}. The model of dimension $K+3$ was derived as an approximation to an earlier well-known model of Eames and Keeling, which is of dimension $3K^2/2 + 3K/2+1$ \cite{Eames2002}.  We observe that, in fact, \eqref{Rand} can be reduced to two differential equations.  Separation of variables on $d[SS]/dS$  (see, e.g., \cite{Andersson2000} for $\kappa=1$) gives $[SS]=(\alpha^S)^{2(1-\kappa)}\mu S^{2\kappa}$.  Subsequent inspection of $d[SI]/dS$ yields a linear differential equation that can be solved to express $[SI]$ explicitly as a function of $S$: 
\begin{equation} \label{eq:SI}
\begin{aligned}[]
[SI] &= \dfrac{\beta+\gamma}{\beta(1-\kappa)}S - \mu S^{2\kappa} - \left( \dfrac{\beta+\gamma}{\beta(1-\kappa)}(\alpha^S)^{1-\kappa} - \mu (\alpha^S)^{\kappa} - (\alpha^S)^{1-\kappa}\alpha^I\mu\right)S^\kappa, \qquad \kappa \neq 1, \\
[SI] &= \dfrac{\beta+\gamma}{\beta}S\log(S) - \mu S^2 - \left( \dfrac{\beta+\gamma}{\beta}\log(\alpha^S) - \mu \alpha^S - \alpha^I\mu\right)S, \qquad \kappa = 1.
\end{aligned}
\end{equation}
  Thus, in Corollary \ref{cor:PT2}, we have identified a condition on the degree distribution under which the dimension of a pairwise model that is equivalent to the edge-based SIR model has been reduced from $K+3$ to two.

\subsubsection{Large-graph-consistent pair approximation} \label{sec:correct}
Corollary \ref{cor:PT} and the derivation of the pair approximation \eqref{eq:pa} motivate us to consider a more careful approximation of the triples in the general case when the distribution is not necessarily PT.  Let $\mu^{ex|SI}_{i|j}$ denote the average excess $i$-degree of a susceptible node chosen randomly as a $j$-neighbor of an infectious node and let $\mu^S_i$ denote the average $i$-degree of a susceptible node.  We now make more precise our comment in Section \ref{sec:gen} that $\bar{\kappa}_{ij}$ can be interpreted as the limiting ratio of these quantities.  

For the LCM $\G_r(\psi,n)$ with $\Xth$ as defined in equation \eqref{eq:th}, we derive $\mu^S_i$ and $\mu^{ex|SI}_{i|j}$ (see Appendix \ref{app:kappabar}) to be
\begin{equation}
\mu^S_i = \dfrac{\xth_i\p_i\psi(\Xth)}{\psi(\Xth)} \qquad \text{and} \qquad \mu^{ex|SI}_{i|j} = \dfrac{\xth_i\p^2_{ij}\psi(\Xth)}{\p_j\psi(\Xth)}. \label{eq:muS}
\end{equation}
Thus, we see from the definition of $\bar{\kappa}$ in \eqref{eq:kappabar} that
\[\bar{\kappa}_{ij}(\Xth) = \dfrac{\mu^{ex|SI}_{i|j}}{\mu^S_i}.\]
We note that $\mu^S_i$ and $\mu^{ex|SI}_{i|j}$ are dependent on time and differ, respectively, from $\mu_i$ and $\mu^{ex}_{i|j}$ defined in Section \ref{sec:lcm}. Indeed, susceptible nodes of high degree are preferentially infected \cite{May2001}.  On the other hand, the naive approximation of a triple with $\kappa_{ij} = \mu^{ex}_{i|j}/\mu_i$ in \eqref{eq:pa} uses the average degree and excess degree over all nodes, which remain constant for all $t \geq 0$.  Note that it is only necessary to approximate triples of the form $[XSI]_{ij}$ where $X \in \{S,I\}$.  Therefore, we can more carefully derive the pair approximation using $\mu^S_i$ and $\mu^{ex|SI}_{i|j}$
\begin{align}
& [XSI]_{ij} = [u SI]_{ij}p(u=X|[u SI]_{ij}) \nonumber\\  &\approx [uSI]_{ij}p(u=X|[uS]_i) = \mu^{ex|SI}_{i|j}[SI]_j\dfrac{[XS]_i}{\mu^S_i S} = \bar{\kappa}_{ij}(\Xth)\dfrac{[XS]_i[SI]_j}{S}. \label{eq:pairapproxS}
\end{align}
Note that the approximation $p(u=X|[u SI]_{ij})  \approx p(u=X|[uS]_i)$ becomes exact in the large graph limit as the configuration model becomes locally treelike \cite{Miller2014,sharkey2015exact}.

Moreover, it follows from Theorem \ref{thm:lln} that $\bar{\kappa}_{ij}(\Xth) \stackrel{P} \to \bar{\kappa}_{ij}(\th)$
uniformly on any finite interval $[0,T]$. We then see from  \eqref{eq:pairapproxS} that, if we took the correlation equations approach for $SIdaR(r,m)$ dynamics on a general LCM but instead approximated the triples with
\begin{equation}
[XYZ]_{ij} = \bar{\kappa}_{ij}(\th)\dfrac{[XY]_i[YZ]_j}{Y},  \label{eq:correct}
\end{equation}
the system of equations obtained would be exactly that given by the limiting system \eqref{eq:D}.  In this sense, the pair approximation \eqref{eq:correct} is ``correct", i.e. it is consistent with the large graph limit.  

\blue{Subsequently, we can view the pairwise system \eqref{eq:pw} as resulting from the approximation
\begin{equation}
\bar{\kappa}_{ij}(\th) \approx \kappa_{ij}.\label{eq:kappaapprox}
\end{equation}
For PT distributions, the above relation  is exact (``$\approx$'' may be replace with ``=''), and the pairwise system is consistent with the large graph limit (i.e. Corollary \ref{cor:PT}).  However, for non-PT distributions, the approximation \eqref{eq:kappaapprox} introduces some error (see 
Figure~\ref{fig:kappabar}).}

\begin{figure}
\includegraphics[scale = 0.38]{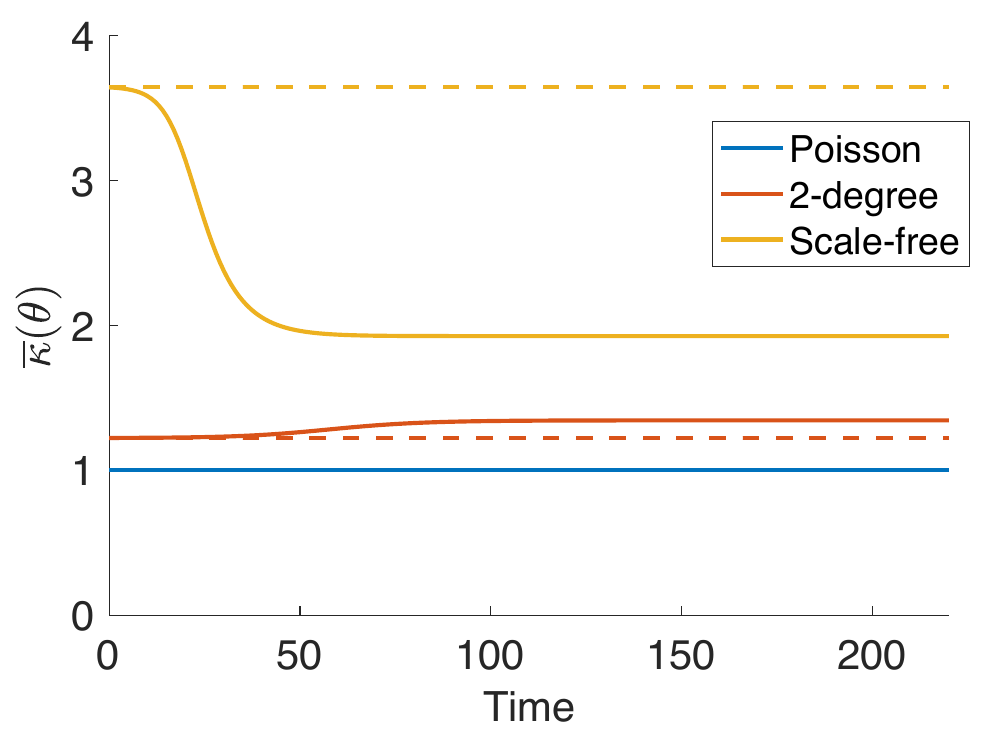}\quad 
\label{kappafig:b}
\includegraphics[scale=0.45]{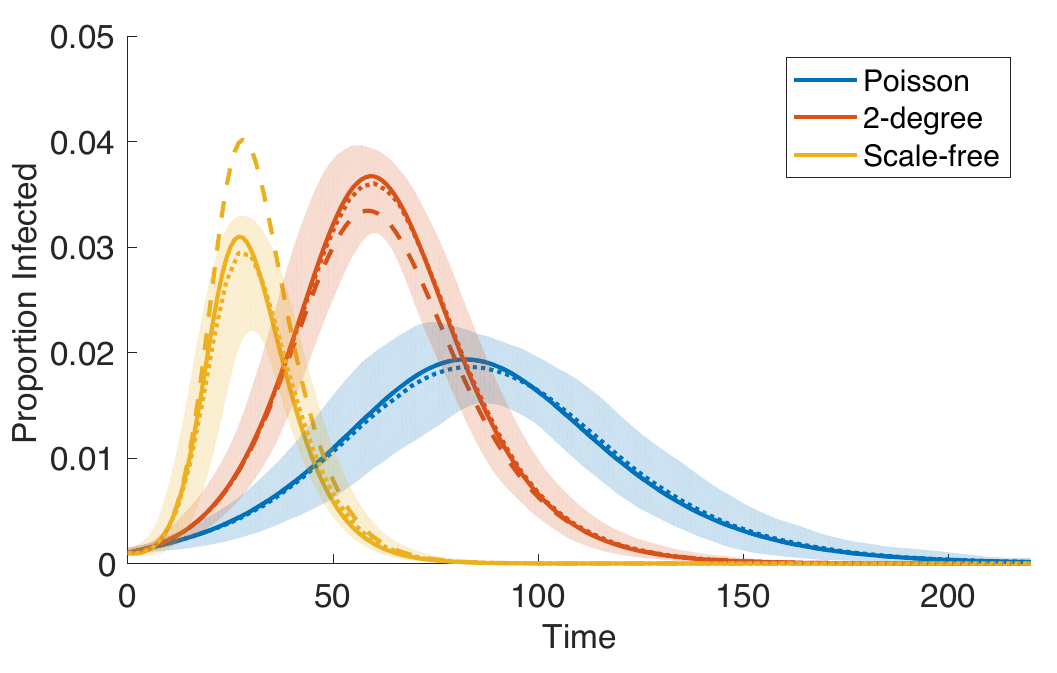}
\label{kappafig:c}

\caption{\blue{SIR epidemic simulations on static, single-layer graphs with degree distributions given by $\operatorname{Poisson}(6)$ (blue), a \textit{2-degree} distribution in which $p_2 = p_9 = 0.5$ (orange), and a scale-free distribution with $p_k = 1.6168k^{-2.01}$ for $1 \leq k \leq 50$ (yellow).  (Left) $\bar{\kappa}(\theta)$ (solid line) is constant and equal to $\kappa$ (dashed line) for the Poisson distribution but not for the non-PT distributions.  (Right) Solutions for the proportion of infected from the large graph limiting system (solid) and the pairwise system (dashed) show discrepancy for the non-PT distributions.  Approximate 95\% confidence intervals based on 100 stochastic simulations (shaded) and their means (dotted) show agreement with the large graph limiting systems.}}
\label{fig:kappabar}
\end{figure}

\blue{We illustrate the results of this section in Figure \ref{fig:kappabar} by simulating an SIR epidemic on three different static, single-layer graphs.  The first graph has a Poisson degree distribution with $\mu = 6$ (blue).  In the second graph, which we refer to as a \textit{2-degree} graph,  a node has either degree 2 or 9, each with probability 0.5 (orange).  The final graph is a so-called scale-free graph \cite{pastor-satorras2001}; that is, the degree distribution is given by a truncated power law distribution with $p_k = 1.6168k^{-2.01}$ for $1 \leq k \leq 50$ (yellow).  For each scenario, we simulate the stochastic SIR process on a graph of size $n=10^5$ and solve the edge-based model \eqref{eq:EB} (which, by Corollary \ref{cor:EBequiv2}, is equivalent to the large graph limiting system \eqref{eq:D}) as well as the pairwise system \eqref{eq:pw}.  In Figure \ref{kappafig:b}, we plot the quantity $\bar{\kappa}(\theta)$ (solid line) as a function of time with the starting value, $\kappa$, indicated with a dashed line.  The Poisson degree distribution is a PT distribution and, therefore, $\bar{\kappa}(\theta) \equiv \kappa$; however the latter two graphs do not have PT degree distributions and, indeed, we see a divergence of $\bar{\kappa}(\theta)$ from $\kappa$ as the outbreak progresses.  This corresponds to a discrepancy between the epidemic curves of the deterministic large graph limiting system (solid) and pairwise system (dashed) in Figure \ref{kappafig:c}.  Approximate 95\% confidence intervals  (shaded) based on 100 stochastic simulations are also shown in Figure \ref{kappafig:c}.  For the scale-free graph, $\kappa > \bar{\kappa}(\theta)$ after the initial time,  so using the ``naive" approximation \eqref{eq:pa} in  the pairwise system leads to overestimating  the number of triples.  Figure \ref{kappafig:c} indicates that this consequently results in an overestimate of the incidence of the epidemic (yellow, dashed).  In the 2-degree network, on the other hand, $\kappa$ slightly underestimates $\bar{\kappa}$ and the disease incidence is underestimated by the pairwise system (orange, dashed), although still within the margin of error indicated by the confidence intervals.  For the Poisson distribution, the two deterministic curves coincide. The means of the stochastic simulations  (dotted) show good agreement with the large graph limiting systems, following Theorem \ref{thm:lln}. }

\section{Community-healthcare network example} 
	\label{sec:CH}
	\blue{In this section, we discuss the two-layer dynamic network model that was mentioned in the Introduction as a concrete and  tractable example of the stochastic $SIdaR$ process.  We use this example to illustrate how our Theorem~\ref{thm:lln} can be applied to gain insight into disease dynamics, even in the quite complex setting where network dynamics are tied to infection status.  
One of the epidemiological issues we aim to understand is how the interplay between edge activation/deactivation and the multilayer network structure affects the ability of disease to invade.  In addition, gauging the sensitivity of outbreak size to network parameters and transmissibility along the different edge types would be useful for informing interventions. We demonstrate here that analysis of the pairwise limiting system provides answers to these questions, via application of well-established techniques for analyzing compartmental ODE models of disease dynamics.}

Consider the $SIdaR(2,1)$ process, as described in Section \ref{sec:SIdaR}, on a two-layer LCM $\G_2(\psi,n)$ with multivariate PT degree distribution where the two edge types correspond to community and healthcare type contacts.  We assume that infected individuals deactivate their community contacts, for example due to decreased mobility or isolation, while they activate their healthcare contacts as they seek care (Figure \ref{fig:nbhd}).   Note that the healthcare network may include both care provided by healthcare professionals at hospitals or other facilities as well as care provided in the home.  This model is motivated by the recent 2014$-$2015 outbreak of Ebola virus in West Africa.  The multitype contact features are particularly relevant for Ebola, given the disproportionate Ebola risk experienced by healthcare workers \cite{coltart2015,matanock2014} and women (primary caregivers in the home in West Africa) in the 2014 West Africa outbreak as well as Ebola outbreaks in the Democratic Republic of the Congo \cite{kratz2015}.  
\red{There are other aspects of Ebola that are important for transmission, such as an incubation period ranging from 2 to 21 days \cite{goeijenbier2014} and disease transmission at funerals \cite{Chowell2014,goeijenbier2014,Legrand2007}.  For simplicity we focus here on community and healthcare transmission and the effect of illness on network structure.}

Let $C$ denote community edges and $H$ denote healthcare edges.  As in Section \ref{sec:SIdaR}, the stochastic events are assumed to follow independent exponential clocks where the rates of infection along $C$- and $H$-edges are, respectively, $\beta_C$ and $\beta_H$, the rate of deactivation of a $C$-edge is $\delta$, the rate of activation of an $H$-edge is $\eta$ and the rate of recovery is $\gamma$.  

The $SIdaR(2,1)$ stochastic process above satisfies the conditions of Theorem~\ref{thm:lln} and Corollary~\ref{cor:PT} and, thus, converges to the following system of ODEs in the large graph limit:
\begin{equation}
\begin{aligned}
\dfrac{dS}{dt} &= -\beta_C[SI]_C-\beta_H[SI]_H  \\
\dfrac{dI}{dt} &= \beta_C[SI]_C+ \beta_H[SI]_H-\gamma I \\
\dfrac{d[SI]_C}{dt} &= \beta_C\kappa_{CC}\dfrac{[SS]_C[SI]_C}{S} + \beta_H\kappa_{CH}\dfrac{[SS]_C[SI]_H}{S} - \beta_{C}\kappa_{CC}\dfrac{[SI]^2_C}{S}-\beta_H\kappa_{CH}\dfrac{[SI]_C[SI]_H}{S}-(\beta_C+\gamma+\delta)[SI]_C\\
\dfrac{d[SI]_H}{dt} &=  - \beta_C\kappa_{CH}\dfrac{[SI]_H[SI]_C}{S}-\beta_H\kappa_{HH}\dfrac{[SI]^2_H}{S}-(\beta_H+\gamma)[SI]_H+\eta[\SI]_{H}\\
\dfrac{d[\SI]_H}{dt} &= \beta_C\kappa_{CH}\dfrac{[\SS]_H[SI]_C}{S}+\beta_H\kappa_{HH}\dfrac{[\SS]_H[SI]_H}{S}-\beta_C\kappa_{CH}\dfrac{[\SI]_H[SI]_C}{S}-\beta_H\kappa_{HH}\dfrac{[\SI]_H[SI]_H}{S}-(\eta+\gamma)[\SI]_H \\ 
\dfrac{d[SS]_C}{dt} &= -2\beta_C\kappa_{CC}\dfrac{[SS]_C[SI]_C}{S}-2\beta_H\kappa_{CH}\dfrac{[SS]_C[SI]_H}{S} \\
\dfrac{d[\SS]_H}{dt} &= -2\beta_C\kappa_{CH}\dfrac{[\SS]_H[SI]_C}{S}-2\beta_H\kappa_{HH}\dfrac{[\SS]_H[SI]_H}{S}.
\end{aligned} \label{eq:CH}
\end{equation}
As in Section \ref{subsec:pw}, the variables in \eqref{eq:CH} are scaled by $n$ and the initial conditions are given by
\begin{align}
&S(0) = \alpha^S, \quad I(0) = \alpha^I, \quad R(0) = \alpha^R \notag\\
&[SI]_C(0) = \mu_C\alpha^S\alpha^I, \quad [SI]_H(0) = 0, \quad [\SI]_H(0) = \mu_H\alpha^S\alpha^I \notag\\
& [SS]_C(0) = \mu_C(\alpha^S)^2, \quad  [SS]_H(0) = 0,\quad [\SS]_H(0)  = \mu_H(\alpha^S)^2. \label{CHic3}
\end{align}
Recall that $\mu_C$ and $\mu_H$ are the average $C$- and $H$-degrees, respectively, of a randomly chosen node and that \eqref{CHic3} corresponds to all community edges being active and all healthcare edges deactivated at $t=0$.  

\blue{Figure \ref{fig:converg} compares trajectories of the stochastic $SIdaR(2,1)$ process compared with system \eqref{eq:CH}.  At smaller graph size $n=10^4$ (Figure \ref{fig:converg}, left), there are discrepancies between the stochastic trajectories and the limiting ODE (red versus blue curves).  As the size of the graph increases, i.e. for $n=10^5$ (Figure \ref{fig:converg}, right), the stochastic trajectories are tightly clustered around the ODE solution, illustrating convergence of the stochastic process to the deterministic system given in Theorem \ref{thm:lln}.}    

\begin{figure}
\includegraphics[scale=0.4]{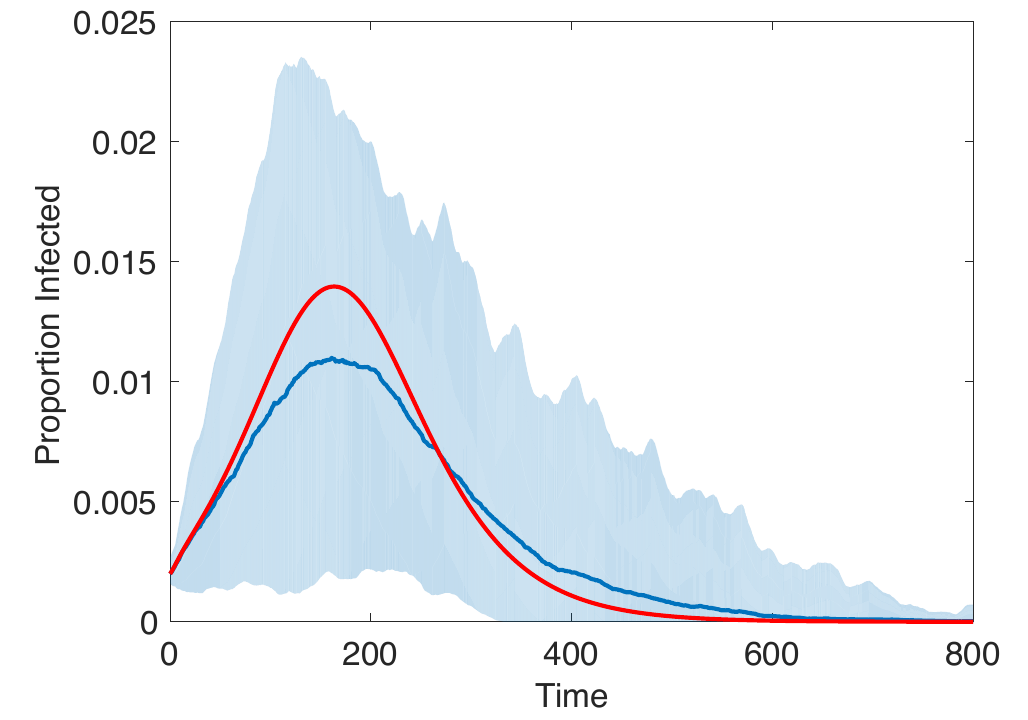}
\includegraphics[scale=0.4]{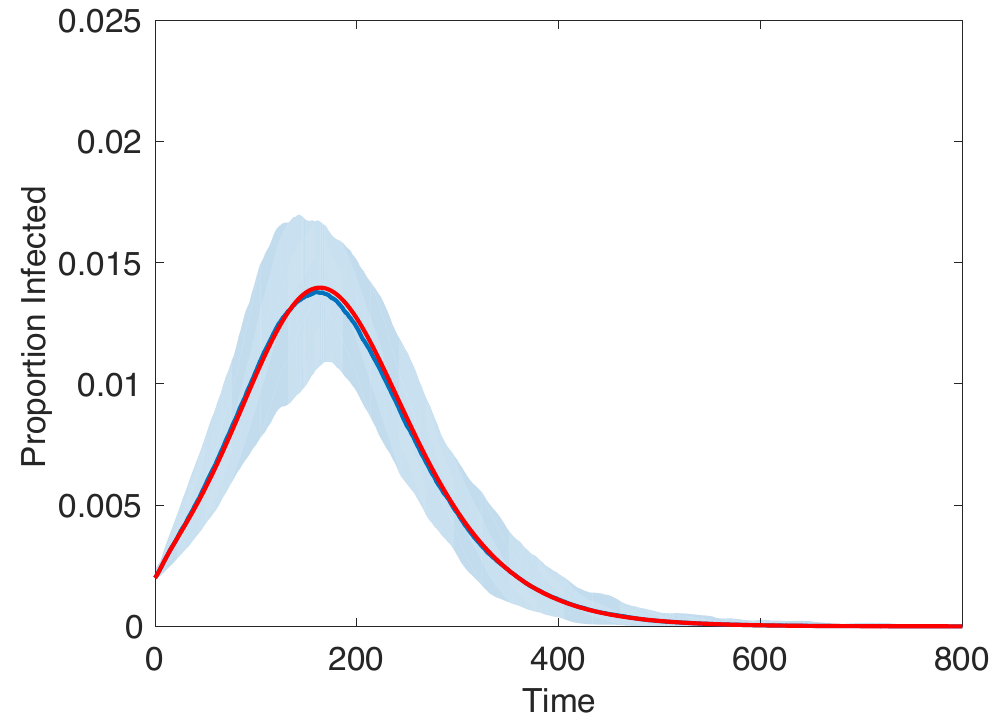}
\caption{\blue{Convergence of the stochastic process for infected to its  large graph limit ($n\to \infty$).  The degree distribution of the $C$-layer is $\operatorname{NB}(5,0.706)$ with $\mu_C = 12$ and $\kappa_{CC} = 1.2$.  The degree distribution of the $H$-layer is regular, i.e. all nodes have equal degree, with $\mu_H = 5$ and $\kappa_{HH} = 0.8$.  The size of the graph is given by  $n=10^4$ (left) and $n=10^5$ (right). The blue shaded regions indicate the approximate 95\% confidence intervals based on 100 numerical simulations of the stochastic $SIdaR(2,1)$ processes, conditioned on the large outbreak.  The blue lines give the mean of the stochastic simulations, and the red lines show the deterministic solutions to \eqref{eq:CH}.  Initially infected nodes ($\alpha^I=0.002$, $\alpha^S=0.998$) are randomly chosen and other parameters are given by $\gamma = 0.08$, $\delta = 0.1$, $\eta = 0.2$, $\beta_C=0.009$, $\beta_H = 0.015$.}}
\label{fig:converg}
\end{figure}

System \eqref{eq:CH} can now be analyzed to gain insight into how the structure of the different layers of the network and their coupling through activation/deactivation of edges in response to infection affects disease dynamics.  In particular, we can compute the basic reproduction number, $\R$, which determines whether disease invasion is possible \cite{vdd2002}.  Consider the disease free equilibrium $\bm{x_0}=(1,0,0,0,0,\mu_C,\mu_H)$.  Let 
\[R_C = \dfrac{\beta_C\kappa_{CC}\mu_C}{\beta_C+\gamma+\delta} \qquad \text{and} \qquad R_H = \dfrac{\beta_H\kappa_{HH}\mu_H\eta}{(\beta_H+\gamma)(\eta+\gamma)}.\]  
Note that $\kappa_{CC}\mu_C = \mu^{ex}_C$, the average excess $C$-degree, so we can interpret $R_C$ as the average number of secondary cases transmitted through the community network in a susceptible population and, likewise, $R_H$ represents the secondary cases caused by healthcare transmission.  The next-generation matrix method \cite{vdd2002} gives (see Appendix \ref{app:R0}) the basic reproduction number as

\begin{subequations}
\begin{alignat}{99}
  \label{eqn:R0_kappas}
 \R  &=& \dfrac{1}{2}\left(R_C+R_H\right) + \dfrac{1}{2}\sqrt{\left(R_C+R_H\right)^2 + 4 R_C R_H \left( \frac{\kappa_{CH}}{\kappa_{CC}} \frac{\kappa_{HC}}{\kappa_{HH}} - 1 \right) } \\
   \label{eqn:R0_cross}
	&&=  \dfrac{1}{2}\left(R_C+R_H\right) + \dfrac{1}{2}\sqrt{  (R_C - R_H)^2 + 4R_{CH}R_{HC}   } ,
\end{alignat}
 \label{eq:R0}
\end{subequations}
where
\begin{equation}
\label{eqn:Rcross}
R_{CH} = \dfrac{\beta_C}{\beta_C + \gamma+\delta}\mu_{C|H}^{ex} \qquad \text{and} \qquad R_{HC} = \dfrac{\beta_H}{\beta_H+\gamma}\dfrac{\eta}{\eta+\gamma}\mu_{H|C}^{ex}. 
\end{equation}
The $R_{CH}$ term can be interpreted as the number of secondary infections created in the community contact network from an initial infection along an $H$ edge, and $R_{HC}$ as the secondary infections created in the healthcare contact network from an initial infection along a $C$ edge. 

\blue{Let us comment about the utility of deriving $\R$ from \eqref{eq:CH}.  The basic reproduction number is a fundamental quantity in epidemiology, and understanding the dependence of $\R$ on model parameters can be helpful for assessing different intervention strategies.  For complex systems such as the stochastic process on a multilayer network considered here, deducing the form for $\R$ on heuristic grounds can be difficult.   Using the next-generation matrix to compute $\R$ from the limiting system of ODEs takes the guesswork out of this process.  As shown in \cite{vdd2002}, the next generation matrix gives a stability criterion for the disease free equilibrium for the limiting ODEs.  Theorem \ref{thm:lln} shows that the threshold for when the branching process for the stochastic model is subcritical converges to the threshold for the limiting ODEs, i.e. when the $\R$ expression is equal to one.} 

\blue{Expression \eqref{eq:R0} is equivalent to the $\R$ expression found by \cite{Ball2008} and \cite{Li2015}, with the difference that the edge activation and deactivation rates enter into the expression.    Indeed,} the presence of both $R_C$ and $R_H$ in $\R$ reflects the coupling of the different layers through edge activation/deactivation.  In the limit of fast deactivation of community contacts (i.e. $\delta \rightarrow \infty$), $\R \rightarrow R_H$ and disease invasibility depends solely upon the healthcare layer.  Similarly, in the limit of slow activation (i.e. $\eta \rightarrow 0$), $\R \rightarrow R_C$ and the basic reproduction number is driven by the community layer.  For intermediate activation and deactivation rates, $\R$ depends upon both layers and the multilayer aspect of the model plays an important role in affecting disease dynamics.  Both transmission ``within" (e.g. $R_C, R_H$ terms) and ``between" layers (e.g. $R_{CH}, R_{HC}$ terms) contribute to $\R$, as seen in \eqref{eqn:R0_cross}.

\blue{An important point is that \eqref{eq:R0} shows that knowledge of the reproduction numbers $R_C, R_H$ of the individual layers is not sufficient to determine $\R$ for the entire network.  Instead, $\R$ depends upon the structure of the different layers through the parameters $\kappa_{ij}$.  }
Consider the cross term in \eqref{eqn:R0_kappas}.  In the case where $\kappa_{CH} \kappa_{HC} / \kappa_{CC} \kappa_{HH} = 1$ \blue{(for example, independent  Poisson degree distributions in each layer)}, the cross term vanishes and $\R = R_C + R_H$.  In general the sign of the cross term may be either positive or negative, and thus $\R$ may be either larger or smaller than $R_C + R_H$.  In fact for fixed $R_C$ and $R_H$ it is possible for $\R$ to be either greater than or less than one, depending upon the structure of the layers.  This is illustrated in Figure \ref{fig:kappa} where we plot prevalence (i.e. infected proportion of the population) over the course of an epidemic for two different scenarios for the structure of the healthcare layer.  \blue{We take  the  two layer-specific  degrees to be  independent} ($\kappa_{CH}=\kappa_{HC}=1$) with  the degree distribution of the community layer being Poisson ($\kappa_{CC}=1$) with $\mu_C=10$.   We fix $R_C = 0.75$ and $R_H=0.5$ and take  the degree distribution of the healthcare layer  as  negative binomial with $\mu_H=8$; in the first scenario (blue) we \blue{take  the healthcare layer degree as  $NB(10,4/9)$ (corresponding to $\kappa_{HH}=1.1$) while in the second (orange) we take the healthcare layer degree as  $NB(1/3,0.96)$ (corresponding to $\kappa_{HH} =4$)}.  Figure \ref{fig:kappa} shows, for each scenario, the deterministic solution to the limiting system \eqref{eq:CH} (solid line) as well as the approximate 95\% empirical confidence interval calculated from 500 stochastic simulations of the corresponding $SIdaR(2,1)$ process with $n=10^4$ (shaded region). The basic reproduction numbers are calculated from \eqref{eqn:R0_kappas} to be, respectively, $\R=1.17$ and $\R=0.91$.  Correspondingly, in the first case a large epidemic occurs while in the second case the initial infection quickly dies out.  The increase in $\kappa_{HH}$ from the first case to the second corresponds to an increase in the variance of the $H$ degree distribution.  In fact, analogous to previous results (see, e.g., \cite{Li2015}), inspection of \eqref{eqn:R0_cross} reveals that, if $\mu_C$ and $\mu_H$ are kept constant, $\R$ is an increasing function of the variances of the $C$ and $H$ degree distributions as well as their covariance.  

\begin{figure}
\includegraphics[scale=0.5]{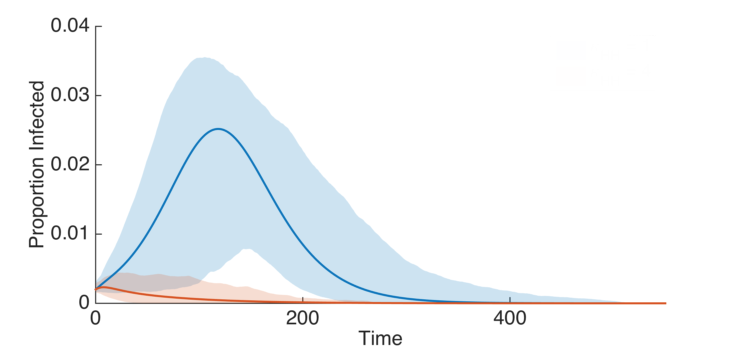}
\caption{Prevalence curves for two different scenarios for the structure of a community-healthcare network  corresponding to large (top/blue) and small (bottom/orange) outbreaks.  The degree distribution of the $C$-layer is $\operatorname{Pois}(10)$.  The degree distribution of the $H$-layer is $\operatorname{NB}(10,4/9)$ with $\kappa_{HH}=1.1$ in the first case (top/blue) and $\operatorname{NB}(1/3,0.96)$ with $\kappa_{HH}=4$ in the second case (bottom/orange).  The solid lines show the deterministic solutions to \eqref{eq:CH} while the shaded regions indicate the approximate 95\% confidence intervals based on 500 numerical simulations of the stochastic $SIdaR(2,1)$ processes with $n=10^4$.  Initially infected nodes ($\alpha^I=0.002$, $\alpha^S=0.998$) are randomly chosen and $\gamma = 0.1$, $\delta = 0.1$, and $\eta = 0.3$.  We fix $R_C=0.75$ and $R_H=0.5$ which corresponds to $\beta_C=0.0162$ and, respectively for the two scenarios, $\beta_H = 0.0082$ and $\beta_H =0.0021$.}
\label{fig:kappa}
\end{figure}

In many situations $\R$ not only determines the ability of disease to invade, but also the size of an outbreak if one occurs \cite{Ma2006}.  The system of ODEs \eqref{eq:CH} obtained via Theorem \ref{thm:lln} and Corollary \ref{cor:PT} can similarly be analyzed to 
determine a relation for the final size of the epidemic, as in \cite{Andersson2000,Arino2007}.  To illustrate, consider the special case where the degrees in community and healthcare layers are independent with Poisson degree distributions (i.e. $\kappa_{CC}=\kappa_{HH}=\kappa_{CH} = 1$).  Let $S_\infty$ denote the fraction of the population that escapes infection.  Analysis of a transformed model (as in \cite{Andersson2000}) or, alternatively, a result of Arino et al. \cite{Arino2007} can then be used (see Appendix \ref{app:final}) to derive the final size relation: 
\begin{equation}
\log\left(\frac{S_\infty}{\alpha^S}\right) = -\R\left(\alpha^S+\alpha^I-S_\infty\right) \label{CHfinal}
\end{equation}
which agrees with the classic result for mean-field, homogeneously mixed populations as $\alpha^I \to 0$ and $\alpha^S \to 1$ \cite{kermack1927}.

We conclude our analysis of the community-healthcare network model by noting that system \eqref{eq:CH} is simple enough to be amenable for practical application to outbreaks of interest, for example for parameter estimation and intervention assessment from available data (for example, medical records and contact tracing data).  \blue{For  a given set of parameters  system \eqref{eq:CH} can be used to assess impact of different interventions on preventing outbreaks from occurring (e.g. bringing $\R < 1$) or decreasing the size of an outbreak if it does occur (e.g. by using \eqref{CHfinal} to compute the sensitivity of $S_\infty$ to parameters $p$).}
Further details on statistical inference and application to specific outbreaks will be presented elsewhere.  Here we briefly show how system \eqref{eq:CH} can be further reduced by finding invariants that allow for dimension reduction.  Details for the following methods are provided in Appendix \ref{app:inv}.

Consider the case where $\kappa_{CH} \neq \kappa_{CC}$.  We can then eliminate $[\SS]_H$ by expressing it as a function of $S$ and $[SS]_C$:
\begin{align}
[\SS]_H = \dfrac{\mu_H(\alpha^S)^{2(1-\sigma-\lambda)}}{\mu_C^{\sigma}}[SS]_C^{\sigma}S^{2\lambda} \label{inv1}
\end{align}
where 
\begin{equation}
\sigma = \dfrac{\kappa_{HH}-\kappa_{CH}}{\kappa_{CH}-\kappa_{CC}} \qquad \text{and} \qquad \lambda = \kappa_{CH}-\sigma\kappa_{CC} = \dfrac{\kappa_{CH}^2-\kappa_{CC}\kappa_{HH}}{\kappa_{CH}-\kappa_{CC}} \label{eq:sigma}.
\end{equation}

In the case of independent degrees in different layers ($\kappa_{CH}=1$) neither of which has a Poisson degree distribution (i.e. $\kappa_{CC} \neq 1$ and $\kappa_{HH} \neq 1$), we are able to find two additional invariants, equations \eqref{inv2} and \eqref{inv3} below. The reduced system, of dimension four, is given by
\begin{equation}
\begin{aligned}
\dfrac{dS}{dt} &= -\beta_H[SI]_H-\beta_C[SI]_C \label{SCHeqn} \\
\dfrac{dI}{dt} &= \beta_H[SI]_H+\beta_C[SI]_C-\gamma I \\
\dfrac{d[SI]_H}{dt} &=  - \beta_C\dfrac{[SI]_H[SI]_C}{S}-\beta_H\kappa_{HH}\dfrac{[SI]^2_H}{S}-(\beta_H+\gamma)[SI]_H+\eta[\SI]_H\\
\dfrac{d[SS]_C}{dt} &= -2\beta_C\kappa_{CC}\dfrac{[SS]_C[SI]_C}{S}-2\beta_H\dfrac{[SS]_C[SI]_H}{S}
\end{aligned}
\end{equation}
where $[\SS]_H$ is given by \eqref{inv1} and 
\begin{align} 
[SI]_C &= \dfrac{\beta_C+\gamma+\delta}{\beta_C(1-\kappa_{CC})}S - [SS]_C+\left(\dfrac{(\beta_C+\gamma+\delta)}{\beta_C(\kappa_{CC}-1)}+\mu_C(\alpha^S+\alpha^I)\right)\left(\dfrac{[SS]_C}{\mu_C}\right)^{1/2} \label{inv2} \\
[\SI]_H &= \dfrac{(\beta_H+\gamma)(\eta+\gamma)}{\eta\beta_H(1-\kappa_{HH})}S - \dfrac{\eta+\gamma}{\eta}[SI]_H - [\SS]_H +\left(\dfrac{(\beta_H+\gamma)(\eta+\gamma)}{\eta\beta_H(\kappa_{HH}-1)} + \mu_H(\alpha^S+\alpha^I)\right)\left(\dfrac{[\SS]_H}{\mu_H}\right)^{1/2}. \label{inv3}
\end{align}

In the case of independent degree layers with Poisson degree distributions ($\kappa_{CC}=\kappa_{HH}=\kappa_{CH} = 1$, see Appendix \ref{app:final}), we have $[SS]_C=\mu_CS^2$ and $[\SS]_H=\mu_HS^2$.  We can further reduce the dimension of the system by one with the following invariant:
\begin{equation}
\log\left(\frac{S}{\alpha^S}\right) = -\R\left(\alpha^S+\alpha^I-S\right)+\dfrac{\beta_C}{\beta_C+\gamma+\delta}\dfrac{[SI]_C}{S} +\dfrac{\beta_H}{\beta_H+\gamma}\dfrac{[SI]_H}{S} +\dfrac{\beta_H\eta}{(\beta_H+\gamma)(\eta+\gamma)}\dfrac{[\SI]_H}{S}. \label{CHinv}
\end{equation}

\section{Discussion} 
	\label{sec:discuss}

\blue{The complexity of dynamic multilayer networks makes understanding the disease dynamics evolution on such structures  a challenge. Working with simplified  models, which are nevertheless capable of  retaining  most important  aspects  of network evolution and disease transmission, is essential for gaining biological insight into mechanisms underlying basic disease features such as invasion, persistence, and outbreak size.  In this work, we have developed a framework for modeling infectious diseases with multiple modes of transmission in the setting where the network changes in response to infection.}    Even in this seemingly complicated scenario, it is relatively straightforward to formulate a continuous-time stochastic process by considering transitions in the states of nodes and connected pairs of nodes.  \blue{However, the state space of the Markov process becomes unmanageable as the size of the network increases and analysis of the non-Markovian aggregate process is complicated.  Our main result, Theorem~\ref{thm:lln}, rigorously derives the large graph limit of the stochastic process on a layered configuration model graph and, thus, gives a simple model retaining key features of the epidemic process while being amenable to analysis.  Moreover, our results extend previous ones for the SIR process on a static, single-layer \red{configuration model} network \cite{Janson2014,Decreusefond2012}.}   

\blue{As in \cite{Janson2014,Decreusefond2012}, proof of Theorem \ref{thm:lln}, and expression of the limiting system in the general case, require   introduction of the  edge-based variable ($\Xth$).  However, in contrast to previous results, we have defined the limiting system here in terms of dyads via the function $\bar{\kappa}$, which also turns out to provide the large-graph-consistent approximation of triples.    Consequently,  we obtain a simple characterization of the class of degree distributions on random configuration model graphs  for which the model obtained via ordinary pair approximation \cite{Rand1999} coincides with the limiting system described by Theorem~\ref{thm:lln}.  \blue{The characterization criterion  may be formulated  in terms of  average and excess degrees ($\bar{\kappa}\equiv \kappa$) or, equivalently, in terms of the corresponding identity for the probability generating function.   For the single layer case,  this condition is satisfied by Poisson but also by binomial and negative binomial distributions.  In the non-PT degree distribution case, as our example in Section~\ref{sec:correct} illustrates}, numerical comparison of $\kappa$ and $\bar{\kappa}$ could potentially be used to quantify how well a pairwise model accurately reflects the relevant disease dynamics.}

\red{Evolving network structure in response to illness is a basic feature that is relevant for many diseases, including Ebola, which was the original motivation for this study.  The importance of transmission both within the community and to care-givers for Ebola has empirical support \cite{dowell1999,kratz2015,roels1999}, and these different transmission routes have been incorporated into modeling studies \cite{Drake2015,Gomes2014,Legrand2007,Merler2015,Pandey2014,Rivers2014,Tsanou2015,Valdez2015,Webb2014}.  
Funeral transmission of Ebola is additionally an important concern \cite{hewlett2003,maganga2014}, and many modeling studies have incorporated Ebola transmission through unsafe burial practices \cite{Legrand2007,Webb2014,Weitz2015}.  The general multi-layer framework we have presented here with activation and de-activation of edges is flexible and can be extended to include these different Ebola-specific considerations.  For example, we can consider a layer corresponding to disease transmission through funerals, with activation of edges in this layer occurring following entry into a death via disease class.  Similarly, the basic SIR states for the nodes presented here can be extended to incorporate additional states such an exposed (latent) class, which is important for Ebola \cite{Lekone2006}.  The 2014 West Africa Ebola outbreak prompted an outpouring of modeling studies, utilizing a variety of approaches including ODEs \cite{Fisman2014,Rivers2014,Tsanou2015}, stochastic processes \cite{Drake2015,Pandey2014}, metapopulations \cite{Merler2015,Valdez2015}, and contact networks \cite{Scarpino2014,Wells2015}.  Our work contributes a rigorous approach to understanding how the evolution of network structure in response to infection impacts disease dynamics on layered configuration model networks.}

\red{The modeling framework we present is flexible, for example being able accommodate arbitrary joint degree distributions with finite second moment for an arbitrary number of layers.   As in the single layer configuration model, however, wiring within each layer occurs at random.  Empirical networks often possess features such as community structure \cite{porter2009} or triad closure \cite{watts1998} that are not present in configuration model networks.  Despite this, the layered configuration model setting can still yield biological and epidemiological insights due to its tractability for analysis and statistical inference.  }
As demonstrated in the two-layer community-healthcare example in Section \ref{sec:CH}, the large graph limit we have derived is indeed tractable.  The basic reproduction number can be calculated, and its analysis provides insight into how the structure of the different layers and their coupling through edge dynamics affect disease invasion and the final size of the outbreak.  \red{Theorem \ref{thm:lln} and Corollary \ref{cor:PT} also provide reference points for examining the impact of more realistic network features such as clustering and community structure on disease spread on dynamic multilayer networks.}

\red{The general framework presented here} can be adapted to diseases with multiple modes of transmission (e.g. those with sexually and non-sexually transmitted infections).  Application of this framework to specific diseases, such as Ebola, will require investigation into parameter identifiability and statistical estimation methods.  The large graph limit derived here will aid such analysis and, in fact, suggests a hybrid approach in which the node state transitions remain stochastic but the dyads are approximated using the limiting differential equations (or, if possible, invariants such as those found in Section \ref{sec:CH}).  Even for large networks, the resulting Markov process approximation would allow for computationally inexpensive maximum likelihood estimation.  Finally, we mention that interventions (e.g. vaccination) can also be incorporated into this framework.  \blue{The relative simplicity of the limiting system  should  allow for evaluation of the impact of proposed interventions, for example via sensitivity analysis of $\R$ and the final outbreak size or via  methods of optimal control (e.g. optimize  vaccine distribution, see \cite{Lenhart2007}).  This could be critical for providing actionable recommendations to public health policymakers with the aim of curbing current epidemics or preventing future outbreaks.}

\appendix
\section{\blue{Summary of notation}}
	\label{app:notation}

\blue{The notation for the layered configuration model graph:}\\

\blue{
\begin{tabular}{ll}
$n$ & number of nodes \\
$r$ & number of layers in layered configuration model \\
$\bk = (k_1,...,k_r)$ & multi-degree with degree $k_i$ in layer $i$ \\
$\psi$ & probability generating function of the multivariate degree distribution \\
$\G_r(\psi,n)$ & layered configuration model with $n$ nodes, $r$ layers and degree distribution $\psi$ \\
$\mu_i$ & average $i$-degree \\
$\kappa_{ji}$ & normalized average excess $j$-degree of an $i$-neighbor \\
$\mu_i^S$ & average excess $i$-degree of a susceptible node \\
$\mu_{i|j}^{ex|SI}$ & average excess $i$-degree of a susceptible node chosen randomly \\ & as a $j$-neighbor of an infectious node \\
\end{tabular}\\[0.1in]
}

\blue{The notation for the $SIdaR(r,m)$ process:\\}

\blue{
\begin{tabular}{ll}
$m$ & number of deactivating layers \\
$\beta_j$ & rate of infection along $j$-edges \\
$\delta_j$ & rate of deactivation (drop) of $j$-edges \\
$\eta_j$ & rate of activation of $j$-edges \\
$\gamma$ & rate of recovery \\
$X^{SI,u}_j$, $X^{SS,u}_j$ & number of infectious, susceptible active $j$-neighbors of $u$, for susceptible $u$ \\
$X^{\SI,u}_j$, $X^{\SS,u}_j$ & number of infectious, susceptible deactivated $j$-neighbors of $u$, for susceptible $u$ \\
$X^S$, $X^I$, $X^R$ & number of nodes (of given disease status) \\
$\XSI$, $\XSS$ & number of active edges between nodes (of given status) \\
$\XSIT$, $\XSST$ & number of deactive edges between nodes (of given status) \\
$\XSdot$ & number of edges belonging to susceptible nodes \\
$\Xth$ & quantity defined by equation \eqref{eq:th} \\
$\bm{\F}^\X$, $\bm{\F}^\Xth$ & integrable function part of evolution of $\X$, $\Xth$ as in equations \eqref{eq:X}, \eqref{eq:intth} 
\end{tabular}\\[0.1in]
}

\blue{The notation for the large graph limit:\\}

\blue{
\begin{tabular}{ll}
$S$, $I$, $R$ & number of nodes (of given disease status) \\
$[SI]$, $[SS]$ & number of active edges between nodes (of given status) \\
$[\SI]$, $[\SS]$ & number of deactive edges between nodes (of given status) \\
$\th$ & large graph limit of $\Xth$ \\
$\bm{\alpha} = (\alpha^S$,...,$\alpha^{\SS})$ & initial condition of $\bx = (S,...,[\SS])$, scaled by $n$ \\
$\bm{\H}^\X$, $\bm{\H}^\th$ & integrable function for evolution of $\X$, $\X^\th$ in the large graph limit as in equation \eqref{eq:H} \\
$\bar{\kappa}_{jl}$ & function of network structure used in large graph limit, defined by equation \eqref{eq:kappabar} 
\end{tabular}
} 	
\section{Proof of the limit theorem} 
	\label{sec:proof}

In this section, we provide the proof of our main result, Theorem~\ref{thm:lln}, preceded by two lemmas.  The derivation of our results relies on the key observation, summarized in Remark \ref{rem:hg}, that in a finite graph the neighborhood of a susceptible node may be described by a certain multivariate hypergeometric distribution.  Lemma \ref{lem:XSdot} shows that $X^{S_\bk}$ and $\XSdot$ can be expressed in the limit as functions of $\Xth$ given by \eqref{eq:th}.  Lemma \ref{lem:Delta} shows that the dynamics of the scaled process on the finite graph converges, in the appropriate sense, to the dynamics described by the ODE system \eqref{eq:D} involving $\th$.
 Theorem \ref{thm:lln} then follows from Lemma \ref{lem:Delta} using Doob's  and Gronwall's inequalities.  

Recall that we take operations on vectors such as multiplication, division, integration and ordering to be coordinatewise.  We first provide an important remark about the layer $j$ neighborhood of a susceptible node of degree $\bk$, i.e. $(X_j^{SI,i}(t),X_j^{\SI,i}(t),X_j^{SS,i}(t),X_j^{\SS,i}(t))$ for $i \in S_\bk$.  The distribution of the neighborhood is critical when we consider the expectations of 
\[Q^{jl,(\bk)}_{i} = X_l^{SI,i}(X_j^{SS,i}-X_j^{SI,i}) \quad \text{and} \quad \tilde{Q}^{jl,(\bk)}_{i} = X_l^{SI,i}(X_j^{\SS,i}-X_j^{\SI,i}),\]
which are mixed moments with respect to the neighborhood distribution. Recall that we consider the evolution of the $SIdaR(r,m)$ process on a realization of the LCM random graph that has been generated by time $t=0$.  However, we could alternatively consider an equivalent process whereby the graph is revealed dynamically as infections occur, as in \cite{Decreusefond2012,Janson2014}. \blue{ In the latter process, a {\em susceptible node} $i \in S_\bk$ remains unpaired until it becomes infected.  Equivalently we could also  pair off all unpaired edges at any time $t>0$ (uniformly at random and independently in each layer,  according to the LCM construction) in order to define the neighborhood of node $i$.  The following remark is perhaps most easily understood by keeping this equivalent construction in mind and recalling that the  probability space  considered is that of all  random configurations, as described in Section \ref{sec:gen}.}

\begin{remark} \label{rem:hg}
For  $i\in S_\bk$ and $k_j \geq 1$,  and conditionally on the process  history up to time $t$,   the vector  $(X_j^{SI,i}(t),X_j^{\SI,i}(t),X_j^{SS,i}(t),X_j^{\SS,i}(t))$ follows the multivariate hypergeometric distribution with probability mass function 
\begin{align*}
P(X_j^{SI,i}(t)&=n_j^{SI},X_j^{\SI,i}(t)=n_j^{\SI},X_j^{SS,i}(t)=n_j^{SS},X_j^{\SS,i}(t)=n_j^{\SS}) \\
&=\frac{\binom{X_j^{SI}(t)}{n_j^{SI}}\binom{X_j^{SS}(t)}{n_j^{SS}}\binom{X_j^{\SI}(t)}{n_j^{\SI}}\binom{X_j^{\SS}(t)}{n_j^{\SS}}\binom{X_j^{S\bigcdot}(t)-X_j^{SI}(t)-X_j^{SS}(t)-X_j^{\SI}(t) - X_j^{\SS}(t)}{k_j-n_j^{SI}-n_j^{SS}-n_j^{\SI}-n_j^{\SS}}} {\binom{X_j^{S\bigcdot}(t)}{k_j}},
\end{align*}
 supported on the four-simplex $0\le n_j^{SI} + n_j^{\SI} + n_j^{SS} +n_j^{\SS}\le k_j$.  This implies 
 \[E_h[X_j^{SI,i}] = k_jX_j^{SI}/X_j^{S\b}\]
 where $E_h$ denotes expectation with respect to the hypergeometric distribution.  We also note that, \blue{based on the LCM construction}, the neighborhoods of $i$ in distinct layers are independent, i.e. 
 \[P(\bm{X^{SI,i}}(t)=\bm{n^{SI}},\bm{X^{SS,i}}(t)=\bm{n^{SS}}) = \prod_{l=1}^r P(X_l^{SI,i}(t)=n_l^{SI},X_l^{SS,i}(t)=n_l^{SS}).\]

It follows from the above that the mixed moments are given by
 \begin{equation}
  E_h\left[Q^{jl,(\bk)}_{i}\right] = k_lk_j \dfrac{X_l^{SI}(t)}{X_l^{S\b}(t)}\left(\dfrac{X_j^{SS}(t)}{X_j^{S\b}(t)}-\dfrac{X_j^{SI}(t)}{X_j^{S\b}(t)}\right) \label{eq:EhQ}
  \end{equation}
  for $l \neq j$ and 
   \begin{equation}
  E_h\left[Q^{jj,(\bk)}_{i}\right] = k_j(k_j-1) \dfrac{X_j^{SI}(t)}{X_j^{S\b}(t)}\left(\frac{X_j^{SS}(t)}{X_j^{S\b}(t)-1}-\frac{X_j^{SI}(t)-1}{X_j^{S\b}(t)-1}\right) -k_j \dfrac{X_j^{SI}(t)}{X_j^{S\b}(t)}. \label{eq:EhQjj}
  \end{equation}
  Likewise, for any $1 \le j,l \le r$,
 \begin{equation*}
  E_h\left[\tilde{Q}^{jl,(\bk)}_{i}\right] = k_lk_j \dfrac{X_l^{SI}(t)}{X_l^{S\b}(t)}\left(\dfrac{X_j^{\SS}(t)}{X_j^{S\b}(t)}-\dfrac{X_j^{\SI}(t)}{X_j^{S\b}(t)}\right).
  \end{equation*}
 \end{remark}
 
Keeping in mind this remark, we proceed to prove the first lemma, which shows that $X^{S_\bk}$ and $\XSdot$ can be expressed in the limit as functions of $\Xth$.

\begin{lemma}
\label{lem:XSdot} 
Assume \ref{A1} and $\sum_\bk ||\bk|| p_\bk <\infty$.  Then,
\begin{enumerate}[label=(\alph*)]
\item $\sup_{0<t\le T}|n^{-1}X^{S_\bk}-\alpha^Sp_\bk\Xth^\bk |\stackrel{P}{\rightarrow}0$ for any $\bk \geq 0$, and \label{lem:fixk}
\item $\sup_{0<t\le T} \norml n^{-1}\XSdot(t)-\alpha^S\Xth(t)\bm{\p\psi}(\Xth(t)) \normr \stackrel{P}{\rightarrow}0$. \label{lem:thpth} 
\end{enumerate}
\end{lemma}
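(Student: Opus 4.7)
The strategy is to show that $X_{S_\bk}(t)/n$ and $\alpha_S p_\bk \theta(t)^\bk$ satisfy, up to small remainders, the same integral equation driven by the stochastic coefficient $\sum_{l=1}^r \beta_l k_l X^l_{SI}(s)/X^l_{S\b}(s)$, and then to close the loop with Gronwall's inequality. First, I would write the Doob--Meyer decomposition of the pure-jump decreasing process $X_{S_\bk}$: the only transition is that a node $i\in S_\bk$ is infected, at rate $\sum_l \beta_l X^l_{SI,i}$, so
\[
X_{S_\bk}(t) = X_{S_\bk}(0) - \int_0^t \sum_{i\in S_\bk(s)} \sum_{l=1}^r \beta_l X^l_{SI,i}(s)\, ds + M_\bk(t)
\]
for a zero-mean local martingale $M_\bk$. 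Second, the pathwise chain rule applied to the definition \eqref{eq:th} of $\theta$ gives
\[
\alpha_S p_\bk \theta(t)^\bk = \alpha_S p_\bk - \int_0^t \alpha_S p_\bk \theta(s)^\bk \sum_{l=1}^r \beta_l k_l \frac{X^l_{SI}(s)}{X^l_{S\b}(s)}\, ds.
\]

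Setting $Z_\bk(t) := X_{S_\bk}(t)/n - \alpha_S p_\bk \theta(t)^\bk$ and subtracting the two equations above yields
\[
Z_\bk(t) = Z_\bk(0) - \int_0^t \sum_{l=1}^r \beta_l k_l \frac{X^l_{SI}(s)}{X^l_{S\b}(s)} Z_\bk(s)\, ds + R_\bk(t),
\]
where $R_\bk(t)$ is the scaled martingale $n^{-1}M_\bk(t)$ plus the time integral of $\sum_l \beta_l \bigl[n^{-1}\sum_{i\in S_\bk} X^l_{SI,i} - (X_{S_\bk}/n)\, k_l X^l_{SI}/X^l_{S\b}\bigr]$ (the ``hypergeometric fluctuation''). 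By Remark \ref{rem:hg}, this fluctuation integrand has conditional mean zero given the aggregate process history and conditional variance $O(1/n)$. Doob's $L^2$ inequality handles the martingale term, whose predictable quadratic variation is $O_P(n)$; a hypergeometric concentration bound combined with a union bound over the $O(n)$ jump times on $[0,T]$, and assumption \ref{A1} keeping $X^l_{S\b}$ bounded away from zero, together yield $\sup_{t\le T}|R_\bk(t)|\stackrel{P}{\to}0$. Since the coefficient $\sum_l \beta_l k_l X^l_{SI}/X^l_{S\b}$ is nonnegative and bounded by $\sum_l \beta_l k_l$ (using $X^l_{SI}\le X^l_{S\b}$), Gronwall's inequality gives
\[
\sup_{t\le T}|Z_\bk(t)| \le \bigl(|Z_\bk(0)| + \sup_{t\le T}|R_\bk(t)|\bigr)\exp\!\Bigl(T\sum_{l=1}^r \beta_l k_l\Bigr),
\]
and both the initial discrepancy (by \ref{A3} and the LCM construction) and $\sup|R_\bk|$ tend to zero in probability, proving part (a).

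For part (b), summing over $\bk$ and using the identity $\theta_j \p_j\psi(\theta) = \sum_\bk k_j p_\bk \theta^\bk$, one obtains
\[
\frac{X^j_{S\b}(t)}{n} - \alpha_S \theta_j(t)\,\p_j\psi(\theta(t)) = \sum_\bk k_j Z_\bk(t).
\]
I would split this sum at any truncation level $|\bk|\le K$: the finite head tends to zero uniformly in $t$ by part (a), while the tail is dominated coordinatewise by $\sum_{|\bk|>K} k_j (X_\bk(0)/n + p_\bk)$ (since $X_{S_\bk}\le X_\bk(0)$ and $\theta\le\one$), which can be made arbitrarily small using the moment hypothesis $\sum_\bk \|\bk\| p_\bk < \infty$ together with the convergence $X_\bk(0)/n\stackrel{P}{\to} p_\bk$ inherited from \ref{A3}.

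The main obstacle is the uniform-in-time control of the hypergeometric fluctuation contributing to $R_\bk$: pointwise concentration from Remark \ref{rem:hg} is routine, but passing to a supremum bound over $[0,T]$ requires handling the $O(n)$ discontinuities of the process jointly with ensuring $X^l_{S\b}$ stays away from zero---which is precisely where \ref{A1} enters. A secondary point of care is that the Gronwall constant $\exp(T\sum_l \beta_l k_l)$ in part (a) blows up as $|\bk|\to\infty$, so the tail control in part (b) must come from moments rather than from uniformly applying part (a).
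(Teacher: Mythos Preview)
Your argument is correct, but it takes a longer route than the paper's. The paper works in the dynamic-revelation filtration (cf.\ the paragraph before Remark~\ref{rem:hg}), where the compensator of $X_{S_\bk}$ is \emph{already} $-\int_0^t X_{S_\bk}(s)\sum_l \beta_l k_l X^l_{SI}(s)/X^l_{S\bullet}(s)\,ds$: the hypergeometric mean $E_h[X^l_{SI,i}]=k_l X^l_{SI}/X^l_{S\bullet}$ replaces the node-level sum $\sum_{i\in S_\bk}X^l_{SI,i}$ at the level of the compensator itself. Consequently your ``hypergeometric fluctuation'' term never appears, and the paper can argue directly that $X_{S_\bk}(t)-S(0)p_\bk\theta^\bk(t)$ is (essentially) a zero-mean martingale with quadratic variation $O(n)$; one application of Doob's inequality finishes part~(a) with no Gronwall step. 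Your choice to write the Doob--Meyer decomposition in the full fixed-graph filtration is what forces you to separate out and then recombine the fluctuation, and to pay the Gronwall constant $\exp(T\sum_l\beta_l k_l)$---which, as you correctly flag, is harmless for fixed $\bk$ but must be circumvented in part~(b).

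One point to tighten: your control of the fluctuation integral via ``hypergeometric concentration combined with a union bound over the $O(n)$ jump times'' is workable (sub-Gaussian tails for the hypergeometric give $\sup_s|G(s)|=O_P(\sqrt{n\log n})$, hence the scaled time integral is $O_P(\sqrt{\log n/n})$), but it is heavier machinery than needed. The cleaner observation---used by the paper in the analogous step of Lemma~\ref{lem:Delta}---is that the centered sum $\sum_{i\in S_\bk}X^l_{SI,i}-X_{S_\bk}k_l X^l_{SI}/X^l_{S\bullet}$ is \emph{itself} a c\`adl\`ag martingale with $O(n)$ quadratic variation, so Doob's $L^2$ inequality gives the uniform bound in one line, with no union bound or concentration tails required.

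Your part~(b) matches the paper's argument exactly: truncate at $\|\bk\|\le N$, apply part~(a) to the finite head, and use the first-moment hypothesis $\sum_\bk\|\bk\|p_\bk<\infty$ together with $n^{-1}X_{S_\bk}\le 2p_\bk$ (for large $n$) to make the tail uniformly small.
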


\begin{proof}
$(a)$. Note $X^{S_{\bk}}(t)=\sum_ {i=1}^n Z_i(t)$ where $Z_i(t)\in \{0,1\}$ indicates whether node $i$ is of degree $\bk$ and susceptible at time $t>0$. Recall from Remark \ref{rem:hg} that $EX_j^{SI,i} = k_jX_j^{SI}/X_j^{S\b}$ for $i \in S_\bk$.  We claim that 
 \begin{equation}
 EZ_i(t)=P(i \in S_\bk(t)) = n^{-1}S(0)p_\bk\Xth^\bk. \label{eq:EZi}
 \end{equation}
    Indeed, $EZ_i(t) = P(Z_i(t)=1) = P(i \in S_\bk(t)) = P(i \in S_\bk(t) | i \in S_\bk(0))P(i \in S_\bk(0))$
  where $P(i \in S_\bk(0))  = p_\bk n^{-1}S(0)$ by \ref{A3} and 
 \begin{equation*}
 P(i \in S_\bk(t) | i \in S_\bk(0))  = \exp \left(-\sum_{j=1}^r \beta_j \int_0^t E_h\left[ X_j^{SI,i}(s)\right]ds\right) =\exp\left(-\sum_{j=1}^r \beta_j \int_0^t k_j\dfrac{X_j^{SI}(s)}{X_j^{S\b}(s)}ds\right) =\Xth^\bk.
 \end{equation*}
Equation \eqref{eq:EZi} then implies that $\{X^{S_{\bk}}(t)-S(0)p_{\bk}\Xth^{\bk}(t)\}_{t\ge 0}$ is a c\`{a}dl\`{a}g martingale process with mean zero and finite variation.  By the triangle inequality, 
\begin{align*}
P & \left(\sup_{0<t\le T}\left |n^{-1}X^{S_{\bk}}(t)-\alpha^Sp_{\bk}\Xth^{\bk}(t)\right|>\ve\right) \\
& \le P \left(\sup_{0<t\le T}\left|n^{-1}X^{S_{\bk}}(t)-n^{-1}S(0)p_{\bk}\Xth^{\bk}(t)\right|>\frac{\ve}{2}\right) + P \left(\sup_{0<t\le T}\left|n^{-1}S(0)p_{\bk}\Xth^{\bk}(t)-\alpha^Sp_{\bk}\Xth^{\bk}(t)\right|>\frac{\ve}{2}\right).
\end{align*} 
The second term tends to zero by assumption \ref{A3} and for the first term we have, by Doob's martingale inequality, 
\begin{equation*}
P \left(\sup_{0<t\le T}\left|n^{-1}X^{S_{\bk}}(t)-n^{-1}S(0)p_{\bk}\Xth^{\bk}(t)\right|>\frac{\ve}{2}\right) \le \left(\dfrac{\ve}{2}\right)^{-2}n^{-2}\,Var\, X^{S_{\bk}}(T).
\end{equation*}
Since there are at most $n$ jumps for $X^{S_\bk}$ and each is of size one, it follows that the quadratic variation of $X^{S_{\bk}} = O(n)$, which gives the needed result.  
   
\paragraph*{$(b).$} 
By equation \eqref{eq:XSdot}, we can write 
\[n^{-1}\XSdot(t) - \alpha^S\Xth\bm{\p\psi}(\Xth(t)) = \sum_\bk \bk n^{-1}X^{S_\bk}(t) - \alpha^S \sum_\bk \bk p_\bk\Xth^\bk(t).\] 
Consider arbitrarily  large $N$ and $\ve >0$. By Markov's inequality, we have   
\begin{align*}
P & \left(\sup_n\sup_{0<t\le T}  \norml \sum_{||\bk|| > N} \bk n^{-1}X^{S_{\bk}}(t)- \alpha^S\bk p_{\bk} \Xth^{\bk}(t)\normr >\ve \right) \\
&\le \ve^{-1} \sum_{||\bk|| > N} ||\bk|| \sup_n\sup_{0<t\le T}E\left|n^{-1}X^{S_{\bk}}(t)- \alpha^Sp_{\bk} \Xth^{\bk}(t)\right| \nonumber \\ 
&  \le 3\ve^{-1} \sum_{||\bk|| > N} ||\bk|| p_{\bk}
\end{align*}
since $EX^{S_{\bk}}/n\le 2p_{\bk}$ for $n$ sufficiently large, $\alpha^S \le 1$, and we may apply the Monotone Convergence Theorem. So the tail of the sum is negligible since $N$ is arbitrary and, by assumption, $\sum_\bk ||\bk|| p_\bk <\infty$.  The result then follows since in $(a)$ we showed convergence for fixed $\bk$. 
\end{proof}

Before proceeding with the next lemma, we give a brief remark on boundedness of our variables and define some useful functions.

\begin{remark} \label{rem:bdd}
We note that $\alpha^S\Xth\bm{\p\psi}(\Xth) \le \bm{\p\psi}(\mathbf{1})$ and, for sufficiently large $n$, $n^{-1}\XSI \le n^{-1}\XSdot \le 2\bm{\p\psi}(\mathbf{1})$ (and likewise for $n^{-1}\XSS$).  By \ref{A1}, $n^{-1}\XSdot$ is  bounded away from 0 on $[0,T]$ for finite $T$ and, thus so is $\Xth$. Furthermore, by Lemma \ref{lem:XSdot}\ref{lem:thpth}, we can take the same lower bound for $\alpha^S\Xth\bm{\p\psi}(\Xth)$.  Let $\xi>0$ be a uniform lower bound for $n^{-1}\XSdot$, $\Xth$ and $\alpha^S\Xth\bm{\p\psi}(\Xth)$.  We will use the notation $[\bm{\xi},2\bm{\p\psi}(\one)]^r := [\xi_1,2\p_1\psi(\one)]\times \hdots \times [\xi_r,2\p_r\psi(\one)]$ and hence may write  $\XSdot/n \in [\bm{\xi},2\bm{\p\psi}(\one)]^r$.
\end{remark}

Let $\bm{\F}^\X$ and $\bm{\F}^\Xth$ be defined as in \eqref{Feqn} and \eqref{eq:Ftheta} and $\bm{\H}$ as in \eqref{eq:H}. Define $\bm{\Delta}(t) = (\bm{\Delta}^\X,\bm{\Delta}^\Xth)(t)$ where $\bm{\Delta}^\X = (\Delta^S,\Delta^I,\bm{\Delta^{SI}},\bm{\Delta^{\SI}},\bm{\Delta^{SS}},\bm{\Delta^{\SS}})$ and $\bm{\Delta}^\Xth$ are given by
\begin{equation}
\bm{\Delta}^\X(t) = n^{-1}\bm{\F}^\X(\X(t)) - \bm{\H}^\X(n^{-1}\X(t),\Xth(t)) \label{eq:Delta}
\end{equation}
and
\begin{equation}
\bm{\Delta}^\Xth(t) = \bm{\F}^\Xth(\XSI(t),\XSdot(t),\Xth(t)) - \bm{\H}^\Th(n^{-1}\XSI(t),\Xth(t)). \label{eq:Deltath}
\end{equation}
Lemma \ref{lem:Delta} shows that $\bm{\Delta}^\X$ and $\bm{\Delta}^\Xth$ tend to zero uniformly in probability.  
The convergence of $\bm{\Delta}^\Xth$ to zero will follow easily from Lemma \ref{lem:XSdot}.  The convergence of $\bm{\Delta}^\X$ to zero is less obvious  and its proof involves consideration of the empirical hypergeometric mixed moments, i.e. $\sum_{i \in S_k}Q^{jl,(\bk)}_i$ and $\sum_{i \in S_k}\tilde{Q}^{jl,(\bk)}_i$.  We will use the facts that,  in the limit, the hypergeometric mixed moments are approximately multinomial and we can replace $\XSdot$ with a function of $\Xth$ by Lemma \ref{lem:XSdot}.  Therefore, it will be convenient to define the following compensators. 
   
Let $C_{h}^{jl,(\bk)}: [0,T]\to \bR$ be given by 
 \begin{equation}
  C_{h}^{jl,(\bk)}(t) = k_lk_j \dfrac{X_l^{SI}(t)}{X_l^{S\b}(t)}\left(\dfrac{X_j^{SS}(t)}{X_j^{S\b}(t)}-\dfrac{X_j^{SI}(t)}{X_j^{S\b}(t)}\right), \qquad l \neq j, \label{eq:Ch}
  \end{equation}
and 
   \begin{equation}
C_{h}^{jj,(\bk)}(t) = k_j(k_j-1) \dfrac{X_j^{SI}(t)}{X_j^{S\b}(t)}\left(\frac{X_j^{SS}(t)}{X_j^{S\b}(t)-1}-\frac{X_j^{SI}(t)-1}{X_j^{S\b}(t)-1}\right) -k_j \dfrac{X_j^{SI}(t)}{X_j^{S\b}(t)}, \label{eq:Chjj}
  \end{equation}
so that the hypergeometric mixed moment in equation \eqref{eq:EhQ} is given by
\begin{equation}
E_h\left[Q^{jl,(\bk)}_i \right] = C^{jl,(\bk)}_{h}(t). \label{eq:EhCh}
\end{equation}

We also define a function related to the multinomial distribution, $C_{m}^{jl,(\bk)}: [0,T] \times[\bm{\xi},2\bm{\partial\psi}(1)]^r \to \bR$, which is given by
 \begin{equation}
  C_{m}^{jl,(\bk)}(t,\bm{z}(t)) = k_jk_l \dfrac{n^{-2}X_l^{SI}(t)}{z_l(t)}\left(\dfrac{X_j^{SS}(t)}{z_j(t)}-\dfrac{X_j^{SI}(t)}{z_j(t)}\right), \qquad l \neq j, \label{eq:Ch}
  \end{equation}
and
\begin{equation}
C_{m}^{jj,(\bk)}(t,\bm{z}(t)) = k_j(k_j-1)\dfrac{n^{-2}X_j^{SI}}{z_j(t)}\left(\dfrac{X_j^{SS}}{z_j(t)}-\dfrac{X_j^{SI}}{z_j(t)}\right)-k_j\dfrac{n^{-1}X_j^{SI}}{z_j(t)}. \label{eq:EmQjj}
\end{equation}
(Note that  if quantities  $n^{-1}X_j^{SI}/z_j(t)$ and $n^{-1}X_j^{SS}/z_j(t)$ were actual probabilities then $C_m^{jl,(\bk)}(t,\bm{z}(t))$ would be a mixed moment of  the multinomial distribution.)
Observe that there exists $L>0$ such that 
\begin{equation}
C_{m}^{jl,(\bk)} \leq L||\bk||^2 \label{eq:Cbdd}
\end{equation}
for any $j,l$ (and uniformly in $n$) sinsce the domain of $z$ is bounded away from 0 and $n^{-1}\XSI$, $n^{-1}\XSS$ are uniformly bounded above by Remark \ref{rem:bdd}.  It also follows that $C_m^{jl,(\bk)}$ is Lipschitz continuous in $\bm{z}$.

\begin{lemma}\label{lem:Delta} 
Assume $\ref{A1}-\ref{A2}$.  Then, 
\begin{enumerate}[label=(\alph*)]
\item $\sup_{0<t\le T} ||\bm{\Delta}^\Xth(t)|| \stackrel{P}{\rightarrow}0$, and
\item $\sup_{0<t\le T} ||\bm{\Delta}^\X(t)|| \stackrel{P}{\rightarrow}0.$
\end{enumerate}
\end{lemma}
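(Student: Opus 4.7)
Part (a) is a direct algebraic manipulation. Writing
\[\Delta^j_\Th(t) \;=\; \beta_j\, n^{-1}X^j_{SI} \cdot \dfrac{n^{-1}X^j_{S\b} - \alpha_S \th_j \p_j\psi(\th)}{(n^{-1}X^j_{S\b})\,\alpha_S \p_j\psi(\th)},\]
both denominators are uniformly bounded away from zero on $[0,T]$ and the prefactor is uniformly bounded by Remark~\ref{rem:bdd}, so the numerator vanishing uniformly in probability (Lemma~\ref{lem:XSdot}(b)) yields (a).

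For part (b), first observe $\Delta_S \equiv \Delta_I \equiv 0$ since $n^{-1}\F_S$ and $n^{-1}\F_I$ already match $\H_S$ and $\H_I$ exactly. The remaining dyad components share a common structure: they involve empirical sums over susceptible nodes of the form $n^{-1}\sum_{i\in S}\sum_l \beta_l Q^{jl,(\bk_i)}_i$ (or the $\tilde Q$ analog), plus linear aggregate terms $\H$ already matches. I focus on $\Delta^j_{SI}$; the others are handled identically. The plan is to partition $S = \bigcup_\bk S_\bk$, writing $\sum_\bk n^{-1}\sum_{i \in S_\bk} Q^{jl,(\bk)}_i$, truncate at $\|\bk\|\le N$ for large $N$, and verify the tail vanishes as $N\to\infty$ uniformly in $n$ and $t$ using $|Q^{jl,(\bk)}_i|\le \|\bk\|^2$, the $L^1$ bound $E|S_\bk|/n\le 2p_\bk$ from the proof of Lemma~\ref{lem:XSdot}, and assumption~\ref{A2}.

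For each fixed $\bk$ with $\|\bk\|\le N$, Remark~\ref{rem:hg} identifies $E_h[Q^{jl,(\bk)}_i] = C_h^{jl,(\bk)}$, so I apply Doob's martingale inequality to the compensated aggregate empirical sum (a pure-jump process with jumps of size $O(\|\bk\|^2/n)$) to conclude
\[\sup_{0<t\le T}\Bigl|\,n^{-1}\!\!\sum_{i\in S_\bk}\!Q^{jl,(\bk)}_i(t) - (n^{-1}|S_\bk(t)|)\,C_h^{jl,(\bk)}(t)\,\Bigr|\stackrel{P}\to 0.\]
Next, Lemma~\ref{lem:XSdot}(b), Remark~\ref{rem:bdd}, and the Lipschitz continuity of $C_m^{jl,(\bk)}$ in $z$ let me replace $C_h^{jl,(\bk)}$ by $C_m^{jl,(\bk)}$ evaluated at the deterministic limit $\alpha_S\th\,\p\psi(\th)$; Lemma~\ref{lem:XSdot}(a) lets me replace $n^{-1}|S_\bk|$ by $\alpha_S p_\bk \th^\bk$. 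Summing over $\bk$ and invoking the pgf identities
\[\sum_\bk k_j k_l p_\bk \th^\bk = \th_j\th_l\p^2_{jl}\psi(\th),\quad \sum_\bk k_j(k_j-1)p_\bk\th^\bk = \th_j^2\p^2_{jj}\psi(\th),\quad \sum_\bk k_j p_\bk\th^\bk = \th_j\p_j\psi(\th),\]
together with $\sum_\bk p_\bk \th^\bk = \psi(\th)$ (so $n^{-1}X_S\to\alpha_S\psi(\th)$ by summing Lemma~\ref{lem:XSdot}(a)), collapses the expression to $\sum_l \beta_l \bar\kappa_{jl}(\th) n^{-1}X^l_{SI}(n^{-1}X^j_{SS} - n^{-1}X^j_{SI})/(n^{-1}X_S) - \beta_j n^{-1}X^j_{SI}$, which is precisely the non-linear part of $\H^j_{SI}(n^{-1}X,\th)$. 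The extra $-\beta_j n^{-1}X^j_{SI}$ correction is produced by the $-k_j X^j_{SI}/X^j_{S\b}$ term in $C_h^{jj,(\bk)}$, which distinguishes sampling without replacement (factor $k_j(k_j-1)$) from the naive $k_j^2$. The analogous argument with the $\tilde Q$ moments from Remark~\ref{rem:hg} handles $\Delta^j_{SS}$, $\Delta^j_{\SI}$, and $\Delta^j_{\SS}$.

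The main obstacle is the uniform-in-time concentration step: distinct nodes $i\in S_\bk$ share graph structure, so a naive Chebyshev bound fails; one must work with the semi-martingale decomposition of the aggregate empirical sum, controlling its martingale part by Doob's inequality and matching its predictable compensator to $C_h^{jl,(\bk)}$ via Remark~\ref{rem:hg}. A secondary obstacle is carrying the $\bk$-truncation uniformly in $n$ and $t$, which is the role of \ref{A2}: it ensures the second-moment tail $\sum_{\|\bk\|>N}\|\bk\|^2 p_\bk \to 0$, providing the only integrability used beyond the first-moment bounds already in Lemma~\ref{lem:XSdot}.
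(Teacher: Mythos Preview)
Your proposal is correct and follows essentially the same approach as the paper: part (a) via the bounded-denominator/Lipschitz argument together with Lemma~\ref{lem:XSdot}(b), and part (b) via the degree-stratified decomposition $S=\bigcup_\bk S_\bk$, a tail truncation controlled by \ref{A2}, Doob's inequality applied to the compensated sum $\sum_{i\in S_\bk}Q^{jl,(\bk)}_i - X_{S_\bk}C_h^{jl,(\bk)}$, and then successive replacements $C_h\to C_m(\cdot,n^{-1}X_{S\bullet})\to C_m(\cdot,\alpha_S\th\,\p\psi(\th))$ and $n^{-1}X_{S_\bk}\to\alpha_S p_\bk\th^\bk$ via Lemma~\ref{lem:XSdot}. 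The only cosmetic difference is that the paper expands $\H^j_{SI}$ into its $\bk$-indexed pieces at the outset (so the pgf identities are used to \emph{decompose} rather than to \emph{reassemble}), whereas you sum the pgf identities at the end; the substance is identical.
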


\begin{proof} $(a)$.
Define $\bm{\J}: [0,T]\times [\bm{\xi},2\bm{\p\psi}(\one)]^r \to \bR^r$ given by $\bm{\J}(t,\bm{z}(t)) = -\beta\Xth(t)n^{-1}\XSI(t)/\bm{z}(t)$.
By Remark \ref{rem:bdd} and \ref{A1}, $(t,n^{-1}\XSdot)$ and $(t,\alpha^S\Xth\bm{\p\psi}(\Xth))$ are in the domain of $\bm{\J}$ for $t \in [0,T]$. 
By definition of $\bm{\Delta}^\Xth$ in \eqref{eq:Deltath},
\begin{equation*}
\bm{\Delta}^\Xth(t) = -\beta \dfrac{\XSI(t)\Xth(t)}{\XSdot(t)} + \beta\dfrac{n^{-1}\XSI(t)}{\alpha^S\bm{\p\psi}(\Xth(t))} = \bm{\J}(t,n^{-1}\XSdot(t))-\bm{\J}(t,\alpha^S\Xth(t)\bm{\p\psi}(\Xth(t))).
\end{equation*}
Since $\bm{\J}$ is Lipschitz continuous in $\bm{z}$, 
\begin{align*}
\sup_{0<t\le T} ||\bm{\Delta}^\Xth(t)|| &= \sup_{0<t\le T} ||\bm{\J}(t,n^{-1}\XSdot(t))-\bm{\J}(t,\alpha^S\Xth(t)\bm{\p\psi}(\Xth(t))) || \\
& \le L_1 \sup_{0<t\le T} ||n^{-1}\XSdot(t) - \alpha^S\Xth(t)\bm{\p\psi}(\Xth(t))||
\end{align*}
for some $L_1>0$.  The result then follows from Lemma \ref{lem:XSdot}\ref{lem:thpth} since \ref{A2} implies $\sum_\bk ||\bk|| p_\bk < \infty$.

Regarding part $(b)$  note that, by definition in \eqref{eq:Delta}, $\Delta^S = \Delta^I = 0$.  We will show that $\sup_{0<t\le T} ||\bm{\Delta^{SI}}(t)|| \stackrel{P}{\rightarrow}0$ as it follows similarly for $\bm{\Delta^{\SI}}$, $\bm{\Delta^{SS}}$ and $\bm{\Delta^{\SS}}$ and together these imply $\sup_{0<t\le T} ||\bm{\Delta}^\X(t)|| \stackrel{P}{\rightarrow}0$. In fact, we observe that $\bm{\Delta^{SI}} = (\Delta_1^{SI},\hdots,\Delta_r^{SI})$ and so it suffices to show  
 \begin{equation}
 \sup_{0<t\le T} |\Delta_j^{SI}(t)| \stackrel{P}{\rightarrow}0 \label{eq:DeltaSItozero}
\end{equation} 
for $j=1,\hdots,r$.  

Let $1 \le j \le r$.  We can rewrite $\Delta_j^{SI}$ as
 \begin{align*}
\Delta_j^{SI} &= n^{-1}\F_j^{SI}(\XSI,\XSS) - \H_j^{SI}(n^{-1}\XSI,n^{-1}\XSS,\Xth) \\
&= n^{-1}\sum_{i \in S}\left(\sum_{l=1}^r \beta_lX_l^{SI,i} (X_j^{SS,i}-X_j^{SI,i})\right)- \sum_{l=1}^r\left[ \beta_ln^{-2}X_l^{SI} (X_j^{SS}-X_j^{SI})\frac{\p^2_{jl}\psi(\Xth)}{\alpha^S\p_j\psi(\Xth)\p_l\psi(\Xth)}\right]\\ &\qquad +\beta_j n^{-1}X_j^{SI} \\
 &= \sum_\bk \left[ n^{-1}\sum_{i\in S_\bk} \left(\sum_{l=1}^r \beta_lX_l^{SI,i} (X_j^{SS,i}-X_j^{SI,i})\right) - \sum_{l \neq j} \beta_l n^{-2} X_l^{SI}\left(X_j^{SS}-X_j^{SI}\right) \dfrac{ k_jk_lp_\bk\Xth^\bk}{\alpha^S \xth_j \xth_l\p_j\psi(\Xth)\p_l\psi(\Xth)} \non \right. \\
 & \left. \qquad +  \beta_j n^{-2} X_j^{SI}\left(X_j^{SS}-X_j^{SI}\right) \dfrac{ k_j(k_j-1)p_\bk \Xth^\bk}{\alpha^S(\xth_j\p_j\psi(\Xth))^2} + \beta_j n^{-1}X_j^{SI} \dfrac{k_j p_\bk \Xth^\bk}{\xth_j\p_j\psi(\Xth)} \right]\\
 &= \sum_\bk \left[n^{-1}\sum_{i\in S_\bk} \left(\sum_{l=1}^r \beta_lX_l^{SI,i} (X_j^{SS,i}-X_j^{SI,i})\right) - \sum_{l=1}^r\beta_l\alpha^S p_\bk\Xth^\bk C_{m}^{jl,(\bk)}(t,\alpha^S\Xth\bm{\p\psi}(\Xth))\right].
\end{align*}
Thus, we define
\begin{equation*}
\Delta^{SI}_{jl,(\bk)}(t) = n^{-1}\sum_{i\in S_\bk} \left( \beta_lX_l^{SI,i} (X_j^{SS,i}-X_j^{SI,i})\right) - \beta_l\alpha^S p_\bk \Xth^\bk C_{m}^{jl,(\bk)}(t,\alpha^S\Xth\bm{\p\psi}(\Xth))
\end{equation*}
so that $\Delta_j^{SI} = \sum_{l=1}^r \sum_\bk  \Delta^{SI}_{jl,(\bk)}$.  Hence, it suffices to show \begin{equation}\sup_{0<t\le T} |\sum_\bk\Delta^{SI}_{jl,(\bk)}(t)| \stackrel{P}{\rightarrow}0\qquad \text{for  $j,l = 1,\hdots,r$.}\label{eq:Djl}\end{equation} 
This is done in what follows in  two separate steps. Consider an arbitrary pair $(j,l)$  $1\le j,l\le r.$
 We first show that as $N\to \infty$ 
\begin{equation} \label{eq:Deltatail} 
\sup_n\sup_{0<t\le T} \left| \sum_{||\bk||>N}\Delta^{SI}_{jl,(\bk)}(t) \right |  \stackrel{P}{\to}0.
\end{equation}
To this end, observe that  
\begin{align}
& \left| n^{-1}\sum_{||\bk||>N}\sum_{i\in S_\bk} \beta_l  X_l^{SI,i}(t)(X_j^{SS,i}(t)-X_j^{SI,i}(t)) \right| \non \\
& \le n^{-1} \sum_{||\bk||>N}\sum_{i \in S_\bk}| \beta_l k_l k_j | \le  n^{-1}\beta_lC'  \sum_{||\bk||>N} ||\bk||^2 X_{S_k}(t)  \le 2\beta_lC'  \sum_{||\bk||>N} ||\bk||^2 p_k   \label{mmtail}
\end{align}
for some $C'>0$ since $X_{S_k}/n\le 2p_k$ for $n$ sufficiently large.  From the bound on $C_{m}^{jl,(\bk)}$ in \eqref{eq:Cbdd}, we also have
\begin{align} 
\left| \sum_{||\bk||>N}\beta_l\alpha^S p_\bk\Xth^\bk C_{m}^{jl,(\bk)}(t,\alpha^S\Xth\bm{\p\psi}(\Xth)) \right | &\le  \beta_l L \sum_{||\bk|| > N}  ||\bk||^2p_\bk.  \label{Htail}
\end{align}
Then \eqref{eq:Deltatail} follows from \eqref{mmtail} and \eqref{Htail} together with \ref{A2}.

 Next we  show that $\sup_{0<t\le T}\left| \Delta^{SI}_{jl,(\bk)}(t) \right| \stackrel{P}{\to}0$ for any $\bk$.  
We write 
\begin{align}
| \Delta^{SI}_{jl,(\bk)}(t) |  &= \left| n^{-1}\sum_{i\in S_\bk} \left(\beta_lX_l^{SI,i} (X_j^{SS,i}-X_j^{SI,i})\right) - \beta_l\alpha^S p_\bk\Xth^\bk C_{m}^{jl,(\bk)}(t,\alpha^S\Xth\bm{\p\psi}(\Xth)) \right| \non \\
&\le  n^{-1} \beta_l \left|  \sum_{i \in S_\bk}  Q^{jl,(\bk)}_i -X_{S_k}C^{jl,(\bk)}_h(t) \right| \label{term1} \\
& \qquad +   n^{-1} \beta_l X^{S_\bk} \left|C^{jl,(\bk)}_h(t)-C_{m}^{jl,(\bk)}(t,n^{-1}\XSdot)\right| \label{term2} \\
& \qquad + \beta_l \left | n^{-1}X^{S_\bk}C_{m}^{jl,(\bk)}(t,n^{-1}\XSdot) - \alpha^Sp_\bk\Xth^\bk C_{m}^{jl,(\bk)}(t,n^{-1}\XSdot) \right|  \label{term3} \\
& \qquad + \beta_l \alpha^S p_\bk \Xth^\bk  \left| C_{m}^{jl,(\bk)}(t,n^{-1}\XSdot) - C_{m}^{jl,(\bk)}(t,\alpha^S\Xth\bm{\p\psi}(\Xth))\right| \label{term4}
\end{align}
and we will show that each of these terms tends to zero uniformly in probability.

By Remark \ref{rem:hg} and equation \eqref{eq:EhCh}, the process $M^{jl,(\bk)}_{h}(t)=\sum_{i \in S_\bk} Q^{jl,(\bk)}_i - X_{S_k}C^{jl,(\bk)}_h(t)$
is a zero-mean, piecewise-constant  c\`{a}dl\`{a}g martingale that jumps only if infection/recovery of a node of degree $\bk$ (or a neighbor of a node of degree $\bk$) occurs or activation/drop of a $j$-edge or $l$-edge belonging to a node of degree $\bk$ occurs.  Recall that, for each layer, either activation or drops are possible, not both.  Consider events impacting a node $u \in S_\bk$.  For infection or recovery events, of which there are at most $2(1+k_j+k_l)$ corresponding to infection and recovery of $u$ itself or one of its $j$- or $l$-neighbors), the jump size is at most $k_jk_l$.  For deactivation and activation events of an $l$ or $j$-edge, of which there are at most $k_l+k_j$ affecting $u$, the jump size is also at most $k_jk_l$.  Recall that the number of nodes of degree $\bk$ is approximately $np_\bk$ for large $n$.  The quadratic variation of $M^{jl,(\bk)}_{h}(t)$ is the sum of its  squared jumps (see, e.g.,\cite{Andersson2000} Chapter 9) and, thus, satisfies 
 \[[M^{jl,(\bk)}_{h}](t)=\sum_{s\le t} (\delta M^{jl,(\bk)}_{h}(s))^2 \le 2np_\bk(1+k_j+k_l)(k_jk_l)^2+np_\bk(k_l+k_j)(k_jk_l)^2 \le L_4||\bk||^5 n\]
for $0 < t \le T <\infty$ and some $L_4>0$.
  Since $E[M^{jl,(\bk)}_{h}](t)=E(M^{jl,(\bk)}_{h}(t))^2=O(n)$, Doob's martingale inequality implies
 \[\sup_{0<t\le T} \left| n^{-1} M^{jl,(\bk)}_{h}(t)\right|\stackrel{P}{\to}0,\]
 i.e. the term in \eqref{term1} tends to zero uniformly in probability.

In consideration of the term in \eqref{term2}, we note that $n^{-1}X_{S_k} \le 1$ and $C^{jl,(\bk)}_h(t) = C_{m}^{jl,(\bk)}(t,n^{-1}\XSdot)$ for $j \neq l$.  For the case $l=j$, we have
\begin{gather*}\sup_{0<t\le T}\left|C^{jj,(\bk)}_h(t)-C_{m}^{jj,(\bk)}(t,n^{-1}\XSdot)\right| = \sup_{0<t\le T}\left| k_j(k_j-1)\dfrac{X_j^{SI}(t)}{X_j^{S\b}(t)}\left(\dfrac{X_j^{S\b}(t)+X_j^{SS}(t)-X_j^{SI}(t)}{X_j^{S\b}(t)(X_j^{S\b}(t)-1)} \right) \right| \\ \le \dfrac{2L'||\bk||^2}{X_j^{S\b}(T)-1}\end{gather*}
for some $L'>0$ and since $X_j^{S\b}(t)$ is non-increasing on $[0,T]$.  Thus, the term in \eqref{term2} tends to zero uniformly in probability by \ref{A1}. 

For the term in  \eqref{term3}, we observe
\begin{gather*} \sup_{0<t\le T}\left | n^{-1}X^{S_\bk}C_{m}^{jl,(\bk)}(t,n^{-1}\XSdot) - \alpha^Sp_\bk\Xth^\bk C_{m}^{jl,(\bk)}(t,n^{-1}\XSdot) \right| \\  \le L||k||^2\sup_{0<t\le T} \left|n^{-1}X^{S_\bk}-\alpha^Sp_\bk\Xth^\bk\right| \stackrel{P}{\rightarrow}0 \end{gather*}
by the bound on $C_{m}^{jl,(\bk)}$ in \eqref{eq:Cbdd} and Lemma \ref{lem:XSdot}\ref{lem:fixk}.

Finally, since $C_{m}^{jl,(\bk)}(t,\bm{z}(t))$ is Lipschitz continuous in $\bm{z}$, we have
 \[\sup_{0<t\le T}\left|C_{m}^{jl,(\bk)}\left(t,n^{-1}\XSdot\right) -C_{m}^{jl,(\bk)}(t,\alpha^S\Xth\bm{\p\psi}(\Xth)) \right| \le L'\sup_{0<t\le T} \norml n^{-1}\XSdot-\alpha^S\Xth\bm{\p\psi}(\Xth)\normr \]
 for some $L'>0$ and so the term in  \eqref{term4} tends to zero uniformly in probability by Lemma \ref{lem:XSdot}\ref{lem:thpth}.  Therefore, recalling also \eqref{eq:Deltatail} we conclude that \eqref{eq:Djl} holds and hence
 \eqref{eq:DeltaSItozero} follows. 
 \end{proof}

We may now complete the derivation of Theorem \ref{thm:lln} via Gronwall's  inequality (see, e.g., \cite{Andersson2000}).

\paragraph*{\em Proof of Theorem \ref{thm:lln}.}
Recall the definition of $\bm{\Delta} = (\bm{\Delta}^\X,\bm{\Delta}^\Xth)$ from equations \eqref{eq:Delta} and \eqref{eq:Deltath}. Note that, by equations  \eqref{eq:X} and \eqref{eq:intth},
 \[(\X(t)/n,\Xth(t))=(\X(0)/n,\Xth(0))+\int_0^t \bm{\H}(\X(s)/n,\Xth(s)) ds+ \bm{\E}(t)\]
where 
\[\bm{\E}(t)=(n^{-1}\bm{M}(t),0) + \int_0^t \bm{\Delta}(s)ds,\]
 where $\bm{M}(t)=(M^S,M^I,\bm{M^{SI}},\bm{M^{\SI}},\bm{M^{SS}},\bm{M^{\SS}})(t)$. Note that  each coordinate of  $\bm{M}(t)$ 
 is a  pure jump,  c\`{a}dl\`{a}g, zero mean, martingale process. Consider the process  $M_j^{SI}$ which, by equation \eqref{eq:X},  jumps  only  if  infection of a node, recovery of a node or a $j$-edge drop/activation  occurs at time $s$.  Recall that, for each $j$, either activations or drops are possible, not both.  Consider events corresponding to a node of degree $\bk$.  For infection and recovery events  the  jump size,  $\delta M_j^{SI}(s)$, is not greater than that node's $j$-degree, and  for activation and deactivation events, of which there are at most $2k_j$ affecting that node, the jump size is one.  Since the number of nodes of degree $\bk$ is approximately  $n p_\bk$ for large $n$, the corresponding quadratic variation process satisfies 
\[E[M_j^{SI}](t)=E\sum_{s\le t}(\delta M_j^{SI}(s))^2\le  2n\sum_\bk  k_j^2 p_\bk + 2n\sum_{\bk}k_jp_\bk \le 4n\sum_{\bk} k_j^2p_\bk =O(n)\]
 by \ref{A2}. Consequently, Doob's martingale inequality implies 
$\sup_{0<t\le T} ||n^{-1}\bm{M^{SI}}(t) ||\stackrel{P}{\to}0$. A similar argument applies also to $\bm{M^{\SI}}$, $\bm{M^{SS}}$, and $\bm{M^{\SS}}$ as well as $M_S$ and $M_I$, both of which make only unit jumps. Since by Lemma~\ref{lem:Delta} we have $\sup_{0<t\le T}||\bm{\Delta}(t)||\stackrel{P}{\to} 0$, this implies $\sup_{0<t\le T}||\bm{\E}(t)||\stackrel{P}{\to} 0$.

 Note that  $\bm{\H}$ is  a (vector valued) Lipschitz continuous  function on its domain, which we can take to be $[0,1]^2 \times ([\bm{\xi},2\bm{\p\psi}(\one)]^r)^4 \times [\bm{\xi},\one]^r$ by Remark \ref{rem:bdd}. Together with Gronwall's inequality, this implies
\begin{align*}
&\sup_{0<s\le t} ||(\X/n,\Xth)(t)-(\bx,\th)(t)||\\ 
&\le ||(\X/n,\Xth)(0)-(\bx,\th)(0)||+ \sup_{0<s\le t}\norml \int_{0}^t [\bm{\H}((\X/n,\Xth)(s))-\bm{\H}((\bx,\th)(s))]ds \normr + \sup_{0<s\le t} ||\bm{\E}(s) ||\\
&\le ||(\X/n,\Xth)(0)-(\bx,\th)(0)||+ L\int_{0}^t \sup_{0<s\le t}||(\X/n,\Xth)(s)-(\bx,\th)(s) ||ds  + \sup_{0<s\le t} ||\bm{\E}(s) ||\\
&\le \left(||(\X/n,\Xth)(0)-(\bx,\th)(0)|| + \sup_{0<s\le t} ||\bm{\E}(s) ||\right) e^{Lt}\stackrel{P}{\to} 0,
 \end{align*} 
for some $L>0$, since the first term in the parenthesis tends to zero by \ref{A3} and \eqref{A4}. The assertion follows when we take $t=T<\infty$.

\section{Equivalence with edge-based multiple modes of transmission model (Proof of Corollary \ref{cor:EBequiv}) } 	\label{app:equiv}

We provide in this section the proof of Corollary \ref{cor:EBequiv}, i.e. equivalence of the system \eqref{eq:D} with the edge-based model with multiple modes of transmission given by system \eqref{eq:EB}.  We will use bar notation to denote the variables of the edge-based model.   

First, we observe that $S(0) = \alpha_S\psi(\th(0)) = \alpha_S$ and 
\[\dfrac{d(\alpha_S\psi(\th))}{dt} = \alpha_S\sum_j \p_j\psi(\th)\dfrac{d\theta_j}{dt} = -\sum_j \beta_j[SI]_j,\]
i.e. $\alpha_S\psi(\th)$ satisfies the same differential equation as $S$.  Therefore, by the uniqueness of the ODE solution,
\begin{equation}
S = \alpha_S\psi(\th). \label{eq:DS}
\end{equation}

Secondly, we claim that
\begin{equation}
[SS]_j = \alpha_S^2\dfrac{(\p_j\psi(\th))^2}{\p_j\psi(\one)}. \label{eq:DSS}
\end{equation}
Indeed, $[SS]_j(0) = \alpha_S^2\p_j\psi(\one)$ and 
\begin{align*}
\dfrac{d}{dt}\left(\alpha_S^2\dfrac{(\p_j\psi(\th))^2}{\p_j\psi(\one)}\right) &= \dfrac{2\alpha_S^2}{\p_j\psi(\one)} \p_j\psi(\th)\dfrac{d}{dt}\p_j\psi(\th) \\
&= -\dfrac{2\alpha_S\p_j\psi(\th)}{\p_j\psi(\one)} \sum_l\beta_l\p^2_{jl}\psi(\th)\dfrac{[SI]_l}{\p_l\psi(\th)} \\
&=-2\sum_l\beta_l\bar{\kappa}_{jl}(\th)\dfrac{[SI]_l}{S}\left(\alpha_S^2\dfrac{(\p_j\psi(\th))^2}{\p_j\psi(\one)}\right),
\end{align*}
which is  the same differential equation as that satisfied by $[SS]_j$, which proves \eqref{eq:DSS}.

We define
\[\phi^{I,j} = \theta_j -\alpha_S\dfrac{\p_j\psi(\th)}{\p_j\psi(1)} - \frac{\gamma}{\beta_j}(1-\theta_j) - \alpha_R, \qquad j = 1,\hdots,r.\]
Then, $\phi^{I,j}(0) = 1-\alpha_S-\alpha_R =\alpha_I$.  Moreover,
\begin{align*}
\dfrac{d\phi^{I,j}}{dt} &= \dfrac{d\theta_j}{dt} - \dfrac{\alpha_S}{\p_j\psi(1)}\sum_l \p^2_{jl}\psi(\th)\dfrac{d\theta_l}{dt} + \dfrac{\gamma}{\beta_j}\dfrac{d\theta_j}{dt} \\
&= \dfrac{1}{\p_j\psi(\one)}\sum_l \beta_l\dfrac{\p^2_{jl}\psi(\th)}{\p_l\psi(\th)}[SI]_l - (\beta_j+\gamma)\dfrac{[SI]_j}{\alpha_S\p_j\psi(\th)} \\
&= \dfrac{1}{\p_j\psi(\one)}\sum_l \beta_l\bar{\kappa}_{jl}(\th)\dfrac{\p_j\psi(\th)}{\psi(\th)}[SI]_l - (\beta_j+\gamma)\dfrac{[SI]_j}{\alpha_S\p_j\psi(\th)}.
\end{align*}
We now show that $\frac{[SI]_j}{\alpha_S\p_j\psi(\th)}$ satisfies the same differential equation:
\begin{align*}
\dfrac{d}{dt}\left(\dfrac{[SI]_j}{\alpha_S\p_j\psi(\th)}\right) &= \dfrac{\dfrac{d[SI]_j}{dt}}{\alpha_S\p_j\psi(\th)} - \dfrac{[SI]_j\dfrac{d}{dt}(\alpha_S\p_j\psi(\th))}{(\alpha_S\p_j\psi(\th))^2} \\
&= \sum_l \beta_l\bar{\kappa}_{jl}(\th)\dfrac{[SI]_l}{\alpha_SS\p_j\psi(\th)}([SS]_j-[SI]_j) - (\beta_j+\gamma)\dfrac{[SI]_j}{\alpha_S\p_j\psi(\th)} + \dfrac{[SI]_j}{\alpha_S(\p_j\psi(\th))^2}\sum_l\beta_l\dfrac{\p^2_{jl}\psi(\th)[SI]_l}{\alpha_S\p_l\psi(\th)} \\
&= \sum_l \beta_l\bar{\kappa}_{jl}(\th)\dfrac{[SI]_l}{\alpha_SS\p_j\psi(\th)}([SS]_j-[SI]_j) - (\beta_j+\gamma)\dfrac{[SI]_j}{\alpha_S\p_j\psi(\th)} + \dfrac{[SI]_j}{\alpha_S\p_j\psi(\th)}\sum_l\beta_l\bar{\kappa}_{jl}(\th)\dfrac{[SI]_l}{S} \\
&= \dfrac{1}{\p_j\psi(\one)}\sum_l \beta_l\bar{\kappa}_{jl}(\th)\dfrac{\p_j\psi(\th)}{\psi(\th)}[SI]_l - (\beta_j+\gamma)\dfrac{[SI]_j}{\alpha_S\p_j\psi(\th)} 
\end{align*}
where we have used \eqref{eq:DS} and \eqref{eq:DSS}.  Since $\frac{[SI]_j(0)}{\alpha_S\p_j\psi(\th(0))} =\alpha_I$ it follows from the uniqueness of the ODE solution that 
\[\phi^{I,j} = \dfrac{[SI]_j}{\alpha_S\p_j\psi(\th)}.\]
Hence, $\dfrac{d\theta_j}{dt} = -\beta_j\phi^{I,j}$, i.e.
\[\dfrac{d\theta_j}{dt} = -\beta_j\theta_j + \alpha_S\beta_j\dfrac{\p_j\psi(\th)}{\p_j\psi(1)} + \gamma(1-\theta_j) + \beta_j\alpha_R,\]
which is the same differential equation as that for $\bar{\theta}_j$ in \eqref{eq:EB}.  Furthermore, $\theta_j(1) =\bar{\theta}_j(1)$ which implies that 
\[\theta_j = \bar{\theta}_j.\]
It subsequently follows from \eqref{eq:EB} and equation \eqref{eq:DS} that $S$, $I$ and $R$ are also equivalent for the two models.

\section{Derivations of $\mu_i^S$ and $\mu_{i|j}^{ex|SI}$} 
	\label{app:kappabar}
	\def\bs{{\bm s}}
We provide here the derivations for $\mu^S_i$ and $\mu^{ex|SI}_{i|j}$ as given in equation \eqref{eq:muS}.  Recall that $\mu^S_i$ is the average $i$-degree of a susceptible node.  By equation \eqref{eq:EZi}, the probability that a node $u$ is susceptible and of degree $\bk$ is given by $P(u \in S_\bk(t)) = n^{-1}S(0)p_\bk\Xth^\bk(t)$.  We can then calculate $\mu^S_i$ as follows:
\begin{align*}
\mu^S_i(t) &= \sum_\bk k_iP(u \in S_\bk(t) | u \in S(t)) = \dfrac{\sum_\bk k_iP(u \in S_\bk(t))}{\sum_\bs P(u \in S_\bs(t))} = \dfrac{\sum_\bk k_ip_\bk\Xth^\bk(t)}{\sum_\bs p_\bs\Xth^\bk(t)} = \dfrac{\xth_i(t)\p_i\psi(\Xth(t))}{\psi(\Xth(t))}.
\end{align*}

Recall that $\mu^{ex|SI}_{i|j}$ is the average excess $i$-degree of a susceptible node chosen randomly as a $j$-neighbor of an infectious node.  That is, we randomly select a $j$-edge between a susceptible node and an infectious node, and we calculate the excess $i$ degree of the susceptible node.  Let $E_j^{SI}$ denote the set of $j$-edges between susceptible nodes and infectious nodes and let $E_j^{S_{\bk}I}$ denote the set of $j$-edges between susceptible nodes of degree $\bk$ and infectious nodes, so that $E^{SI}_j = \cup_\bk E_j^{S_{\bk}I}$.  Recall that $X^{SI,u}_j$ is the number of infectious $j$-neighbors of a susceptible node $u$.  Also recall from Remark \ref{rem:hg} that, given $u \in S_\bk$, the neighborhood of $u$ has a hypergeometric distribution and $E_h[X_j^{SI,u}] = k_jX_j^{SI}/X_j^{S\b}$. We first assume $i \neq j$ and calculate
\begin{align*}
\mu^{ex|SI}_{i|j} &= \sum_\bk k_i P\left(e \in E^{S_{\bk}I}_j | e \in E_j^{SI} \right) \\
& = \sum_{\bk} k_i  \dfrac{\sum_{l=0}^{k_j} lP\left(u \in S_{\bk}, X^{SI,u}_j = l\right)}{\sum_{\bs} \sum_{l=0}^{s_j} lP\left(u \in S_{\bs}, X^{SI,u}_j = l\right)}  \\
&= \sum_{\bk} k_i  \dfrac{P(u \in S_{\bk})\sum_{l=0}^{k_j} lP\left(X^{SI,u}_j = l | u \in S_{\bk} \right)}{\sum_{\bs} P (u \in S_{\bs})\sum_{l=0}^{s_j} lP\left(X^{SI,u}_j = l | u \in S_{\bs} \right) } \\
&= \sum_{\bk} k_i  \dfrac{P(u \in S_{\bk})k_jX^{SI}_j/X^{S\b}_j}{\sum_{\bs} P (u \in S_{\bs})s_jX^{SI}_j/X^{S\b}_j} \\
&= \dfrac{\sum_{\bk} k_i k_j p_\bk\Xth^\bk}{\sum_{\bs} s_jp_\bs\Xth^\bs} \\
&= \dfrac{\xth_i\p^2_{ij}\psi(\Xth)}{\p_j\psi(\Xth)}.
\end{align*}
If $j = i$, we likewise get
\begin{align*}
\mu^{ex|SI}_{i|i}  &= \dfrac{\sum_\bk (k_i-1) k_i p_\bk\Xth^\bk}{\sum_\bk k_ip_\bk\Xth^\bk} = \dfrac{\xth_i\p^2_{ii}\psi(\Xth)}{\p_i\psi(\Xth)}.
\end{align*}
Therefore, equation \eqref{eq:muS} follows.

Alternatively, we recall the equivalent model with dynamic graph construction mentioned in Section \ref{sec:proof} and consider a $j$-half edge of an infectious node that is forced to pair with a $j$-half edge of a susceptible node, which we denote by $e$, at time $t$.  Let $E^S_j$ denote the set of $j$-half edges belonging to susceptible nodes and let $E^{S_\bk}_j$ denote those belonging to susceptible nodes of degree $\bk$.  We first assume $i \neq j$ and calculate $\mu^{ex|SI}_{i|j}$ as follows:
\begin{align*}
\mu^{ex|SI}_{i|j} = \sum_\bk k_i P\left(e \in E^{S_{\bk}}_j | e \in E_j^S \right) =\sum_\bk k_i  \dfrac{k_j P(u \in S_\bk)}{\sum_\bs s_jP(u \in S_\bs)} = \dfrac{\sum_\bk k_i k_j p_\bk\Xth^\bk}{\sum_\bk k_jp_\bk\Xth^\bk} = \dfrac{\xth_i\p^2_{ij}\psi(\Xth)}{\p_j\psi(\Xth)}.
\end{align*}
The case $i=j$ follows likewise as above.
 	
\section{Calculations for community-healthcare model} 
	\label{app:CH}

\subsection{Basic reproduction number, $\R$} \label{app:R0}
We use the next-generation matrix method (and corresponding notation) from \cite{vdd2002} to calculate $\R$.  We consider $I$, $[SI]_C$, $[SI]_H$, and $[\SI]_H$ to be the infective compartments ($m=4$).  We consider the disease-free equilibrium for the system \eqref{eq:CH} given by $x_0=(1,0,0,0,0,\mu_C,\mu_H)$.  The matrix corresponding to terms for new infections, $F$, is then given by 
\[F = \left[\begin{array}{cccc}
0 & \beta_C & \beta_H & 0 \\
0 & \beta_C\kappa_{CC}\mu_C & \beta_H\kappa_{CH}\mu_C & 0 \\
0 & 0 & 0 & 0 \\
0 & \beta_C\kappa_{CH}\mu_H & \beta_H\kappa_{HH}\mu_H & 0 
\end{array}
\right]\]
and the matrix corresponding to terms from all other transitions, $V$, is
\[V = \left[\begin{array}{cccc}
\gamma & 0 & 0 & 0 \\
0 & \beta_C+\gamma+\delta & 0 & 0 \\
0 & 0 & \beta_H+\gamma & -\eta \\
0 & 0 & 0  & \eta+\gamma
\end{array}
\right].\]
Therefore,
\[V^{-1} = \left[\begin{array}{cccc}
1/\gamma & 0 & 0 & 0 \\
0 & 1/(\beta_C+\gamma+\delta) & 0 & 0 \\
0 & 0 & 1/(\beta_H+\gamma) & \eta/((\beta_H+\gamma)(\eta+\gamma)) \\
0 & 0 & 0  & 1/(\eta+\gamma)
\end{array}
\right],\]
and the next-generation matrix is given by
\[FV^{-1} = \left[\begin{array}{cccc}
0 & \beta_C/(\beta_C+\gamma+\delta) & \beta_H/(\beta_H+\gamma) & \beta_H\eta/((\beta_H+\gamma)(\eta+\gamma)) \\
0 & \beta_C\kappa_{CC}\mu_C/(\beta_C+\gamma+\delta) & \beta_H\kappa_{CH}\mu_C/(\beta_H+\gamma) & \beta_H\kappa_{CH}\mu_C\eta/((\beta_H+\gamma)(\eta+\gamma)) \\
0 & 0 & 0 & 0 \\
0 & \beta_C\kappa_{CH}\mu_H/(\beta_C+\gamma+\delta) & \beta_H\kappa_{HH}\mu_H/(\beta_H+\gamma) & \beta_H\kappa_{HH}\mu_H\eta/((\beta_H+\gamma)(\eta+\gamma)) 
\end{array}
\right].\]
Then, $\R$ is the spectral radius of the next-generation matrix, i.e. the largest absolute value of an eigenvalue.  We see that $FV^{-1}$ has two zero eigenvalues, and the other two eigenvalues are determined by the characteristic polynomial:
\[p(\lambda) = \lambda^2 - \left(\dfrac{\beta_C\kappa_{CC}\mu_C}{\beta_C+\gamma+\delta}+\dfrac{\beta_H\kappa_{HH}\mu_H\eta}{(\beta_H+\gamma)(\eta+\gamma)}\right)\lambda + \dfrac{(\kappa_{CC}\kappa_{HH}-\kappa_{CH}^2)\beta_C\beta_H\mu_C\mu_H\eta}{(\beta_C+\gamma+\delta)(\beta_H+\gamma)(\eta+\gamma)}.\]
Let \[R_C = \dfrac{\beta_C\kappa_{CC}\mu_C}{\beta_C+\gamma+\delta}, \qquad \text{and} \qquad R_H = \dfrac{\beta_H\kappa_{HH}\mu_H\eta}{(\beta_H+\gamma)(\eta+\gamma)}.\]
We solve $p(\lambda)=0$ to determine
\[\R = \dfrac{1}{2}\left(R_C+R_H\right) + \dfrac{1}{2}\sqrt{\left(R_C+R_H\right)^2+4\dfrac{\beta_C\mu_C\beta_H\mu_H\eta}{(\beta_C+\gamma+\delta)(\beta_H+\gamma)(\eta+\gamma)}(\kappa_{CH}^2-\kappa_{CC}\kappa_{HH})}. \]

\subsection{Independent Poissons case} \label{app:final}

We derive an invariant and a final size relation in the case of independent layers with Poisson distributions (i.e. $\kappa_{CC}=\kappa_{HH}=\kappa_{CH} = 1$).  First, we observe that, in this case,
\[\dfrac{d[SS]_C}{dt} = 2\dfrac{[SS]_C}{S}\dfrac{dS}{dt}\]
which implies that $[SS]_C = \mu_CS^2$.  Likewise, $[\SS]_H = \mu_HS^2$.
 
We then transform the system \eqref{eq:CH} using
\[Q_i = \dfrac{[SI]_i}{S}, \quad \tilde{Q}_i = \dfrac{[\SI]_i}{S}, \qquad \text{for $i = C,H$},\]
which gives:
\begin{equation}
\begin{aligned}
\dfrac{dS}{dt} &= -\beta_CQ_CS-\beta_HQ_HS \\
\dfrac{dI}{dt} &= \beta_CQ_CS+ \beta_HQ_HS -\gamma I \\
\dfrac{dQ_C}{dt} &= \mu_C(\beta_CQ_CS+\beta_HQ_HS)-(\beta_C+\gamma+\delta)Q_C \\
\dfrac{dQ_H}{dt} &= -(\beta_H+\gamma)Q_H+\eta \tilde{Q}_H \\
\dfrac{d\tilde{Q}_H}{dt} &= \mu_H(\beta_CQ_CS+\beta_HQ_HS)-(\eta+\gamma)\tilde{Q}_H. \label{eq:Pois}
\end{aligned}
\end{equation}

Observe that \eqref{eq:Pois} fits the form of Arino et al. \cite{Arino2007}.  Matching their notation, we ignore the decoupled $I$ equation and let the infected compartments be $x=(Q_C,Q_H,\tilde{Q}_H)^T$, the susceptible compartment $y=S$, the transmission row vector $\beta b = (\beta_C, \beta_H, 0)$, $\Pi = (\mu_C, 0, \mu_H)^T$, and 
\[V = \left [ \begin{array}{ccc}
\beta_C+\gamma+\delta & 0 & 0 \\
0 & \beta_H+\gamma  & -\eta \\
0 & 0 & \eta + \gamma
\end{array} \right ].\]
The system can then be written as
\begin{align}
\dfrac{dx}{dt} &= \Pi y \beta b x - Vx  \label{eq:x}\\
\dfrac{dy}{dt} &= -y \beta b x. \label{eq:y}
\end{align}
Recall that $y(0)= S(0)= \alpha_S$, $x(0) = (\mu_C\alpha_I,0,\mu_H\alpha_I)^T$ and from equation \eqref{eq:R0} we have
\[\R = \dfrac{\beta_C\mu_C}{\beta_C+\gamma+\delta} + \dfrac{\beta_H\mu_H\eta}{(\beta_H+\gamma)(\eta+\gamma)} = \beta bV^{-1}\Pi = \beta bV^{-1}\frac{x(0)}{\alpha_I}.
\]

Observe that
\[\Pi \dfrac{dy}{dt} + \dfrac{dx}{dt} = -Vx,\]
which implies that 
\[\int_0^t x(s)ds = V^{-1}\Pi(y(0)-y(t))+V^{-1}(x(0)-x(t)).\]
Then, from integrating equation \eqref{eq:y}, we have
\begin{align*}
\log\left(\dfrac{y(t)}{y(0)}\right) &= -\beta b \int_0^t x(s)ds \\
& = -\beta b V^{-1}\Pi(y(0)-y(t))-\beta bV^{-1}(x(0)-x(t)) \\
&= -\R(\alpha_S-y(t)) -\R\alpha_I +\beta bV^{-1}x(t).
\end{align*}
Transforming back to our original variables, this gives the invariant \eqref{CHinv}:
\begin{equation*}
\log\left(\frac{S}{\alpha_S}\right) = -\R\left(\alpha_S+\alpha_I-S\right)+\dfrac{\beta_C}{\beta_C+\gamma+\delta}\dfrac{[SI]_C}{S} +\dfrac{\beta_H}{\beta_H+\gamma}\dfrac{[SI]_H}{S} +\dfrac{\beta_H\eta}{(\beta_H+\gamma)(\eta+\gamma)}\dfrac{[\SI]_H}{S}.
\end{equation*}

Let $S_\infty$ denote the fraction of the population that escapes infection. Taking the limit $t \to \infty$ gives the final size relation \eqref{CHfinal}:
\begin{equation*}
\log\left(\frac{S_\infty}{\alpha_S}\right) = -\R\left(\alpha_S+\alpha_I-S_\infty\right).
\end{equation*}

\subsection{Model reduction} \label{app:inv}

Let $\sigma$ and $\lambda$ be as defined by equation \eqref{eq:sigma}.  We will derive relation \eqref{inv1} to demonstrate our method of finding invariants. Suppose we have a function $f$ which satisfies
\[f(S,[SS]_C,[\SS]_H)=0.\]
Let $A = \p f/\p S$, $B = \p f/\p[SS]_C$, and $C =\p f/\p [\SS]_H$.
Then,
\[\dfrac{df}{dt} = A\dfrac{dS}{dt}+B\dfrac{d[SS]_C}{dt}+C\dfrac{d[\SS]_H}{dt} = 0,\]
which by system \eqref{eq:CH} implies
\begin{equation*}
 \beta_C[SI]_C\left(A + 2B\kappa_{CC}\dfrac{[SS]_C}{S}+2C\kappa_{CH}\dfrac{[\SS]_H}{S}\right) + \beta_H[SI]_H\left(A+ 2B\kappa_{CH}\dfrac{[SS]_C}{S}+2C\kappa_{HH}\dfrac{[\SS]_H}{S}\right) =0.
\end{equation*}

Thus, $df/dt=0$ if the following system of equations is satisfied:
\begin{align*}
A + 2B\kappa_{CC}\dfrac{[SS]_C}{S}+2C\kappa_{CH}\dfrac{[\SS]_H}{S} = 0,\\
A+ 2B\kappa_{CH}\dfrac{[SS]_C}{S}+2C\kappa_{HH}\dfrac{[\SS]_H}{S} =0.
\end{align*}
For a given $C$, the solution to this system is:
\begin{align*}
A &= -2C\lambda \dfrac{[\SS]_H}{S} \\
B &= -C\sigma \dfrac{[\SS]_H}{[SS]_C}. 
\end{align*}
We choose $C= \partial f/ \partial [\SS]_H = 1$ which implies $f = [\SS]_H + g([SS]_C,S) = 0$ for some function $g$.  Therefore, 
\[\partdot{g}{S} = \partdot{f}{S} = A = -2\lambda \dfrac{[\SS]_H}{S} = 2\lambda \dfrac{g}{S}\]
from which separation of variables gives $g = S^{2\lambda}h([SS]_C)$ for some function $h$.  Hence,
\[S^{2\lambda}\dfrac{dh}{d[SS]_C} = \partdot{g}{[SS]_C} = \partdot{f}{[SS]_C} = B = \sigma \dfrac{g}{[SS]_C} = \sigma S^{2\lambda} \dfrac{h}{[SS]_C}\]
which gives 
\[\dfrac{dh}{h} = \sigma \dfrac{d[SS]_C}{[SS]_C} \]
from which separation of variables gives $h = K[SS]_C^\sigma$ for any constant $K$.  Therefore, $df/dt = 0$ is satisfied by any $f$ of the form:
\[f = [\SS]_H + g([SS]_C,S) = [\SS]_H + S^{2\lambda}h([SS]_C) = [\SS]_H + K[SS]_C^\sigma S^{2\lambda}.\]
Substituting the initial conditions \eqref{CHic3} into $f=0$  gives 
\[K=-\dfrac{[\SS]_H(0)}{([SS]_C(0))^\sigma (S(0))^{2\lambda}} = -\dfrac{\mu_H\alpha_S^{2(1-\sigma-\lambda)}}{\mu_C^\sigma}.\]
Hence, the invariant is
\begin{equation*}
[\SS]_H - \dfrac{\mu_H\alpha_S^{2(1-\sigma-\lambda)}}{\mu_C^\sigma}[SS]_C^\sigma S^{2\lambda} = 0,
\end{equation*}
i.e. relation \eqref{inv1}.

The derivations of relations \eqref{inv2} and \eqref{inv3}, the additional invariants in the independent layers case, are similar.  For \eqref{inv2}, we look for an invariant of the form $f(S,[SS]_C,[SI]_C)=0$.  Likewise, for \eqref{inv3} we consider an invariant of the form $f(S,[\SS]_H,[SI]_H,[\SS]_H)=0$.  The analysis follow analogously to above.

\begin{acknowledgements}
We thank Mason Porter and KaYin Leung for their helpful comments during manuscript preparation.  We also thank the Mathematical Biosciences Institute at The Ohio State University for its assistance in providing us with space and the necessary computational resources.  
\end{acknowledgements}

\bibliographystyle{spmpsci}      \bibliography{ebola_refs2}   

\begin{thebibliography}{100}
\providecommand{\url}[1]{{#1}}
\providecommand{\urlprefix}{URL }
\expandafter\ifx\csname urlstyle\endcsname\relax
  \providecommand{\doi}[1]{DOI~\discretionary{}{}{}#1}\else
  \providecommand{\doi}{DOI~\discretionary{}{}{}\begingroup
  \urlstyle{rm}\Url}\fi

\bibitem{Altmann1995}
Altmann, M.: Susceptible-infected-removed epidemic models with dynamic
  partnerships.
\newblock Journal of Mathematical Biology \textbf{33}(6), 661--675 (1995)

\bibitem{Altmann1998}
Altmann, M.: The deterministic limit of infectious disease models with dynamic
  partners.
\newblock Mathematical Biosciences \textbf{150}(2), 153--175 (1998)

\bibitem{andersson1998limit}
Andersson, H.: Limit theorems for a random graph epidemic model.
\newblock Annals of Applied Probability pp. 1331--1349 (1998)

\bibitem{Andersson2000}
Andersson, H., Britton, T.: Stochastic epidemic models and their statistical
  analysis, vol.~4.
\newblock Springer New York (2000)

\bibitem{Arino2007}
Arino, J., Brauer, F., Van Den~Driessche, P., Watmough, J., Wu, J.: A final
  size relation for epidemic models.
\newblock Mathematical Biosciences and Engineering \textbf{4}(2), 159 (2007)

\bibitem{balcan2009}
Balcan, D., Colizza, V., Gon\c{c}alves, B., Hu, H., Ramasco, J.J., Vespignani,
  A.: Multiscale mobility networks and the spatial spreading of infectious
  diseases.
\newblock Proceedings of the National Academy of Sciences \textbf{106}(51),
  21,484--21,489 (2009)

\bibitem{Ball2008}
Ball, F., Neal, P.: Network epidemic models with two levels of mixing.
\newblock Mathematical biosciences \textbf{212}(1), 69--87 (2008)

\bibitem{Bansal2007}
Bansal, S., Grenfell, B.T., Meyers, L.A.: When individual behaviour matters:
  homogeneous and network models in epidemiology.
\newblock Journal of the Royal Society Interface \textbf{4}(16), 879--891
  (2007)

\bibitem{Bansal2010}
Bansal, S., Read, J., Pourbohloul, B., Meyers, L.A.: The dynamic nature of
  contact networks in infectious disease epidemiology.
\newblock Journal of Biological Dynamics \textbf{4}(5), 478--489 (2010)

\bibitem{barbour2013approximating}
Barbour, A.D., Reinert, G., et~al.: Approximating the epidemic curve.
\newblock Electron. J. Probab \textbf{18}(54), 1--30 (2013)

\bibitem{barthelemy2004}
Barth{\'e}lemy, M., Barrat, A., {Pastor-Satorras}, R., Vespignani, A.: Velocity
  and hierarchical spread of epidemic outbreaks in scale-free networks.
\newblock Physical Review Letters \textbf{92}(178701) (2004)

\bibitem{barthelemy2005}
Barth{\'e}lemy, M., Barrat, A., {Pastor-Satorras}, R., Vespignani, A.:
  Dynamical patterns of epidemic outbreaks in complex heterogeneous networks.
\newblock Journal of Theoretical Biology \textbf{235}, 275--288 (2005)

\bibitem{Battiston2014}
Battiston, F., Nicosia, V., Latora, V.: Structural measures for multiplex
  networks.
\newblock Physical Review E \textbf{89}(3), 032,804 (2014)

\bibitem{bengtsson2015}
Bengtsson, L., Gaudart, J., Lu, X., Moore, S., Wetter, E., Sallah, K.,
  Rebaudet, S., Piarroux, R.: Using mobile phone data to predict the spatial
  spread of cholera.
\newblock Scientific Reports \textbf{5}, 10.1038/srep08,923 (2015)

\bibitem{bohman2012sir}
Bohman, T., Picollelli, M.: Sir epidemics on random graphs with a fixed degree
  sequence.
\newblock Random Structures \& Algorithms \textbf{41}(2), 179--214 (2012)

\bibitem{brockmann2010}
Brockmann, D.: Human mobility and spatial disease dynamics.
\newblock In: H.G. Schuster (ed.) Reviews of nonlinear dynamics and complexity,
  volume 2. Wiley-VCH Verlag GmbH Co, KGaA, Weinheim, Germany (2010)

\bibitem{brockmann2013}
Brockmann, D., Helbing, D.: The hidden geometry of complex, network-driven
  contagion phenomena.
\newblock Science \textbf{342}(6164), 1337--1342 (2013)

\bibitem{Buono2014}
Buono, C., Alvarez-Zuzek, L.G., Macri, P.A., Braunstein, L.A.: Epidemics in
  partially overlapped multiplex networks.
\newblock {PLoS ONE} \textbf{9}(3), 5 (2014)

\bibitem{Cardillo2013}
Cardillo, A., Zanin, M., G{\'o}mez-Garde{\~n}es, J., Romance, M., del Amo,
  A.J.G., Boccaletti, S.: Modeling the multi-layer nature of the {E}uropean
  {A}ir {T}ransport {N}etwork: Resilience and passengers re-scheduling under
  random failures.
\newblock The European Physical Journal Special Topics \textbf{215}(1), 23--33
  (2013)

\bibitem{Chowell2014}
Chowell, G., Nishiura, H.: Transmission dynamics and control of ebola virus
  disease (evd): a review.
\newblock BMC medicine \textbf{12}(1), 196 (2014)

\bibitem{colizza2006}
Colizza, V., Barrat, A., Barth'{e}lemy, M., Vespignani, A.: The role of the
  airline transportation network in the prediction and predictability of global
  epidemics.
\newblock Proceedings of the National Academy of Sciences USA \textbf{103},
  2015--2020 (2006)

\bibitem{coltart2015}
Coltart, C.E.M., Johnson, A.M., Whitty, C.J.M.: Role of healthcare workers in
  early epidemic spread of {Ebola}: policy implications of prophylactic
  compared to reactive vaccination policy in outbreak prevention and control.
\newblock BMC Medicine \textbf{13}, 271 (2015)

\bibitem{De-Domenico2013}
De~Domenico, M., Sol{\'e}-Ribalta, A., Cozzo, E., Kivel{\"a}, M., Moreno, Y.,
  Porter, M.A., G{\'o}mez, S., Arenas, A.: Mathematical formulation of
  multilayer networks.
\newblock Physical Review X \textbf{3}(4), 041,022 (2013)

\bibitem{Decreusefond2012}
Decreusefond, L., Dhersin, J.S., Moyal, P., Tran, V.C., et~al.: Large graph
  limit for an {SIR} process in random network with heterogeneous connectivity.
\newblock The Annals of Applied Probability \textbf{22}(2), 541--575 (2012)

\bibitem{dodds2004}
Dodds, P.S., Watts, D.J.: Universal behavior in a generalized model of
  contagion.
\newblock Physical Review Letters \textbf{92}(21), 218,701(1) -- 218,201(4)
  (2004)

\bibitem{dowell1999}
Dowell, S.F., Mukunu, R., Ksiazek, T.G., Khan, A.S., Rollin, P.E., Peters, C.:
  Transmission of ebola hemorrhagic fever: a study of risk factors in family
  members, kikwit, democratic republic of the congo, 1995.
\newblock Journal of Infectious Diseases \textbf{179}(Supplement 1), S87--S91
  (1999)

\bibitem{Drake2015}
Drake, J.M., Kaul, R., Alexander, L.W., O'Regan, S.M., Kramer, A.M., Pulliam,
  J.T., Ferrari, M.J., Park, A.W.: Ebola cases and health system demand in
  {L}iberia.
\newblock PLoS Biol \textbf{13}(1), e1002,056 (2015)

\bibitem{Eames2002}
Eames, K.T., Keeling, M.J.: Modeling dynamic and network heterogeneities in the
  spread of sexually transmitted diseases.
\newblock Proceedings of the National Academy of Sciences \textbf{99}(20),
  13,330--13,335 (2002)

\bibitem{Eames2004}
Eames, K.T., Keeling, M.J.: Monogamous networks and the spread of sexually
  transmitted diseases.
\newblock Mathematical biosciences \textbf{189}(2), 115--130 (2004)

\bibitem{easley2010}
Easley, D., Kleinberg, J.: Networks, crowds, and markets: reasoning about a
  highly connected world.
\newblock Cambridge University Press (2010)

\bibitem{Epstein2008}
Epstein, J.M., Parker, J., Cummings, D., Hammond, R.A.: Coupled contagion
  dynamics of fear and disease: mathematical and computational explorations.
\newblock PLoS One \textbf{3}(12), e3955 (2008)

\bibitem{Fisman2014}
Fisman, D., Khoo, E., Tuite, A.: Early epidemic dynamics of the {W}est
  {A}frican 2014 {E}bola outbreak: estimates derived with a simple
  two-parameter model.
\newblock PLoS Currents \textbf{6} (2014)

\bibitem{Funk2010b}
Funk, S., Gilad, E., Jansen, V.: Endemic disease, awareness, and local
  behavioural response.
\newblock Journal of Theoretical Biology \textbf{264}(2), 501--509 (2010)

\bibitem{Funk2009}
Funk, S., Gilad, E., Watkins, C., Jansen, V.A.: The spread of awareness and its
  impact on epidemic outbreaks.
\newblock Proceedings of the National Academy of Sciences \textbf{106}(16),
  6872--6877 (2009)

\bibitem{Funk2010}
Funk, S., Jansen, V.A.: Interacting epidemics on overlay networks.
\newblock Physical Review E \textbf{81}(3), 036,118 (2010)

\bibitem{goeijenbier2014}
Goeijenbier, M., {van Kampen}, J.J.A., Reusken, C.B.E.M., Koopmans, M.P.G.,
  {van Gorp}, E.C.M.: Ebola virus disease: a review on epidemiology, symptoms,
  treatment and pathogenesis.
\newblock The Journal of Medicine \textbf{72}(9), 442--448 (2014)

\bibitem{Gomes2014}
Gomes, M.F., y~Piontti, A.P., Rossi, L., Chao, D., Longini, I., Halloran, M.E.,
  Vespignani, A.: Assessing the international spreading risk associated with
  the 2014 {W}est {A}frican {E}bola outbreak.
\newblock PLoS Currents \textbf{6} (2014)

\bibitem{Granell2013}
Granell, C., G{\'o}mez, S., Arenas, A.: Dynamical interplay between awareness
  and epidemic spreading in multiplex networks.
\newblock Physical Review Letters \textbf{111}(12), 128,701 (2013)

\bibitem{Grassberger1983}
Grassberger, P.: On the critical behavior of the general epidemic process and
  dynamical percolation.
\newblock Mathematical Biosciences \textbf{63}(2), 157--172 (1983)

\bibitem{Gross2006}
Gross, T., D'Lima, C.J.D., Blasius, B.: Epidemic dynamics on an adaptive
  network.
\newblock Physical review letters \textbf{96}(20), 208,701 (2006)

\bibitem{hewlett2003}
Hewlett, B.S., Amola, R.P.: Cultural contexts of {Ebola} in northern {Uganda}.
\newblock Emerging Infectious Diseases \textbf{9}, 1242--1248 (2003)

\bibitem{House2011}
House, T., Keeling, M.J.: Insights from unifying modern approximations to
  infections on networks.
\newblock Journal of The Royal Society Interface \textbf{8}(54), 67--73 (2011)

\bibitem{Janson2014}
Janson, S., Luczak, M., Windridge, P.: Law of large numbers for the {SIR}
  epidemic on a random graph with given degrees.
\newblock Random Structures \& Algorithms \textbf{45}(4), 724--761 (2014)

\bibitem{Jo2006}
Jo, H.H., Baek, S.K., Moon, H.T.: Immunization dynamics on a two-layer network
  model.
\newblock Physica A: Statistical Mechanics and its Applications
  \textbf{361}(2), 534--542 (2006)

\bibitem{Keeling1999}
Keeling, M.: Correlation equations for endemic diseases: externally imposed and
  internally generated heterogeneity.
\newblock Proceedings of the Royal Society of London B: Biological Sciences
  \textbf{266}(1422), 953--960 (1999)

\bibitem{keeling1999b}
Keeling, M.J.: The effects of local spatial structure on epidemiological
  invasions.
\newblock Proceedings of the Royal Society of London B: Biological Sciences
  \textbf{266}(1421), 859--867 (1999)

\bibitem{kermack1927}
Kermack, W.O., McKendrick, A.G.: A contribution to the mathematical theory of
  epidemics.
\newblock Proceedings of the Royal Society of London Series A \textbf{115},
  700--721 (1927)

\bibitem{kivela2014}
Kivel{\"a}, M., Arenas, A., Barthelemy, M., Gleeson, J.P., Moreno, Y., Porter,
  M.A.: Multilayer networks.
\newblock Journal of Complex Networks \textbf{2}, 203--271 (2014)

\bibitem{kratz2015}
Kratz, T., Roddy, P., Oloma, A.T., Jeffs, B., Ciruelo, D.P., {de la Rosa}, O.,
  Borchert, M.: {Ebola Virus Disease} outbreak in {Isiro, Democratic Republic
  of the Congo}, 2012: signs and symptoms, management and outcomes.
\newblock {PLoS ONE} \textbf{10}(6), e0129,333 (2015)

\bibitem{kurtz1970solutions}
Kurtz, T.G.: Solutions of ordinary differential equations as limits of pure
  jump markov processes.
\newblock Journal of Applied Probability \textbf{7}(1), 49--58 (1970)

\bibitem{Legrand2007}
Legrand, J., Grais, R., Boelle, P., Valleron, A., Flahault, A.: Understanding
  the dynamics of ebola epidemics.
\newblock Epidemiology and infection \textbf{135}(04), 610--621 (2007)

\bibitem{Lekone2006}
Lekone, P.E., Finkenst{\"a}dt, B.F.: Statistical inference in a stochastic
  epidemic {SEIR} model with control intervention: {E}bola as a case study.
\newblock Biometrics \textbf{62}(4), 1170--1177 (2006)

\bibitem{Lenhart2007}
Lenhart, S., Workman, J.T.: Optimal control applied to biological models.
\newblock CRC Press (2007)

\bibitem{leung2012}
Leung, K.Y., Kretzschmar, M., Diekmann, O.: Dynamic concurrent partnership
  networks incorporating demography.
\newblock Theoretical Population Biology \textbf{82}, 229--239 (2012)

\bibitem{leung2015}
Leung, K.Y., Kretzschmar, M., Diekmann, O.: {$SI$} infection on a dynamic
  partnership network: characterization of {$R_0$}.
\newblock Journal of Mathematical Biology \textbf{71}, 1--56 (2015)

\bibitem{Li2015}
Li, M., Ma, J., van~den Driessche, P.: Model for disease dynamics of a
  waterborne pathogen on a random network.
\newblock Journal of mathematical biology \textbf{71}(4), 961--977 (2015)

\bibitem{lindquist2011}
Lindquist, J., Ma, J., {van den Driessche}, P., Willeboordse, F.H.: Effective
  degree network disease models.
\newblock Journal of Mathematical Biology \textbf{62}(2), 143--164 (2011)

\bibitem{Ma2006}
Ma, J., Earn, D.J.: Generality of the final size formula for an epidemic of a
  newly invading infectious disease.
\newblock Bulletin of Mathematical Biology \textbf{68}(3), 679--702 (2006)

\bibitem{ma2013}
Ma, J., {van den Driessche}, P., Willeboordse, F.H.: Effective degree household
  network disease model.
\newblock Journal of Mathematical Biology \textbf{66}, 75--94 (2013)

\bibitem{maganga2014}
Maganga, G.D., Kapetshi, J., Berthet, N., Kebela~Ilunga, B., Kabange, F.,
  Mbala~Kingebeni, P., Mondonge, V., Muyembe, J.J.T., Bertherat, E., Briand,
  S., et~al.: Ebola virus disease in the {D}emocratic {R}epublic of {C}ongo.
\newblock New England Journal of Medicine \textbf{371}(22), 2083--2091 (2014)

\bibitem{matanock2014}
Matanock, A., Arwady, M.A., Ayscue, P., Forrester, J.D., Gaddis, B., Hunter,
  J.C., Monroe, B., Pillai, S.K., Reed, C., Schafer, I.J., Massaquoi, M., Dahn,
  B., {De Cock}, K.M.: {Ebola Virus Disease} cases among health care workers
  not working in {Ebola} treatment units - {Liberia, June-August}, 2014.
\newblock Morbidity and Mortality Weekly Report \textbf{63}(46), 1077--1081
  (2014)

\bibitem{May2001}
May, R.M., Lloyd, A.L.: Infection dynamics on scale-free networks.
\newblock Physical Review E \textbf{64}(6), 066,112 (2001)

\bibitem{Merler2015}
Merler, S., Ajelli, M., Fumanelli, L., Gomes, M.F., y~Piontti, A.P., Rossi, L.,
  Chao, D.L., Longini, I.M., Halloran, M.E., Vespignani, A.: Spatiotemporal
  spread of the 2014 outbreak of {E}bola virus disease in {L}iberia and the
  effectiveness of non-pharmaceutical interventions: a computational modelling
  analysis.
\newblock The Lancet Infectious Diseases \textbf{15}(2), 204--211 (2015)

\bibitem{Meyer1962}
Meyer, P., et~al.: A decomposition theorem for supermartingales.
\newblock Illinois Journal of Mathematics \textbf{6}(2), 193--205 (1962)

\bibitem{meyers2005}
Meyers, L.A., Pourbohloul, B., {Newman}, M.E., Skowronski, D.M., Brunham, R.C.:
  Network theory and {SARS}: predicting outbreak diversity.
\newblock Journal of Theoretical Biology \textbf{232}, 71--81 (2005)

\bibitem{Miller2011b}
Miller, J.C.: A note on a paper by {E}rik {V}olz: {SIR} dynamics in random
  networks.
\newblock Journal of mathematical biology \textbf{62}(3), 349--358 (2011)

\bibitem{Miller2014b}
Miller, J.C.: Epidemics on networks with large initial conditions or changing
  structure.
\newblock {PLoS ONE} \textbf{9}(7), e101,421 (2014)

\bibitem{Miller2014}
Miller, J.C., Kiss, I.Z.: Epidemic spread in networks: Existing methods and
  current challenges.
\newblock Mathematical Modelling of Natural Phenomena \textbf{9}(2), 4 (2014)

\bibitem{Miller2011}
Miller, J.C., Slim, A.C., Volz, E.M.: Edge-based compartmental modelling for
  infectious disease spread.
\newblock Journal of The Royal Society Interface p. rsif20110403 (2011)

\bibitem{Miller2013}
Miller, J.C., Volz, E.M.: Incorporating disease and population structure into
  models of {SIR} disease in contact networks.
\newblock {PLoS ONE} \textbf{8}(8), e69,162 (2013)

\bibitem{newman2010}
Newman, M.: Networks: an introduction.
\newblock Oxford University Press (2010)

\bibitem{newman2002}
{Newman}, M.E.: Spread of epidemic disease on networks.
\newblock Physical Review E  (2002)

\bibitem{Pandey2014}
Pandey, A., Atkins, K.E., Medlock, J., Wenzel, N., Townsend, J.P., Childs,
  J.E., Nyenswah, T.G., Ndeffo-Mbah, M.L., Galvani, A.P.: Strategies for
  containing {E}bola in {W}est {A}frica.
\newblock Science \textbf{346}(6212), 991--995 (2014)

\bibitem{pastor-satorras2001}
Pastor-Satorras, R., Vespignani, A.: Epidemic spreading in scale-free networks.
\newblock Physical Review Letters \textbf{86}(14), 3200--3203 (2001)

\bibitem{Pellis2015}
Pellis, L., Ball, F., Bansal, S., Eames, K., House, T., Isham, V., Trapman, P.:
  Eight challenges for network epidemic models.
\newblock Epidemics \textbf{10}, 58--62 (2015)

\bibitem{pellis2015exact}
Pellis, L., House, T., Keeling, M.J.: Exact and approximate moment closures for
  non-{M}arkovian network epidemics.
\newblock Journal of theoretical biology \textbf{382}, 160--177 (2015)

\bibitem{porter2009}
Porter, M.A., Onnela, J.P., Mucha, P.J.: Communities in networks.
\newblock Notices of the American Mathematical Society \textbf{56}(9),
  1082--1097, 1164--1166 (2009)

\bibitem{Rand1999}
Rand, D.: Correlation equations and pair approximations for spatial ecologies.
\newblock Advanced Ecological Theory: Principles and Applications \textbf{100}
  (1999)

\bibitem{Rivers2014}
Rivers, C.M., Lofgren, E.T., Marathe, M., Eubank, S., Lewis, B.L.: Modeling the
  impact of interventions on an epidemic of {E}bola in {S}ierra {L}eone and
  {L}iberia.
\newblock PLoS Currents \textbf{6} (2014)

\bibitem{roels1999}
Roels, T.H., Bloom, A.S., Buffington, J., Muhungu, G.L., {MacKenzie}, W.R.,
  Khan, A.S., Ndambi, R., Noah, D.L., Rolka, H.R., Peters, C.J., Ksiazek, T.G.:
  Ebola hemorrhagive fever, {Kikwit, Democratic Republic of the Congo, 1995}:
  risk factors for patients without a reported exposure.
\newblock Journal of Infectious Diseases \textbf{179}(Suppl 1), S92--S97 (1999)

\bibitem{Rombach2014}
Rombach, M.P., Porter, M.A., Fowler, J.H., Mucha, P.J.: Core-periphery
  structure in networks.
\newblock SIAM Journal on Applied Mathematics \textbf{74}(1), 167--190 (2014)

\bibitem{Sahneh2013}
Sahneh, F.D., Scoglio, C.: May the best meme win!: new exploration of
  competitive epidemic spreading over arbitrary multi-layer networks.
\newblock arXiv preprint arXiv:1308.4880  (2013)

\bibitem{Scarpino2014}
Scarpino, S.V., Iamarino, A., Wells, C., Yamin, D., Ndeffo-Mbah, M., Wenzel,
  N.S., Fox, S.J., Nyenswah, T., Altice, F.L., Galvani, A.P., et~al.:
  Epidemiological and viral genomic sequence analysis of the 2014 {E}bola
  outbreak reveals clustered transmission.
\newblock Clinical Infectious Diseases p. ciu1131 (2014)

\bibitem{Shai2012}
Shai, S., Dobson, S.: Effect of resource constraints on intersimilar coupled
  networks.
\newblock Physical Review E \textbf{86}(6), 066,120 (2012)

\bibitem{Shai2013}
Shai, S., Dobson, S.: Coupled adaptive complex networks.
\newblock Physical Review E \textbf{87}(4), 042,812 (2013)

\bibitem{Sharkey2008}
Sharkey, K.J.: Deterministic epidemiological models at the individual level.
\newblock Journal of Mathematical Biology \textbf{57}(3), 311--331 (2008)

\bibitem{sharkey2015exact}
Sharkey, K.J., Kiss, I.Z., Wilkinson, R.R., Simon, P.L.: Exact equations for
  {SIR} epidemics on tree graphs.
\newblock Bulletin of mathematical biology \textbf{77}(4), 614--645 (2015)

\bibitem{Shaw2008}
Shaw, L.B., Schwartz, I.B.: Fluctuating epidemics on adaptive networks.
\newblock Physical Review E \textbf{77}(6), 066,101 (2008)

\bibitem{tatem2009}
Tatem, A.J., Qiu, Y.L., Smith, D.L., Sabot O. amd~Ali, A.S., Moonen, B.: The
  use of mobile phone data for the estimation of the travel patterns and
  imported {{\it Plasmodium falciparum}} rates among {Zanzibar} residents.
\newblock Malaria Journal \textbf{8} (2009)

\bibitem{Tsanou2015}
Tsanou, B., Moremedi, G.M., Kaondera-Shava, R., Lubuma, J.M., Morris, N.: A
  simple mathematical model for {E}bola in {A}frica.
\newblock Biomath Communications \textbf{2}(1) (2015)

\bibitem{ugander2012}
Ugander, J., Backstrom, L., Marlow, C., Kleinberg, J.: Structural diversity in
  social contagion.
\newblock Proceedings of the National Academy of Sciences USA \textbf{109}(16),
  5962--5966 (2012)

\bibitem{Valdano2015}
Valdano, E., Ferreri, L., Poletto, C., Colizza, V.: Analytical computation of
  the epidemic threshold on temporal networks.
\newblock Physical Review X \textbf{5}(2), 021,005 (2015)

\bibitem{Valdano2015b}
Valdano, E., Poletto, C., Colizza, V.: Infection propagator approach to compute
  epidemic thresholds on temporal networks: impact of immunity and of limited
  temporal resolution.
\newblock The European Physical Journal B \textbf{88}(12), 1--11 (2015)

\bibitem{Valdez2015}
Valdez, L., R{\^e}go, H.H.A., Stanley, H., Braunstein, L.: Predicting the
  extinction of {E}bola spreading in {L}iberia due to mitigation strategies.
\newblock arXiv preprint arXiv:1502.01326  (2015)

\bibitem{vdd2002}
{van den Driessche}, P., Watmough, J.: Reproduction numbers and sub-threshold
  endemic equilibria for compartmental models of disease transmission.
\newblock Mathematical Biosciences \textbf{180}, 29--48 (2002)

\bibitem{VanDerHofstad2009}
Van Der~Hofstad, R.: Random graphs and complex networks.
\newblock Available on http://www. win. tue. nl/rhofstad/NotesRGCN. pdf p.~11
  (2009)

\bibitem{Volz2008}
Volz, E.: {SIR} dynamics in random networks with heterogeneous connectivity.
\newblock Journal of Mathematical Biology \textbf{56}(3), 293--310 (2008)

\bibitem{volz2007}
Volz, E., Meyers, L.A.: Susceptible--infected--recovered epidemics in dynamic
  contact networks.
\newblock Proceedings of the Royal Society of London B: Biological Sciences
  \textbf{274}(1628), 2925--2934 (2007)

\bibitem{Wang2011}
Wang, Y., Xiao, G.: Effects of interconnections on epidemics in network of
  networks.
\newblock In: Wireless Communications, Networking and Mobile Computing (WiCOM),
  2011 7th International Conference on, pp. 1--4. IEEE (2011)

\bibitem{watts1998}
Watts, D.J., Strogatz, S.H.: {Collective dynamics of `small-world' networks}.
\newblock {Nature} \textbf{{393}}({6684}), {440--442} (1998).
\newblock \doi{{10.1038/30918}}

\bibitem{Webb2014}
Webb, G., Browne, C., Huo, X., Seydi, O., Seydi, M., Magal, P.: A model of the
  2014 {E}bola epidemic in {W}est {A}frica with contact tracing.
\newblock PLoS Currents \textbf{7} (2014)

\bibitem{Wei2012}
Wei, X., Valler, N., Prakash, B.A., Neamtiu, I., Faloutsos, M., Faloutsos, C.:
  Competing memes propagation on networks: a case study of composite networks.
\newblock ACM SIGCOMM Computer Communication Review \textbf{42}(5), 5--12
  (2012)

\bibitem{Weitz2015}
Weitz, J.S., Dushoff, J.: Modeling post-death transmission of {E}bola:
  Challenges for inference and opportunities for control.
\newblock Scientific Reports \textbf{5} (2015)

\bibitem{Wells2015}
Wells, C., Yamin, D., Ndeffo-Mbah, M.L., Wenzel, N., Gaffney, S.G., Townsend,
  J.P., Meyers, L.A., Fallah, M., Nyenswah, T.G., Altice, F.L., Atkins, K.E.,
  Galvani, A.P.: Harnessing case isolation and ring vaccination to control
  {E}bola.
\newblock {PLoS Neglected Tropical Diseases}  (2015)

\bibitem{wesolowski2012}
Wesolowski, A., Eagle, N., Tatem, A.J., Smith, D.L., Noor, A.M., Snow, R.W.,
  Buckee, C.O.: Quantifying the impact of human mobility on malaria.
\newblock Science \textbf{338}(6104), 267--270 (2012)

\bibitem{wilkinson2009stochastic}
Wilkinson, D.J.: Stochastic modelling for quantitative description of
  heterogeneous biological systems.
\newblock Nature Reviews Genetics \textbf{10}(2), 122--133 (2009)

\bibitem{Yagan2012}
Ya{\u{g}}an, O., Gligor, V.: Analysis of complex contagions in random multiplex
  networks.
\newblock Physical Review E \textbf{86}(3), 036,103 (2012)

\bibitem{Yagan2013}
Yagan, O., Qian, D., Zhang, J., Cochran, D.: Conjoining speeds up information
  diffusion in overlaying social-physical networks.
\newblock Selected Areas in Communications, IEEE Journal on \textbf{31}(6),
  1038--1048 (2013)

\bibitem{Zanette2008}
Zanette, D.H., Risau-Gusm{\'a}n, S.: Infection spreading in a population with
  evolving contacts.
\newblock Journal of biological physics \textbf{34}(1-2), 135--148 (2008)

\bibitem{Zhao2014}
Zhao, D., Li, L., Li, S., Huo, Y., Yang, Y.: Identifying influential spreaders
  in interconnected networks.
\newblock Physica Scripta \textbf{89}(1), 015,203 (2014)

\end{thebibliography}

\end{document}